\begin{document}
  \title{Markov Automata with Multiple Objectives}


\author{Tim Quatmann  \and  Sebastian Junges \and Joost-Pieter Katoen}

 \institute{RWTH Aachen University, Aachen, Germany}

\maketitle

  \begin{abstract}
Markov automata combine non-determinism, probabilistic branching, and exponentially distributed delays.
This compositional variant of continuous-time Markov decision processes is used in reliability engineering, performance evaluation and stochastic scheduling.
Their verification so far focused on single objectives such as (timed) reachability, and expected costs.
In practice, often the objectives are mutually dependent and the aim is to reveal trade-offs.
We present algorithms to analyze several objectives simultaneously and approximate Pareto curves.
This includes, e.g., several (timed) reachability objectives, or various expected cost objectives. We also consider combinations thereof, such as on-time-within-budget objectives---which policies guarantee reaching a goal state within a deadline with at least probability $p$ while keeping the allowed average costs below a threshold?
We adopt existing approaches for classical Markov decision processes.
The main challenge is to treat policies exploiting state residence times, even for \emph{un}timed objectives.
Experimental results show the feasibility and scalability of our approach.
\end{abstract}

  \section{Introduction}
\label{sec:introduction}

Markov automata~\cite{DBLP:conf/lics/EisentrautHZ10,DBLP:journals/iandc/DengH13} extend labeled transition systems with probabilistic branching and exponentially distributed delays.
They are a compositional variant of continuous-time Markov decision processes (CTMDPs), in a similar vein as Segala's probabilistic automata extend classical MDPs.
Transitions of a Markov automaton (MA) lead from states to probability distributions over states, and are either labeled with actions (allowing for interaction) or real numbers (rates of exponential distributions).
MAs are used in reliability engineering~\cite{DBLP:journals/tdsc/BoudaliCS10}, hardware design~\cite{DBLP:conf/cav/CosteHLS09}, data-flow computation~\cite{KatoenWu16}, dependability~\cite{DBLP:journals/cj/BozzanoCKNNR11} and performance evaluation~\cite{DBLP:conf/apn/EisentrautHK013}, as MAs are a natural semantic framework for modeling formalisms such as AADL, dynamic fault trees, stochastic Petri nets, stochastic activity networks, SADF etc.
The verification of MAs so far focused on single objectives such as reachability, timed reachability, expected costs, and long-run averages~\cite{HatefiH12,DBLP:journals/corr/GuckHHKT14,DBLP:conf/atva/GuckTHRS14,DBLP:conf/setta/HatefiBWFHB15,Wimmeretal17}.
These analyses cannot treat objectives that are mutually influencing each other, like quickly reaching a target is more costly.
The aim of this paper is to analyze \emph{multiple} objectives on MAs at once and to facilitate \emph{trade--off analysis} by approximating Pareto curves.

Consider the stochastic job scheduling problem of~\cite{DBLP:journals/jacm/BrunoDF81}: perform $n$ jobs with exponential service times on $k$ identical processors under a pre-emptive scheduling policy. 
Once a job finishes, all $k$ processors can be assigned any of the $m$ remaining jobs. 
When $n{-}m$ jobs are finished, this yields $\binom{m}{k}$ non-deterministic choices.
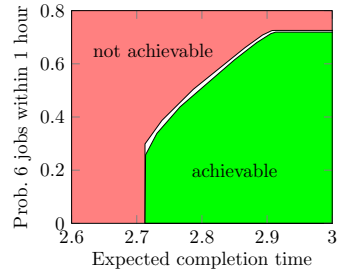
\begin{wrapfigure}[11]{r}{0.34\textwidth}
	\centering
	\scalebox{\picscale}{
		\pgfplotsset{compat = 1.3} 
\begin{tikzpicture}[scale=\plotscale]
\draw[white, use as bounding box] (-0.58,-0.4) rectangle (5.1,3.9);  
\begin{axis}[
    xmin=2.6,
xmax=3.0,
ymin=0.0,
ymax = 0.8,
enlargelimits=false,
width=0.52\textwidth,
xlabel = {Expected completion time},
ylabel shift = -2pt,
xlabel shift = -2pt,
ylabel = {Prob. 6 jobs within 1 hour},
axis on top,
axis background/.style={fill=badArea},
]

\addplot[fill=white, very thin] table [col sep=comma] {pics/intro_plot_overapproximation.csv} -- cycle;
\addplot[fill=goodArea, very thin] table [col sep=comma] {pics/intro_plot_underapproximation.csv} -- cycle;


\node at (axis cs:2.85,0.2) {achievable};
\node at (axis cs:2.725,0.65) {not achievable};

\end{axis}
\end{tikzpicture}
	}
	\caption{Approx.\ Pareto curve for stochastic job scheduling.}
	\label{fig:intro:achievable_plot}
\end{wrapfigure}
The largest-expected-service-time-first-policy is optimal to minimize the expected time to complete all jobs~\cite{DBLP:journals/jacm/BrunoDF81}. 
It is unclear how to schedule when imposing \emph{extra} constraints, e.g., requiring a high probability to finish a batch of $c$ jobs within a tight deadline (to accelerate their post-processing), or having a low average waiting time.
These \emph{multiple objectives} involve non-trivial \emph{trade--offs}. 
Our algorithms analyze such trade--offs.
Fig.~\ref{fig:intro:achievable_plot}, e.g., shows the obtained result for 12 jobs and 3 processors. 
It approximates the set of points $(\pointi{1}, \pointi{2})$ for schedules achieving that (1) the expected time to complete all jobs is at most $\pointi{1}$ and (2) the probability to finish half of the jobs within an hour is at least $\pointi{2}$.

This paper presents techniques to verify MAs with multiple objectives.
We consider multiple (un)timed reachability and expected reward objectives as well as their combinations.
Put shortly, we reduce all these problems to instances of multi-objective verification problems on classical MDPs.
For multi-objective queries involving (combinations of) untimed reachability and expected reward objectives, corresponding algorithms on the \emph{underlying} MDP can be used.
In this case, the MDP is simply obtained by ignoring the timing information, see Fig.~\ref{fig:models:umdp}.
The crux is in relating MA schedulers---that can exploit state sojourn times to optimize their decisions---to MDP schedulers. 
For multiple timed reachability objectives, \emph{digitization}~\cite{HatefiH12,DBLP:journals/corr/GuckHHKT14} is employed to obtain an MDP, see Fig.~\ref{fig:models:dma}.
The key is to mimic sojourn times  by self-loops with appropriate probabilities.  
This provides a sound arbitrary close approximation of the timed behavior and also allows to combine timed reachability objectives with other types of objectives. 
The main contribution is to show that digitization is sound for \emph{all} possible MA schedulers.
This requires a new proof strategy as the existing ones are tailored to optimizing a single objective.
\iftech
All proofs can be found in the appendix.
\else
All proofs can be found in an extended version~\cite{TR}.
\fi
Experiments on instances of four MA benchmarks show encouraging results.
Multiple untimed reachability and expected reward objectives can be efficiently treated for models with millions of states.
As for single objectives \cite{DBLP:journals/corr/GuckHHKT14}, timed reachability is more expensive.
Our implementation is competitive to \prism~for multi-objective MDPs~\cite{KNP11,ForejtKPatva12} and to \imca~\cite{DBLP:journals/corr/GuckHHKT14} for single-objective MAs.

\paragraph{Related work.}
Multi-objective decision making for MDPs with discounting and long-run objectives has been well investigated; for a recent survey, see~\cite{DBLP:journals/jair/RoijersVWD13}.
Etessami \emph{et al.}~\cite{EtessamiKVY08} consider verifying finite MDPs with multiple $\omega$-regular objectives.
Other multiple objectives include expected rewards under worst-case reachability~\cite{ForejtKNPQtacas11,DBLP:conf/stacs/BruyereFRR14}, quantiles and conditional probabilities~\cite{DBLP:conf/csl/BaierDK14}, mean pay-offs and stability~\cite{DBLP:journals/jcss/BrazdilCFK17}, long-run objectives~\cite{DBLP:journals/corr/abs-1104-3489,DBLP:conf/tacas/BassetKTW15}, total average discounted rewards under PCTL~\cite{DBLP:conf/ecai/Teichteil-Konigsbuch12}, and stochastic shortest path objectives~\cite{DBLP:conf/vmcai/RandourRS15}.
This has been extended to MDPs with unknown cost function~\cite{DBLP:conf/tacas/Junges0DTK16}, infinite-state MDPs~\cite{DBLP:conf/atva/DavidJLLLST14} arising from two-player timed games in a stochastic environment, and stochastic two-player games~\cite{DBLP:conf/mfcs/ChenFKSW13}.
To the best of our knowledge, this is the first work on multi-objective MDPs extended with \emph{random timing}.

\begin{figure}[t]
	\centering
	\subfigure[MA $\ma$.]{
		\scalebox{\picscale}{
			\begin{tikzpicture}[scale=1,baseline=(s0.south)]
    \draw[white, use as bounding box] (-0.4,-3.1) rectangle (3.3,0.2); 
    
    \node [state] (s0) at (0,0) {$\state_0$};
    \node [state] (s1) [on grid, right=15mm of s0] {$\state_1$};
    \node [state] (s2) [on grid, right=15mm of s1] {$\state_2$};
    \node [state] (s3) [on grid, below=15mm of s0] {$\state_3$};
    \node [state] (s4) [on grid, right=15mm of s3] {$\state_4$};
    \node [state] (s5) [on grid, below=22mm of s2] {$\state_5$};
    \node [state] (s6) [on grid, below=12mm of s3] {$\state_6$};
    
    \initstateLeft{s0}

  \draw (s0) edge[markovian] node[pos=0.5, left=-2pt] {\scriptsize$1$}  (s3);
  \draw (s1) edge[markovian, loop left] node[] {\scriptsize$1$}  (s1);
  \draw (s2) edge[markovian, loop left] node[] {\scriptsize$1$} (s2);
  \draw (s3) edge[probabilistic] node[left=-2pt] {\scriptsize$\act$} (s6);
  \draw (s3) edge[probabilistic] node[above=-2pt] {\scriptsize$\altact$} (s4);
  \draw (s4) edge[probabilistic] node[left=-2pt] {\scriptsize$\gamma$}  (s1);
  \draw (s4) edge[probabilistic, bend left] node[pos=0.25, above=-2pt] {\scriptsize$\eta$} node[action] (s4eta) {} node [pos=0.65, right] {\scriptsize$0.7$} (s5);
  \draw (s4eta) edge[probabilistic] node[pos=0.65, left=1pt] {\scriptsize$0.3$} (s2);
  \draw (s5) edge[markovian, bend left] node[pos=0.25, above=-1pt] {\scriptsize$5$}  node[action] (s5markovian){} node[pos=0.6, left=1pt] {\scriptsize$0.4$} (s4);
  \draw (s5markovian) edge[markovian, bend right=50] node[below=-1pt, align=left] {\scriptsize$0.6$} (s5);
  \draw (s6) edge[markovian, loop right] node[] {\scriptsize$1$}  (s6);
\end{tikzpicture}
		}
		\label{fig:models:ma}
	}
	\subfigure[Underlying MDP $\umdp$.]{
		\scalebox{\picscale}{
			\begin{tikzpicture}[scale=1,baseline=(s0.south)]
    \draw[white, use as bounding box] (-0.2,-3.1) rectangle (3.3,0.2); 

\node [state] (s0) at (0,0) {$\state_0$};
\node [state] (s1) [on grid, right=15mm of s0] {$\state_1$};
\node [state] (s2) [on grid, right=15mm of s1] {$\state_2$};
\node [state] (s3) [on grid, below=15mm of s0] {$\state_3$};
\node [state] (s4) [on grid, right=15mm of s3] {$\state_4$};
\node [state] (s5) [on grid, below=22mm of s2] {$\state_5$};
\node [state] (s6) [on grid, below=12mm of s3] {$\state_6$};

\initstateLeft{s0}

\draw (s0) edge[probabilistic] node[pos=0.5, left=-2pt] {\scriptsize$\markovianAct$}  (s3);
\draw (s1) edge[probabilistic, loop left] node[] {\scriptsize$\markovianAct$}  (s1);
\draw (s2) edge[probabilistic, loop left] node[] {\scriptsize$\markovianAct$} (s2);
\draw (s3) edge[probabilistic] node[left=-2pt] {\scriptsize$\act$} (s6);
\draw (s3) edge[probabilistic] node[above=-2pt] {\scriptsize$\altact$} (s4);
\draw (s4) edge[probabilistic] node[left=-2pt] {\scriptsize$\gamma$}  (s1);
\draw (s4) edge[probabilistic, bend left] node[pos=0.25, above=-2pt] {\scriptsize$\eta$} node[action] (s4eta) {} node [pos=0.65, right] {\scriptsize$0.7$} (s5);
\draw (s4eta) edge[probabilistic] node[pos=0.65, left=1pt] {\scriptsize$0.3$} (s2);
\draw (s5) edge[probabilistic, bend left] node[pos=0.25, above=-1pt] {\scriptsize$\markovianAct$}  node[action] (s5markovian){} node[pos=0.6, left=1pt] {\scriptsize$0.4$} (s4);
\draw (s5markovian) edge[probabilistic, bend right=50] node[below=-1pt, align=left] {\scriptsize$0.6$} (s5);
\draw (s6) edge[probabilistic, loop right] node[] {\scriptsize$\markovianAct$}  (s6);
\end{tikzpicture}
		}
		\label{fig:models:umdp}
	}
	\subfigure[Digitization $\dma$.]{
		\scalebox{\picscale}{
			\begin{tikzpicture}[scale=1,baseline=(s0.south)]
    \draw[white, use as bounding box] (-0.2,-3.1) rectangle (3,0.2); 

\node [state] (s0) at (0,0) {$\state_0$};
\node [state] (s1) [on grid, right=15mm of s0] {$\state_1$};
\node [state] (s2) [on grid, right=15mm of s1] {$\state_2$};
\node [state] (s3) [on grid, below=15mm of s0] {$\state_3$};
\node [state] (s4) [on grid, right=15mm of s3] {$\state_4$};
\node [state] (s5) [on grid, below=22mm of s2] {$\state_5$};
\node [state] (s6) [on grid, below=12mm of s3] {$\state_6$};

\initstateLeft{s0}

\draw (s0) edge[probabilistic] node[pos=0.25, right=-2pt] {\scriptsize$\markovianAct$} node[action] (s0markovian) {} node[pos=0.7, right=-2pt] {\scriptsize$1{-}e^{-\digConstant}$} (s3);
\draw (s0markovian) edge[probabilistic, bend left] node[left=-2pt] {\scriptsize$e^{-\digConstant}$} (s0);
\draw (s1) edge[probabilistic, loop left] node[] {\scriptsize$\markovianAct$} (s1);
\draw (s2) edge[probabilistic, loop left] node[] {\scriptsize$\markovianAct$} (s2);
\draw (s3) edge[probabilistic] node[left=-2pt] {\scriptsize$\act$} (s6);
\draw (s3) edge[probabilistic] node[above=-2pt] {\scriptsize$\altact$} (s4);
\draw (s4) edge[probabilistic] node[left=-2pt] {\scriptsize$\gamma$} (s1);
\draw (s4) edge[probabilistic, bend left] node[pos=0.25, above=-2pt] {\scriptsize$\eta$} node[action] (s4eta) {} node [pos=0.65, right] {\scriptsize$0.7$} (s5);
\draw (s4eta) edge[probabilistic] node[pos=0.65, left=1pt] {\scriptsize$0.3$} (s2);
\draw (s5) edge[probabilistic, bend left] node[pos=0.25, above=-1pt] {\scriptsize$\markovianAct$}  node[action] (s5markovian){} node[pos=0.5, left=1pt] {\scriptsize$0.4(1{-}e^{-5\digConstant})$} (s4);
\draw (s5markovian) edge[probabilistic, bend right=50] node[below=-1pt, align=left] {\scriptsize$0.6(1{-}e^{-5\digConstant}){+} e^{-5\digConstant}$ \qquad} (s5);
\draw (s6) edge[probabilistic, loop right] node[pos=0.15, above=-2pt] {\scriptsize$\markovianAct$} (s6);
\end{tikzpicture}
		}
		\label{fig:models:dma}
	}
	\vspace{-2mm}
	\caption{MA $\ma$ with underlying MDP $\umdp$ and digitization $\dma$.}
	\label{fig:models}
	\vspace{-3mm}
\end{figure}

  \section{Preliminaries}
\label{sec:Preliminaries}

\paragraph{Notations.} 
The set of real numbers is denoted by $\RR$, and we write $\RRgz = \{x \in \RR \mid x > 0\}$ and  $\RRnn = \RRgz \cup \{0\}$.
 For a finite set $\States$, $\Dist{\States}$ denotes the set of probability distributions over $\States$. 
 $\dist \in \Dist{\States}$ is \emph{Dirac} if $\dist(\state) = 1$ for some $\state \in \States$.

\subsection{Models}	 
Markov automata generalize both Markov decision processes (MDPs) and continuous time Markov chains (CTMCs). They are extended with rewards (or, equivalently, costs) to allow modelling, e.g., energy consumption.
\label{sec:models:mas}

\begin{definition}[Markov automaton]
	\label{def:MarkovAutomaton}
	A \emph{Markov automaton (MA)} is a tuple $\maDef$ where
		 $\States$ is a finite set of \emph{states} with \emph{initial state} $\sinit \in \States$, $\Actions$ is a finite set of \emph{actions} with $\markovianAct \in \Actions$ and $\Actions \cap \RRnn = \emptyset$,
	\begin{itemize}
		\item ${\probTransRel} \subseteq \States \times (\Actions\cupdot \RRgz) \times \Dist{\States}$ is a set of \emph{transitions}
such that for all $\state \in \States$ there is at most one transition  $(\state, \rate, \dist) \in {\probTransRel}$ with $\rate \in \RRgz$, and
		\item $\rewFct[1], \dots, \rewFct[\numRewFunctions]$ with $\numRewFunctions \ge 0$ are \emph{reward functions} $\rewFct[i] \colon \States \cupdot (\States \times \Actions) \to \RRnn$.
	\end{itemize}
\end{definition}
\noindent In the remainder of the paper, let $\maDef$ denote an MA.
A transition $(\state, \actionOrRate, \dist) \in {\probTransRel}$, denoted by $\state \probTrans{\actionOrRate} \dist$, is called \emph{probabilistic} if $\actionOrRate \in \Actions$ and \emph{Markovian} if $\actionOrRate \in \RRgz$.
In the latter case, $\actionOrRate$ is the rate of an exponential distribution, modeling a time-delayed transition.
Probabilistic transitions fire instantaneously.
The successor state is determined  by $\dist$,  i.e., we move to $\state'$ with probability $\dist(\state')$.
Probabilistic (Markovian) states PS (MS) have an outgoing probabilistic (Markovian) transition, respectively: $\PS=\{ \state \in \States \mid \state \probTrans{\act} \dist, \act \in \Actions\}$ and $\MS=\{ \state \in \States \mid \state \probTrans{\rate} \dist, \rate \in \RRgz \}$.
The \emph{exit rate} $\rateAtState{\state}$ of $\state \in \MS$ is uniquely given by $\state \probTrans{\rateAtState{\state}} \dist$.
The \emph{transition probabilities} of $\ma$ are given by the function $\probP \colon \States \times \Actions \times \States \to [0,1]$ satisfying $\probP(\state, \act, \state') = \dist(\state')$  if either  $\state \probTrans{\act} \dist$  or \big($\act = \markovianAct$ and $\state \probTrans{\rateAtState{\state}} \dist$\big) and $\probP(\state,\act, \state')  = 0$ in all other cases.
The value $\probP(\state, \act, \state')$ corresponds to the probability to move from $\state$ with action $\act$ to $\state'$.
The  \emph{enabled actions} at state $\state$ are given by $\Act{\state} = \{ \act \in \Actions \mid \exists \state' \in \States \colon  \probP(\state, \act, \state') > 0 \}$.
\begin{example}
	Fig.~\ref{fig:models:ma} shows an MA $\ma$. We do not depict Dirac probability distributions. Markovian transitions are illustrated by dashed arrows.
\end{example}
We assume \emph{action-deterministic} MAs: $|\{ \dist \in \Dist{\States} \mid \state \probTrans{\act} \dist \}| \le 1$ holds for all $\state \in \States$ and $\act \in \Actions$. 
%
Terminal states $\state \notin \PS \cup \MS$ are excluded by adding a Markovian self-loop. 
As standard for MAs~\cite{DBLP:conf/lics/EisentrautHZ10,DBLP:journals/iandc/DengH13}, we impose the  \emph{maximal progress assumption}, i.e., probabilistic transitions take precedence over Markovian ones.
Thus, we remove transitions $\state \probTrans{\rate} \dist$ for $\state \in \PS$ and $\rate \in \RRgz$ 
which yields $\States = \PS \cupdot \MS$.
MAs with \emph{Zeno behavior}, where infinitely many actions can be taken within finite time with non-zero probability, are unrealistic and considered a modeling error. 

A reward function $\rewFct[i]$ defines \emph{state rewards} and \emph{action rewards}.
When sojourning in a state $\state$ for $\ttime$ time units
, the state reward $\rewState[i]{\state} \cdot \ttime$ is obtained.
Upon taking a transition $\state \probTrans{\actionOrRate} \dist$, we collect action reward $\rewAct[i]{\state}{\actionOrRate}$ (if $\actionOrRate \in \Actions$) or $\rewAct{\state}{\markovianAct}$ (if $\actionOrRate \in \RRgz$).
For presentation purposes, in the remainder of this section, rewards are omitted. Full definitions with rewards can be found in \tech{App:RewardDefs}.

\begin{definition}[Markov decision process \cite{Put94}]
	\label{def:MarkovDecisionProcess}
	A \emph{Markov decision process} (MDP)  is a tuple $\mdpDefNR$ with
	$\States, \sinit, \Actions$ as in Def.~\ref{def:MarkovAutomaton} and $\probP \colon \States \times \Actions \times \States \to [0,1]$ are the \emph{transition probabilities} satisfying 
$		\sum_{\state' \in \States}\probP(\state, \act, \state') \in \{0,1\}$
 for all $\state \in \States$ and $\act \in \Actions$.	  
\end{definition}
MDPs are MAs without Markovian states and thus without timing aspects, i.e., MDPs exhibit probabilistic branching and non-determinism. Zeno behavior is not a concern, as we do not consider timing aspects.  
The  \emph{underlying MDP} of an MA abstracts away from its timing:
\begin{definition}[Underlying MDP]
	\label{def:UnderlyingMdp}
	The MDP $\umdpDefNR$ is the \emph{underlying MDP} of MA $\maDefNR$ with transition probabilities $\probP$.
\end{definition}
The \emph{digitization} $\dma$ of $\ma$ w.r.t.\ some digitization constant $\digConstant \in \RRgz$ is an MDP which digitizes the time~\cite{HatefiH12,DBLP:journals/corr/GuckHHKT14}.
The main difference between $\umdp$ and $\dma$ is that the latter also introduces \emph{self-loops} which describe the probability to stay in a Markovian state for $\digConstant$ time units. 
More precisely, the outgoing transitions of states $\state \in \MS$ in $\dma$ represent that either (1) a Markovian transition in $\ma$ was taken within $\digConstant$ time units,
 or (2) no transition is taken within $\digConstant$ time units 
 -- which is captured by taking the self-loop in $\dma$.
Counting the taken self-loops at $\state \in \MS$ allows to approximate the sojourn time in $\state$.
\begin{definition}[Digitization of an MA]
	\label{def:dma}
	For MA $\maDefNR$ with transition probabilities $\probP$ and \emph{digitization constant} $\digConstant \in \RRgz$, the \emph{digitization of $\ma$ \wrt $\digConstant$} is the MDP $\dmaDefNR$ where
	\begin{displaymath}
	\probPdig(\state, \act, \state') =
	\begin{cases}
	\probP(\state, \markovianAct, \state') \cdot (1-e^{-\rateAtState{\state} \digConstant})                                           & \text{if } \state \in \MS, \act = \markovianAct, \state \neq \state'\\
	\probP(\state, \markovianAct, \state') \cdot (1-e^{-\rateAtState{\state} \digConstant}) + e^{-\rateAtState{\state} \digConstant}  & \text{if } \state \in \MS, \act = \markovianAct, \state = \state'\\
		\probP(\state, \act, \state')                                                                                                     & \text{otherwise.}
	\end{cases}
	\end{displaymath}
\end{definition}

\begin{example}
	Fig.~\ref{fig:models} shows an MA $\ma$ with its underlying MDP $\umdp$ and a  digitization $\dma$ for unspecified $\digConstant \in \RRgz$.
\end{example}

\paragraph{Paths and schedulers.}
Paths represent runs of $\ma$ starting in the initial state. 
Let $\timeOfStamp = 0$ and $\actionOfStamp = \stamp$, if $\stamp \in \Actions$, and $\timeOfStamp = \stamp$ and $\actionOfStamp = \markovianAct$, if $\stamp \in \RRnn$.
	\begin{definition}[Infinite path]
		\label{def:Paths}
		An \emph{infinite path} of MA $\ma$ with transition probabilities $\probP$ is an infinite sequence $\ppath = \pathIseqTimed$ of states $\state_0, \state_1, \dots \in \States$ and \emph{stamps} $\stamp[0], \stamp[1], \dots \in \Actions \cupdot \RRnn$ such that (1) $\sum_{i=0}^{\infty} \timeOfStamp[i] = \infty$, and for any $i \ge 0$ it holds that
		(2) $\probP(\state_i, \actionOfStamp[i], \state_{i+1}) > 0$,
		(3) $\state_i \in \PS$  implies  $\stamp[i] \in \Actions$, and
		(4) $\state_i \in \MS$ implies $\stamp[i] \in \RRnn$.
	\end{definition}
	An infix $\state_i \pathTransUniv{\stamp[i]} \state_{i+1}$ of a path $\ppath$ represents that we stay at $\state_i$ for $\timeOfStamp[i]$ time units and then perform action $\actionOfStamp[i]$ and move to state $\state_{i+1}$. 
	Condition~(1) excludes Zeno paths, 
	condition~(2) ensures positive transition probabilities, and
	conditions~(3) and~(4) assert that stamps $\stamp[i]$ match the transition type at $\state_i$.
	
	A \emph{finite path} is a finite prefix $\ppath' = \pathFseqTimed$ of an infinite path.
	The \emph{length} of $\ppath'$ is $\length{\ppath'} = n$, its \emph{last state} is $\last{\ppath'} = \state_n$, and the \emph{time duration} is $\timeOfPath{\ppath'} = \sum_{0 \le i < \length{\ppath'}} \timeOfStamp[i]$.
	We denote the sets of finite and infinite paths of $\ma$ by $\FPaths[\ma]$ and $\IPaths[\ma]$, respectively.
	The superscript $\ma$ is omitted if the model is clear from the context.
	For a finite or infinite path $\ppath = \pathIseqTimed$ the \emph{prefix} of $\ppath$ of length $n$ is denoted by $\pref{\ppath}{n}$.
	The $i$th state visited by $\ppath$ is given by $\ithStateOfPath{\ppath}{i} = \state_i$.
	The \emph{time-abstraction} $\taOf{\ppath}$ of $\ppath$ removes all sojourn times and is a path of the underlying MDP $\umdp$:
$
	\taOf{\ppath} = \state_0 \pathTransUniv{\actionOfStamp[0]} \state_1 \pathTransUniv{\actionOfStamp[1]} \dots
$\ .
	Paths of $\umdp$ are also referred to as the \emph{time-abstract paths of $\ma$}.

	
	\begin{definition}[Generic scheduler]
		\label{def:GenericScheduler}
		A \emph{generic scheduler} for $\ma$ is a measurable function $
		\sched \colon \FPaths \times \Actions \to [0,1]
		$
		such that			 $\schedEval{\sched}{\ppath}{\cdot} \in \Dist{\Act{\last{\ppath}}}$  for each $\ppath \in \FPaths$.
	\end{definition}
A scheduler $\sched$ for $\ma$ resolves the non-determinism of $\ma$:
$\schedEval{\sched}{\ppath}{\act}$ is the probability to take transition $\last{\ppath} \probTrans{\act} \dist$  after observing the run $\ppath$.
	The set of such schedulers is denoted by $\GMSched[\ma]$ ($\GMSched[]$ if $\ma$ is clear from the context).
	$\sched \in \GMSched$ is \emph{deterministic} if the distribution $\schedEval{\sched}{\ppath}{\cdot}$ is Dirac for any $\ppath$.
	\emph{Time-abstract schedulers} behave independently of the time-stamps of the given path, i.e.,
	 $\schedEval{\sched}{\ppath}{\act} = \schedEval{\sched}{\ppath'}{\act}$ for all actions $\act$ and paths $\ppath, \ppath'$ with $\taOf{\ppath} = \taOf{\ppath'}$.
	We write $\TASched[\ma]$ to denote the set of time-abstract schedulers of $\ma$.
	GM is the most general  scheduler class for MAs.
	For MDPs, the most general scheduler class is $\TASched$.

\subsection{Objectives}
	An objective $\obj[i]$ is a representation of a \emph{quantitative} property like the probability to reach an error state, or the expected energy consumption.
	To express  \emph{Boolean} properties (e.g., the  probability to reach an error state is below $\pointi{i}$),
	$\obj[i]$ is combined with a \emph{threshold} $\rel[i] \pointi{i}$ where $\rel[i]\, \in \{<, \le, >, \ge \}$  is a \emph{threshold relation} and $\pointi{i} \in \RR$ is a  \emph{threshold value}.
	Let $\ma, \sched \models \obj[i] \rel[i] \pointi{i}$ denote that the MA $\ma$ under scheduler $\sched \in \GMSched$ satisfies the property $\obj[i] \rel[i] \pointi{i}$.
	
	\paragraph{Reachability objectives.} 
$\interval \subseteq \RR$ is a \emph{time interval} if it is of the form 
$\interval = [\lowerTimeBound, \upperTimeBound]$ or $\interval = [\lowerTimeBound,\infty)$, where $0 \le \lowerTimeBound < \upperTimeBound$.
The set of paths reaching a set of goal states $\goalStates \subseteq \States$ in time $\interval$ is defined as
 \begin{align*}
\eventually[\interval] \goalStates = \{& \ppath = \pathIseqTimed \in \IPaths[] \mid \exists n \ge 0 \colon \ithStateOfPath{\ppath}{n} \in \goalStates \text{ and}\\
& \interval \cap [\ttime, \ttime+\timeOfStamp[n]] \neq \emptyset \text{ for }  \ttime = \timeOfPath{\pref{\ppath}{n}}  \}.
\end{align*}
We write $\eventually \goalStates$ instead of $\eventually[{[0,\infty)}] \goalStates$.
A probability measure $\ProbModelScheduler{\ma}{\sched}$ on sets of infinite paths is defined, which generalizes both the standard probability measure on MDPs and on CTMCs. A formal definition is given in \tech{App:ProbMeasure}.

\begin{definition}[Reachability objective]
	\label{def:reachObjectives}
	A \emph{reachability objective} has the form $\intervalBoundedReachObj[]$ for time interval $\interval$ and goal states $\goalStates$. 
	The objective is \emph{timed} if $\interval \neq [0,\infty)$ and \emph{untimed} otherwise.
	For MA $\ma$ and scheduler $\sched \in \GMSched$, let
	$\ma, \sched \models \intervalBoundedReachObj[] \rel[i] \pointi{i}$  iff  $\intervalBoundedReachProbMa[] \rel[i]\pointi{i}$.
%
\end{definition}

\paragraph{Expected reward objectives.}
Expected rewards $\expReachRew[]{\ma}{\sched}{\rewFct[\rewFctIndex]}$ define the expected amount of reward collected (\wrt $\rewFct[\rewFctIndex]$) until a goal state in $\goalStates \subseteq \States$ is reached.
This is a straightforward generalization of the notion on CTMCs and MDPs. A formal definition is found in \tech{App:ExpectedReward}.
\begin{definition}[Expected reward objective]
	An \emph{expected reward objective} has the form  $\expReachRewObj[]$ where $\rewFctIndex$ is the index of  reward function $\rewFct[j]$ and $\goalStates \subseteq \States$.
		For MA $\ma$ and scheduler $\sched \in \GMSched$, let
		$	\ma, \sched \models \expReachRewObj[] \rel[i] \pointi{i}$  iff  $\expReachRew[]{\ma}{\sched}{\rewFct[\rewFctIndex]} \rel[i] \pointi{i}$.
\end{definition}
Expected \emph{time} objectives $\expTimeObj[]$ are   expected reward objectives that consider  the reward function $\rewTimeFct[]$  with $\rewTimeFct[](\state) = 1$ if $\state \in \MS$ and all other rewards are zero.

  \section{Multi-objective Model Checking}
\label{sec:multiobjective}
	Standard model checking considers objectives individually.
	This approach is not feasible when we are interested in multiple objectives that should be fulfilled by the same scheduler, e.g., a scheduler that maximizes the expected profit might violate certain safety constraints.
	\emph{Multi-objective} model checking aims to analyze multiple objectives at once and reveals possible trade-offs.
	
	
\begin{definition}[Satisfaction of multiple objectives]
	\label{def:satOfMultiObj}
	Let $\ma$ be an MA and $\sched \in \GMSched$.
	For objectives $\obj[] = \objTuple$ with threshold relations ${\rel} = \relTuple \in \{<, \le, >, \ge\}^\numObjectives$ and threshold values $\point = \pointTuple \in \RR^\numObjectives$ let 
	\begin{displaymath}
	\ma, \sched \models \obj \rel \point \iff \ma, \sched \models \obj[i] \rel[i] \pointi{i} \text{ for all } 1 \le i \le \numObjectives.
	\end{displaymath}
	Furthermore, let $\achievabilityQ{\ma}{\obj \rel \point} \iff \exists \sched \in \GMSched$ such that $\ma, \sched \models \obj \rel \point$.
\end{definition}
If $\ma, \sched \models \obj \rel \point$, the point $\point \in \RR^\numObjectives$ is \emph{achievable} in $\ma$ with scheduler $\sched$.
The \emph{set of achievable points} of $\ma$ w.r.t.\ $\obj$ and $\point$ is 
$
\{\point \in \RR^\numObjectives \mid \achievabilityQ{\ma}{\obj \rel \point}  \}.
$
This definition is compatible with the notions on MDPs as given in \cite{ForejtKPatva12,EtessamiKVY08}.

\begin{example}
	\label{ex:acheivablePoints}
	Fig.~\ref{fig:maAchievablePoints:untimed} and Fig.~\ref{fig:maAchievablePoints:timed} depict the set of achievable points of the MA $\ma$ from Fig.~\ref{fig:maAchievablePoints:ma} w.r.t.\ relations ${\rel}=(\ge,\ge)$ and objectives $(\reachObjUniv{\{s_2\}},\reachObjUniv{\{s_4\}})$ and $(\reachObjUniv{\{s_2\}},\boundedReachObjUniv{\{s_4\}}{{[0,2]}})$, respectively. 
	Using the set of achievable points, we can answer Pareto, numerical, and achievability queries as considered in \cite{ForejtKPatva12}, e.g., the Pareto front lies on the border of the set.
\end{example}
\paragraph{Schedulers.}
For single-objective model checking on MAs, it suffices to consider deterministic schedulers
\cite{DBLP:conf/fossacs/NeuhausserSK09}. 
For untimed reachability and expected rewards even \emph{time-abstract} deterministic schedulers suffice
\cite{DBLP:conf/fossacs/NeuhausserSK09}. 
Multi-objective model checking on MDPs requires history-dependent, randomized schedulers \cite{EtessamiKVY08}. 
On MAs, schedulers may also employ \emph{timing} information to make optimal choices, even if only \emph{untimed} objectives are considered.
\begin{example}
	\label{ex:timingIsImportant}
	Consider the MA $\ma$ in Fig.~\ref{fig:maAchievablePoints:ma} with untimed objectives $\reachObjUniv{\{s_2\}} \geq 0.5$ and $ \reachObjUniv{\{s_4\}} \geq 0.5$.
	 A simple graph argument yields that both properties are only satisfied if action $\act$ is taken with probability exactly a half. 
	 Thus, on the underlying MDP, no deterministic scheduler satisfies both objectives. 
	 On the MA however, paths can be distinguished by their sojourn time in $s_0$. 
	 As the probability mass to stay in $s_0$ for at most $\ln(2)$ is exactly $0.5$, a timed scheduler $\sched$ with $\schedEval{\sched}{\state_0 \pathTransUniv{\ttime} \state_1}{\act}=1$ if $\ttime \leq \ln(2)$ and $0$ otherwise does satisfy both objectives. 
\end{example}
\begin{figure}[t]
	\centering
	\subfigure[MA $\ma$.]{
		\scalebox{\picscale}{
			  \begin{tikzpicture}
    \draw[white, use as bounding box] (-0.3,-2.3) rectangle (3.15,1); 
    \node [state] (s0) at (0,0) {$\state_0$};
    \node [state] (s1) [on grid, right=15mm of s0] {$\state_1$};
    \node [state, accepting] (s2) [on grid, right=15mm of s1] {$\state_2$};
    \node [state] (s3) [on grid, below=15mm of s1] {$\state_3$};
    \node [state, rectangle] (s4) [on grid, right=15mm of s3] {$\state_4$};
    
    \initstateAbove{s0}
    
    \draw (s0) edge[markovian] node[above] {\scriptsize$1$} (s1);
\draw (s1) edge[probabilistic] node[above] {\scriptsize$\act$} (s2);
\draw (s1) edge[probabilistic] node[left] {\scriptsize$\altact$} (s3);
    \draw (s3) edge[markovian] node[above] {\scriptsize$1$} (s4);
    \draw (s2) edge[markovian, loop above] node {\scriptsize$1$} (s2);
    \draw (s4) edge[markovian, loop above] node {\scriptsize$1$} (s4);
\end{tikzpicture}
		}
		\label{fig:maAchievablePoints:ma}
	}
	\subfigure[Untimed objectives.]{
		\scalebox{0.85}{
			\pgfplotsset{compat = 1.3} 
\begin{tikzpicture}[scale=\plotscale]
\draw[white, use as bounding box] (-0.6,-0.8) rectangle (3.9,3.75); 
\begin{axis}[
axis equal image,
    xmin=0,
xmax=1,
ymin=0.0,
ymax = 1,
enlargelimits=false,
width=0.52\textwidth,
xlabel = {$\reachObjUniv{\{s_2\}}$},
ylabel shift = -2pt,
xlabel shift = -2pt,
ylabel = {$\reachObjUniv{\{s_4\}}$},
axis on top,
]

\addplot[fill=goodArea, very thin] coordinates {(1,0) (0,1) (0,0)} -- cycle;

\node[align=center] at (axis cs:0.35,0.25) {achievable\\points};

\end{axis}
\end{tikzpicture}
		}
		\label{fig:maAchievablePoints:untimed}
	}
	\subfigure[Timed objectives.]{
		\scalebox{0.85}{
			\pgfplotsset{compat = 1.3} 
\begin{tikzpicture}[scale=\plotscale]
\draw[white, use as bounding box] (-0.58,-0.8) rectangle (4,3.75); 
\begin{axis}[
axis equal image,
xmin=0,
xmax=1,
ymin=0.0,
ymax = 1,
enlargelimits=false,
width=0.52\textwidth,
xlabel = {$\reachObjUniv{\{s_2\}}$},
ylabel shift = -2pt,
xlabel shift = -2pt,
ylabel = {$\boundedReachObjUniv{\{s_4\}}{{[0,2]}}$},
axis on top,
]

\addplot[fill=goodArea, very thin] table [col sep=comma] {pics/achievablePoints_timed_precise.csv} -- cycle;

\node[align=center] at (axis cs:0.35,0.25) {achievable\\points};

\end{axis}
\end{tikzpicture}
		}
		\label{fig:maAchievablePoints:timed}
	}
	\label{fig:ma_umdp}
	\vspace{-2mm}
\caption{Markov automaton and achievable points.}
\vspace{-3mm}
\end{figure}
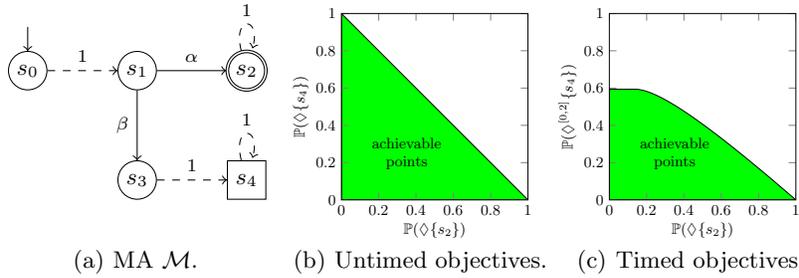
\begin{restatable}{theorem}{theoremDeterministicTADoesNotSuffice}
	\label{thm:deterministicTADoesNotSuffice}
	For some MA $\ma$ with $\achievabilityQ{\ma}{\obj \rel \point}$, no deterministic time-abstract scheduler $\sched$ satisfies  $\ma, \sched \models \obj \rel \point$. 
\end{restatable}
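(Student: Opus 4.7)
The plan is to formalize the construction already sketched in Example~\ref{ex:timingIsImportant}: exhibit the MA $\ma$ of Fig.~\ref{fig:maAchievablePoints:ma} together with the two untimed reachability objectives $\obj[1] = \reachObjUniv{\{\state_2\}}$ and $\obj[2] = \reachObjUniv{\{\state_4\}}$ with thresholds $\rel[1] \pointi{1} = \rel[2]\pointi{2} = {\ge}\, 0.5$. I then need to establish two things: (a) achievability, by constructing a generic (in fact, timed deterministic) scheduler that satisfies both objectives; and (b) non-achievability by any deterministic time-abstract scheduler.

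For part (a), I would define the scheduler $\sched$ that, upon seeing the unique path prefix reaching $\state_1$, picks $\act$ if the sojourn time in $\state_0$ is at most $\ln 2$ and $\altact$ otherwise; at all other states only one action is enabled, so $\sched$ is fixed. Since $\state_0$ is Markovian with exit rate $\rateAtState{\state_0} = 1$, the probability that the $\state_0$-sojourn is at most $\ln 2$ equals $1 - e^{-\ln 2} = \tfrac{1}{2}$, so $\ProbModelScheduler{\ma}{\sched}(\eventually \{\state_2\}) = \ProbModelScheduler{\ma}{\sched}(\eventually \{\state_4\}) = \tfrac{1}{2}$, witnessing $\achievabilityQ{\ma}{\obj \rel \point}$.

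For part (b), I observe that any time-abstract scheduler $\sched'$ induces identical behaviour on $\ma$ and on the underlying MDP $\umdp$ with respect to untimed reachability, because $\schedEval{\sched'}{\ppath}{\act}$ depends only on $\taOf{\ppath}$ and untimed reachability probabilities are measurable w.r.t.\ the time-abstract cylinders. Every finite path of $\umdp$ that ends in $\state_1$ has the same time-abstraction $\state_0 \pathTransUniv{\markovianAct} \state_1$, so a \emph{deterministic} scheduler assigns a single action $\act$ or $\altact$ at $\state_1$. Choosing $\act$ yields $\ProbModelScheduler{\ma}{\sched'}(\eventually \{\state_2\}) = 1$ but $\ProbModelScheduler{\ma}{\sched'}(\eventually \{\state_4\}) = 0$, and choosing $\altact$ gives the symmetric situation; in either case one of the two thresholds $\ge 0.5$ is violated. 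Hence no deterministic time-abstract scheduler satisfies $\obj \rel \point$.

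There is no real obstacle; the only subtle point is the justification that the untimed reachability probability under a time-abstract scheduler on $\ma$ equals the reachability probability of the induced scheduler on $\umdp$. This follows from the standard construction of $\ProbModelScheduler{\ma}{\sched'}$ (deferred to the appendix referenced in \tech{App:ProbMeasure}) together with the fact that for $\interval = [0,\infty)$ the event $\eventually \goalStates$ is a union of time-abstract cylinder sets, so sojourn-time information is irrelevant. I would state this observation as a small lemma (or a forward reference to results used later in the paper for the untimed case) and then conclude the theorem from the case analysis above.
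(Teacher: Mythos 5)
Your proposal is correct and follows essentially the same route as the paper: the same MA from Fig.~\ref{fig:maAchievablePoints:ma} with objectives $(\reachObjUniv{\{s_2\}},\reachObjUniv{\{s_4\}})$ and thresholds $({\ge}\,0.5,{\ge}\,0.5)$, achievability witnessed by the $\ln 2$-cutoff scheduler of Example~\ref{ex:timingIsImportant}, and the observation that the only deterministic time-abstract schedulers are ``always $\act$'' and ``always $\altact$'', each of which drives one of the two reachability probabilities to $0$. Your extra care about why untimed reachability under a time-abstract scheduler reduces to the underlying MDP is a reasonable elaboration but not a different argument.
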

\paragraph{The geometric shape of the achievable points.}
Like for MDPs~\cite{EtessamiKVY08}, the set of achievable points of any combination of aforementioned objectives is convex.
\begin{restatable}{proposition}{propositionAchievablePointsIsConvex}
	\label{prop:achievablePointsIsConvex}
The set $\{\point \in \RR^\numObjectives \mid \achievabilityQ{\ma}{\obj \rel \point}  \}$ is convex.
\end{restatable}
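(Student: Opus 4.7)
The plan is to establish convexity by the classical ``randomize at the start, then track posteriors'' construction, lifted to generic MA schedulers. Given achievable points $\point^{(1)}, \point^{(2)} \in \RR^\numObjectives$, witnessed by schedulers $\sched^{(1)}, \sched^{(2)} \in \GMSched$ with $\ma, \sched^{(k)} \models \obj \rel \point^{(k)}$ for $k \in \{1,2\}$, and any mixing weight $\alpha \in [0,1]$, I aim to construct a single scheduler $\sched_\alpha \in \GMSched$ whose induced path measure decomposes as
\begin{displaymath}
\ProbModelScheduler{\ma}{\sched_\alpha} \;=\; \alpha \, \ProbModelScheduler{\ma}{\sched^{(1)}} + (1-\alpha) \, \ProbModelScheduler{\ma}{\sched^{(2)}}.
\end{displaymath}

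First I would define $\sched_\alpha$ by Bayesian updating: intuitively one flips a biased coin at time zero that selects $\sched^{(1)}$ with probability $\alpha$ and $\sched^{(2)}$ with probability $1-\alpha$; since the scheduler signature $\FPaths \times \Actions \to [0,1]$ has no room to remember this bit explicitly, I fold it into the posterior. Concretely, for every finite path $\ppath \in \FPaths$ and action $\act \in \Act{\last{\ppath}}$, set
\begin{displaymath}
\schedEval{\sched_\alpha}{\ppath}{\act} \;=\; \frac{\alpha \, w_1(\ppath) \, \schedEval{\sched^{(1)}}{\ppath}{\act} + (1-\alpha) \, w_2(\ppath) \, \schedEval{\sched^{(2)}}{\ppath}{\act}}{\alpha \, w_1(\ppath) + (1-\alpha) \, w_2(\ppath)},
\end{displaymath}
where $w_k(\ppath)$ is the likelihood (a Radon--Nikodym density w.r.t.\ a common reference measure, combining the counting measure on discrete choices and Lebesgue measure on sojourn-time stamps) that scheduler $\sched^{(k)}$ assigns to reaching the prefix $\ppath$. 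The denominator is positive on the support of the mixture, and values on null sets are irrelevant. Using the recursive construction of $\ProbModelScheduler{\ma}{\cdot}$ from action probabilities and exponential sojourn densities (see \tech{App:ProbMeasure}), an induction on the depth of finite cylinder sets shows that the cylinder probabilities under $\sched_\alpha$ equal the same convex combination of the cylinder probabilities under $\sched^{(1)}$ and $\sched^{(2)}$, so by Carath\'eodory extension the full path measure decomposes as claimed.

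Second, every objective under consideration---the reachability probability $\intervalBoundedReachProbMa[]$ and the expected reward $\expReachRew[]{\ma}{\sched}{\rewFct[\rewFctIndex]}$---is a linear functional of the path measure (an integral of a fixed measurable nonnegative functional against $\ProbModelScheduler{\ma}{\sched}$). Hence, for each component $i$,
\begin{displaymath}
\mathrm{val}_{\sched_\alpha}(\obj[i]) \;=\; \alpha \cdot \mathrm{val}_{\sched^{(1)}}(\obj[i]) + (1-\alpha)\cdot \mathrm{val}_{\sched^{(2)}}(\obj[i]).
\end{displaymath}
Third, every threshold relation $\rel[i] \in \{<,\le,>,\ge\}$ is preserved under convex combinations on both sides: if $a_k \rel[i] b_k$ for $k=1,2$, then $\alpha a_1 + (1-\alpha) a_2 \,\rel[i]\, \alpha b_1 + (1-\alpha) b_2$. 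Applying this with $a_k = \mathrm{val}_{\sched^{(k)}}(\obj[i])$ and $b_k = \pointi{i}^{(k)}$ gives $\ma, \sched_\alpha \models \obj \rel \bigl(\alpha \point^{(1)} + (1-\alpha) \point^{(2)}\bigr)$, so the convex combination is achievable.

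The main obstacle is the first step. MA schedulers are richer than MDP schedulers because they may react to continuous sojourn times, so the likelihoods $w_k(\ppath)$ must be defined as genuine Radon--Nikodym densities against a dominating reference measure on finite paths, and one has to check that the resulting $\sched_\alpha$ is still measurable and that its induced cylinder probabilities really are the convex combinations---this is where the action-deterministic assumption and the factorization of $\ProbModelScheduler{\ma}{\sched}$ from \tech{App:ProbMeasure} are used. Once this measure-theoretic step is in place, the linearity of objectives and the preservation of threshold relations are routine.
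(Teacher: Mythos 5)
Your proof is correct and takes essentially the same route as the paper: the paper's argument is precisely the ``initial one-off random choice'' scheduler that mimics $\sched^{(1)}$ with probability $\alpha$ and $\sched^{(2)}$ otherwise, combined with the observation that objective values and thresholds mix linearly. Your posterior-reweighting construction merely makes explicit how that coin flip is realized inside the class $\GMSched$ (whose signature has no room to store the flip), a measure-theoretic detail the paper leaves implicit.
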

\noindent For MDPs, the set of achievable points is a convex polytope where the vertices can be realized by deterministic schedulers that use memory  bounded by the number of objectives.
As there are finitely many such schedulers, the polytope  is finite~\cite{EtessamiKVY08}, \ie, it can be represented by a finite number of vertices.
This result does not carry over to MAs. 
For example, the achievable points of the MA from Fig.~\ref{fig:maAchievablePoints:ma} together with the objectives $(\reachObjUniv{\{s_2\}},\boundedReachObjUniv{\{s_4\}}{{[0,2]}})$ form the infinite polytope shown in Fig.~\ref{fig:maAchievablePoints:timed}. The insight here is that for any sojourn time $t \leq 2$ in $s_0$, the  timing information is relevant for optimal schedulers: 
The shorter the sojourn time in $s_0$, the higher the probability to reach $s_4$ within the time bound.
%
%
\begin{restatable}{theorem}{theoremInfinitePolytope}
	\label{Thm:InfinitePolytope}
	For some MA $\ma$ and objectives $\obj[]$, the polytope $\{\point \in \RR^\numObjectives \mid \achievabilityQ{\ma}{\obj \rel \point}  \}$ is not finite.
\end{restatable}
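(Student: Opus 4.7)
The plan is to reuse the MA $\ma$ of Fig.~\ref{fig:maAchievablePoints:ma} together with the objectives $\obj=(\reachObjUniv{\{s_2\}},\boundedReachObjUniv{\{s_4\}}{{[0,2]}})$ and thresholds ${\rel}=(\ge,\ge)$, and to exhibit an uncountable family of extreme points of the achievability set. As hinted in the paragraph preceding the theorem, the decisive observation is that the time $T_0$ already spent in $s_0$ equals the remaining budget for reaching $s_4$ within two time units, so truly optimal schedulers must consult $T_0$.

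For each $t\in[0,2]$ I would define a generic scheduler $\sched_t$ which at state $s_1$ picks $\altact$ if the sojourn time recorded at the preceding visit to $s_0$ is at most $t$ and otherwise picks $\act$; this is measurable in the sense of Def.~\ref{def:GenericScheduler}. Writing $T_0\sim\mathrm{Exp}(1)$ for the sojourn in $s_0$ and $T_3\sim\mathrm{Exp}(1)$ for the independent sojourn in $s_3$, the product-measure definition of $\ProbModelScheduler{\ma}{\sched_t}$ from \tech{App:ProbMeasure} yields
\begin{align*}
p_1(t)&\;=\;\ProbModelScheduler{\ma}{\sched_t}(\eventually\{s_2\})\;=\;\Pr[T_0>t]\;=\;e^{-t},\\
p_2(t)&\;=\;\ProbModelScheduler{\ma}{\sched_t}(\eventually[{[0,2]}]\{s_4\})\;=\;\int_{0}^{t}e^{-x}\bigl(1-e^{-(2-x)}\bigr)\,dx\;=\;1-e^{-t}-t\,e^{-2}.
\end{align*}
Eliminating the parameter, the arc $\{(p_1(t),p_2(t)):t\in[0,2]\}$ is the graph of $f(p_1)=1-p_1+e^{-2}\ln p_1$ on $p_1\in[e^{-2},1]$, and $f''(p_1)=-e^{-2}/p_1^{2}<0$, so $f$ is strictly concave.

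It remains to argue that every point of the arc is an extreme point of the (convex, by Prop.~\ref{prop:achievablePointsIsConvex}) achievability set. The key lemma is a rearrangement/coupling statement: among all schedulers achieving $\reachObjUniv{\{s_2\}}=e^{-t}$, the value of $\boundedReachObjUniv{\{s_4\}}{{[0,2]}}$ is maximised by concentrating the $\altact$-mass on the histories with the \emph{smallest} sojourn in $s_0$---exactly what $\sched_t$ does---because the conditional success probability $\Pr[T_3\le 2-T_0\mid T_0=x]=1-e^{-(2-x)}$ is strictly decreasing in $x$. With this lemma in hand the arc lies on the upper-right boundary of the achievability set, so strict concavity of $f$ forces every one of its points to be a vertex, while a finite convex polytope admits only finitely many vertices, proving the theorem. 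The main obstacle I foresee is formalising the rearrangement lemma for arbitrary generic (history-dependent, randomised) schedulers on $\ma$: it requires disintegrating $\ProbModelScheduler{\ma}{\sched}$ along $T_0$ and invoking a Hardy--Littlewood-style monotone rearrangement inequality, but this is standard once the underlying measurability is set up.
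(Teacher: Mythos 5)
Your construction is sound and reaches the theorem by a genuinely different route than the paper. The paper's proof (App.~\ref{app:proofsMOMC:infinite}) uses the same MA and objectives but never computes the Pareto curve: it argues by contradiction that if the achievable set were a finite polytope, some face would be a segment between two distinct vertices realized by schedulers $\sched_1 \neq \sched_2$; the midpoint is realized by the $\tfrac12/\tfrac12$ mixture $\sched$, and replacing $\sched$ by the deterministic threshold scheduler $\sched'$ that picks $\act$ exactly when the sojourn in $\state_0$ exceeds $-\log\ProbModelScheduler{\ma}{\sched}(\eventually \{\state_2\})$ preserves the first coordinate while strictly increasing the second (since $\state_3$ is then reached earlier), so the midpoint cannot lie on a flat face---contradiction. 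You instead exhibit the whole one-parameter family $\sched_t$, compute the arc $\bigl(e^{-t},\,1-e^{-t}-t e^{-2}\bigr)$ explicitly (your integrals are correct and match Fig.~\ref{fig:maAchievablePoints:timed}), verify strict concavity, and conclude because a strictly concave arc on the boundary of a convex set meets any line in at most two points and hence cannot be covered by the finitely many segments bounding a finite polytope. Your route buys an explicit description of the Pareto frontier; the paper's buys a shorter argument that needs only one local improvement step rather than global optimality of the entire family.

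The one point you defer is genuinely load-bearing: without the claim that $\sched_t$ maximizes $\boundedReachObjUniv{\{s_4\}}{{[0,2]}}$ among all generic schedulers attaining $\reachObjUniv{\{s_2\}}$-value $e^{-t}$, the arc could a priori sit in the interior of a finite polytope and no contradiction would follow. The claim is true, but you overstate its difficulty: in this MA the only decision is the $\act$-probability $q(x)\in[0,1]$ as a measurable function of the sojourn $x$ in $\state_0$, the two objective values are the linear functionals $\int_0^\infty e^{-x}q(x)\,dx$ and $\int_0^\infty e^{-x}\bigl(1-q(x)\bigr)\max\bigl(0,\,1-e^{-(2-x)}\bigr)\,dx$, and since the weight $1-e^{-(2-x)}$ is strictly decreasing in $x$ the optimizer concentrates $1-q$ on an initial segment---a one-dimensional bathtub argument, no Hardy--Littlewood machinery needed. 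Note that the paper's local improvement step is essentially the same monotonicity observation in a special case, so neither proof avoids it; yours merely needs it in full generality, and you should spell it out rather than cite it as folklore.
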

\noindent As infinite convex polytopes cannot be represented by a finite number of vertices,  any method extending the approach of \cite{ForejtKPatva12} -- which computes these vertices -- can only approximate the set of achievable points. 

\paragraph{Problem statement.}
For an MA and objectives with threshold relations, construct arbitrarily tight over- and under-approximations of the achievable points.

  \section{Analysis of Markov Automata with Multiple Objectives}
\label{sec:ma}
The state-of-the-art in single-objective model checking of MA is to reduce the MA to an MDP, cf.~\cite{HatefiH12,DBLP:journals/corr/GuckHHKT14,DBLP:conf/atva/GuckTHRS14}, for which efficient algorithms exist.
We aim to lift this approach to multi-objective model checking.
Assume MA $\ma$ and objectives $\obj$ with threshold relations $\rel$.
We discuss how the set of achievable points of $\ma$ relates to the set of achievable points of an MDP.
The key challenge is to deal with timing information---even for \emph{un}timed objectives---and to consider schedulers beyond those optimizing single objectives. 
We obtain:
\begin{itemize}
\item 
For untimed reachability and expected reward objectives, the achievable points of $\ma$ \emph{equal} those of its \emph{underlying} MDP, cf. Theorems~\ref{thm:multiReachObj} and \ref{thm:multiExpReachRewObj}.
\item 
For timed reachability objectives, the set of achievable points of a \emph{digitized} MDP $\dma$ provides a \emph{sound approximation} of the achievable points of $\ma$, cf. Theorem~\ref{thm:bounded:multiBounded}.
Corollary~\ref{corr:bounded:approxBoundedProb} gives the precision of the approximation.
\end{itemize}

\subsection{Untimed Reachability Objectives}
\label{sec:ma:unbounded}
Although timing information is essential for \emph{deterministic} schedulers, cf.\ Theorem~\ref{thm:deterministicTADoesNotSuffice}, timing information does not strengthen randomized schedulers:
\begin{restatable}{theorem}{theoremMultiReachObj}
	\label{thm:multiReachObj}
	For MA $\ma$ and untimed reachability objectives $\obj$ it holds that
$
	\achievabilityQ{\ma}{\obj \rel \point} \iff \achievabilityQ{\umdp}{\obj \rel \point}. 
$
\end{restatable}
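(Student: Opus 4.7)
The plan is to prove both inclusions separately; the easy direction lifts MDP schedulers to the MA, while the hard direction averages out timing information.

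For the ($\Leftarrow$) direction, any scheduler $\sched' \in \TASched[\umdp]$ can be regarded as a generic scheduler for $\ma$ simply by ignoring the stamps of a timed path (this is well-defined since $\ma$ and $\umdp$ share the same state space, action set, and transition probabilities $\probP$). Any untimed reachability event $\eventually \goalStates$ is determined by the time-abstract path alone, and under a time-abstract scheduler the induced distribution on time-abstract paths in $\ma$ coincides with the distribution in $\umdp$ (sojourn times only affect stamps, which are irrelevant here). Hence the reachability probabilities agree and $\ma, \sched' \models \obj \rel \point$.

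For the ($\Rightarrow$) direction, suppose $\sched \in \GMSched[\ma]$ witnesses $\achievabilityQ{\ma}{\obj \rel \point}$. I would construct a (randomized) time-abstract scheduler $\sched' \in \TASched[\umdp]$ by conditional averaging: for a finite time-abstract path $\hat\ppath \in \FPaths[\umdp]$ and action $\act$, set
\[
\sched'(\hat\ppath)(\act) \;=\; \mathbb{E}_{\ProbModelScheduler{\ma}{\sched}}\bigl[\sched(\ppath)(\act) \,\big|\, \taOf{\ppath} = \hat\ppath\bigr],
\]
i.e.\ the expected randomized choice of $\sched$ over all timed realizations of $\hat\ppath$, weighted according to the conditional probabilities of the sojourn times induced by $\sched$ at the Markovian states along $\hat\ppath$. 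I would then prove, by induction on $n = \length{\hat\ppath}$, the key identity
\[
\ProbModelScheduler{\umdp}{\sched'}\bigl(\cyl{\hat\ppath}\bigr) \;=\; \ProbModelScheduler{\ma}{\sched}\bigl(\{\ppath \in \IPaths[\ma] : \taOf{\pref{\ppath}{n}} = \hat\ppath\}\bigr),
\]
where $\cyl{\hat\ppath}$ denotes the cylinder of time-abstract paths extending $\hat\ppath$. The inductive step decomposes the right-hand side at the last transition: at a probabilistic state, the step matches because $\sched'(\hat\ppath)(\act)$ is by construction exactly the marginal probability that $\sched$ picks $\act$ given $\hat\ppath$; at a Markovian state, the exponential sojourn-time distribution integrates out to $1$ and the transition probabilities $\probP(\state,\markovianAct,\state')$ are identical in $\ma$ and $\umdp$.

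Once this identity is established, since every untimed reachability objective $\reachObjUniv{\goalStates[i]}$ is measurable with respect to the $\sigma$-algebra generated by time-abstract cylinders, it follows that $\intervalBoundedReachProbMa[{[0,\infty)}] = \mathrm{Pr}^{\sched'}_{\umdp}(\eventually \goalStates[i])$ for each $i$, so $\umdp, \sched' \models \obj \rel \point$, which gives $\achievabilityQ{\umdp}{\obj \rel \point}$. The main obstacle is formal: ensuring the conditional expectation defining $\sched'$ is measurable in $\hat\ppath$ and carrying out the inductive step rigorously, which rests on the product structure of $\ProbModelScheduler{\ma}{\sched}$ at Markovian states (independence of sojourn time and subsequent choices given the current state) and on the fact that $\sched$ being randomized allows $\sched'$ to encode the mixture over timing histories as a single distribution over actions.
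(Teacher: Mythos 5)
Your proposal is correct and follows essentially the same route as the paper: your conditionally-averaged scheduler $\sched'$ is exactly the paper's time-abstraction $\taOfSched$ (Definition~\ref{def:timeAbstractionOfScheduler}), your key identity proved by induction with the probabilistic/Markovian case split is Lemma~\ref{lem:taSchedSamePathProb}, and the lift to reachability probabilities via time-abstract cylinders is Proposition~\ref{pro:taSchedSameReachProb}. The converse direction is likewise handled in the paper by observing that a time-abstract scheduler satisfies $\taOfSched = \sched$, matching your lifting argument.
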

\noindent
%
The main idea for proving Theorem~\ref{thm:multiReachObj} is to construct for scheduler $\sched \in \GMSched[\ma]$ a time-abstract scheduler $\taOfSched \in \TASched[\umdp]$ such that they both induce the same untimed reachability probabilities.
To this end, we discuss the connection between probabilities of paths of MA $\ma$ and paths of MDP $\umdp$.
\begin{definition}[Induced paths of a time-abstract path]
\label{def:InducedTimeStampedPaths}
The set of \emph{induced paths on MA} $\ma$ of a path $\pathTa$ of $\umdp$ is given by

\begin{displaymath}
\inducedPathsTa{\pathTa} =
\ta^{-1}(\pathTa) = 
\{\ppath \in \FPaths[\ma] \cup \IPaths[\ma] \mid  \taOf{\ppath} = \pathTa \}.
\end{displaymath}
\end{definition}
The set $\inducedPathsTa{\pathTa}$ contains all paths of $\ma$ where replacing sojourn times by $\markovianAct$ yields $\pathTa$.

For $\sched \in \GMSched$, the probability distribution $\schedEval{\sched}{\ppath}{\cdot} \in \Dist{\Actions}$ might depend on the sojourn times of the path $\ppath$. 
The time-abstract scheduler $\taOfSched$ weights the distribution $\schedEval{\sched}{\ppath}{\cdot}$ with the probability masses of the paths $\ppath \in \inducedPathsTa{\pathTa}$.
\begin{definition}[Time-abstraction of a scheduler]
\label{def:timeAbstractionOfScheduler}
The time-abstraction of $\sched \in \GMSched[\ma]$ is defined as $\taOfSched \in \TASched[\umdp]$ such that for any $\pathTa \in \FPaths[\umdp]$
\begin{displaymath}
\schedEval{\taOfSched}{\pathTa}{\act} = \int_{\ppath \in \inducedPathsTa{\pathTa}} \schedEval{\sched}{\ppath}{\act} \diff\ProbModelScheduler{\ma}{\sched} ( \ppath \mid \inducedPathsTa{\pathTa}).
\end{displaymath}
\end{definition}
The term $\ProbModelScheduler{\ma}{\sched} ( \ppath \mid \inducedPathsTa{\pathTa})$ represents the probability for a path in $\inducedPathsTa{\pathTa}$ to have sojourn times as given by $\ppath$. 
The value $\schedEval{\taOfSched}{\pathTa}{\act}$ coincides with the probability that  $\sched$ picks action $\act$, given that the time-abstract path $\pathTa$ was observed.
\begin{example}
	\label{ex:taOfScheduler} 
	Consider the MA $\ma$ in Fig.~\ref{fig:models:ma} and the scheduler $\sched$ choosing $\act$ at state $\state_3$ iff the sojourn time at $s_0$ is at most one.
	Then $\schedEval{\taOfSched}{\state_0 \pathTransUniv{\markovianAct} \state_3}{\act} = 1-e^{-\rateAtState{\state_0}}$, the probability that $\state_0$ is left within one time unit.
	For $\pathDi = \state_0 \pathTransUniv{\markovianAct} \state_3 \pathTransUniv{\act} \state_6$ we have
	\begin{displaymath}
	\ProbModelScheduler{\ma}{\sched}(\eventually \{\state_6\})
	= \ProbModelScheduler{\ma}{\sched}(\inducedPathsTa{\pathDi})
	= 1-e^{-\rateAtState{\state_0}}
	= \ProbModelScheduler{\umdp}{\taOfSched}(\pathDi)
	= \ProbModelScheduler{\umdp}{\taOfSched}(\eventually \{\state_6\}).
	\end{displaymath}
\end{example}
In the example, the considered scheduler and its time-abstraction induce the same untimed reachability probabilities.
We generalize this observation.
\begin{restatable}{lemma}{lemmaTaSchedSamePathProb}
	\label{lem:taSchedSamePathProb}
	For any   $\pathTa \in \FPaths[\umdp]$ we have
$
	\ProbModelScheduler{\ma}{\sched}(\inducedPathsTa{\pathTa}) =
	\ProbModelScheduler{\umdp}{\taOfSched}(\pathTa).
$
\end{restatable}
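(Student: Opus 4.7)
The plan is to prove the identity by induction on $n = \length{\pathTa}$. For the base case $n = 0$, we have $\pathTa = \sinit$, and $\inducedPathsTa{\sinit}$ is (the cylinder of) the initial state itself, so both $\ProbModelScheduler{\ma}{\sched}(\inducedPathsTa{\sinit}) = 1$ and $\ProbModelScheduler{\umdp}{\taOfSched}(\sinit) = 1$.

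For the inductive step, write $\pathTa = \pathTa' \pathTransUniv{\act} \state_{n+1}$ with $\last{\pathTa'} = \state_n$. On the MDP side, the standard one-step unfolding of the path measure yields
\begin{displaymath}
\ProbModelScheduler{\umdp}{\taOfSched}(\pathTa)
 \;=\; \ProbModelScheduler{\umdp}{\taOfSched}(\pathTa') \cdot \schedEval{\taOfSched}{\pathTa'}{\act} \cdot \probP(\state_n,\act,\state_{n+1}).
\end{displaymath}
On the MA side, using the cylinder-based definition of $\ProbModelScheduler{\ma}{\sched}$ from \tech{App:ProbMeasure}, I would condition on the induced prefix $\ppath' \in \inducedPathsTa{\pathTa'}$ and express
\begin{displaymath}
\ProbModelScheduler{\ma}{\sched}(\inducedPathsTa{\pathTa}) \;=\; \int_{\ppath' \in \inducedPathsTa{\pathTa'}} q(\ppath') \,\diff\ProbModelScheduler{\ma}{\sched}(\ppath'),
\end{displaymath}
where $q(\ppath')$ is the one-step probability of extending $\ppath'$ to some element of $\inducedPathsTa{\pathTa}$.

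The core observation is that $q(\ppath')$ factors as $\schedEval{\sched}{\ppath'}{\act} \cdot \probP(\state_n,\act,\state_{n+1})$ in both relevant cases. Indeed, if $\state_n \in \PS$ then $\act \in \Actions$, the sojourn time at $\state_n$ is $0$ so no integration is needed, and the scheduler picks $\act$ with probability $\schedEval{\sched}{\ppath'}{\act}$ before moving to $\state_{n+1}$ with probability $\probP(\state_n,\act,\state_{n+1})$. If $\state_n \in \MS$ then $\act = \markovianAct$ and $\schedEval{\sched}{\ppath'}{\markovianAct} = 1$, and the extension integrates the exponential density $\rateAtState{\state_n} e^{-\rateAtState{\state_n} t}$ over all sojourn times $t \in \RRnn$, which collapses to $1$, leaving precisely the factor $\probP(\state_n,\markovianAct,\state_{n+1})$. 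Substituting and pulling the constant $\probP(\state_n,\act,\state_{n+1})$ out of the integral, the definition of $\taOfSched$ (Def.~\ref{def:timeAbstractionOfScheduler}) gives
\begin{displaymath}
\int_{\ppath' \in \inducedPathsTa{\pathTa'}} \schedEval{\sched}{\ppath'}{\act} \,\diff\ProbModelScheduler{\ma}{\sched}(\ppath')
 \;=\; \schedEval{\taOfSched}{\pathTa'}{\act} \cdot \ProbModelScheduler{\ma}{\sched}(\inducedPathsTa{\pathTa'}).
\end{displaymath}
Applying the induction hypothesis $\ProbModelScheduler{\ma}{\sched}(\inducedPathsTa{\pathTa'}) = \ProbModelScheduler{\umdp}{\taOfSched}(\pathTa')$ matches the MDP-side unfolding and closes the induction.

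The main obstacle is the measure-theoretic bookkeeping: the set $\inducedPathsTa{\pathTa'}$ is an uncountable cylinder obtained by leaving all Markovian sojourn times unconstrained, so the one-step unfolding really requires invoking the cylinder/Fubini construction of $\ProbModelScheduler{\ma}{\sched}$ and justifying that $q(\ppath')$ is the Radon--Nikodym-style density of extensions. A minor but related nuisance is the case $\ProbModelScheduler{\ma}{\sched}(\inducedPathsTa{\pathTa'}) = 0$: the conditional probability used in Def.~\ref{def:timeAbstractionOfScheduler} is not uniquely defined, but then both sides of the claim vanish because $\inducedPathsTa{\pathTa} \subseteq \inducedPathsTa{\pathTa'}$ and (by the induction hypothesis) $\ProbModelScheduler{\umdp}{\taOfSched}(\pathTa') = 0$ as well, so any choice of $\taOfSched$ on such prefixes is consistent.
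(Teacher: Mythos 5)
Your proof is correct and follows essentially the same route as the paper's: induction on the length of $\pathTa$, a case split on whether $\last{\pathTa'}$ is probabilistic or Markovian, factoring the one-step extension probability as $\schedEval{\sched}{\ppath'}{\act}\cdot\probP(\state_n,\act,\state_{n+1})$ (with the exponential density integrating to $1$ in the Markovian case), and then recognizing the remaining integral as $\schedEval{\taOfSched}{\pathTa'}{\act}\cdot\ProbModelScheduler{\ma}{\sched}(\inducedPathsTa{\pathTa'})$ via Def.~\ref{def:timeAbstractionOfScheduler} before applying the induction hypothesis. Your additional remark handling the degenerate case $\ProbModelScheduler{\ma}{\sched}(\inducedPathsTa{\pathTa'})=0$ is a detail the paper leaves implicit, but it does not change the argument.
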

\noindent
The result is lifted to untimed reachability probabilities.
\begin{restatable}{proposition}{propositionTaSchedSameReachProb}
	\label{pro:taSchedSameReachProb}
	For any $\goalStates \subseteq \States$ it holds that
$
	\reachProbMa = \reachProbUMdp[].
$
\end{restatable}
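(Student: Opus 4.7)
The plan is to decompose the untimed reachability event into a countable disjoint union of cylinder sets indexed by ``first-hitting'' finite paths in the underlying MDP, and then apply Lemma~\ref{lem:taSchedSamePathProb} termwise to transfer probabilities from $\ma$ to $\umdp$.

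First I would observe that the time-abstraction map $\ta$ preserves the sequence of visited states: an infinite path $\ppath$ of $\ma$ reaches $\goalStates$ if and only if $\taOf{\ppath}$ reaches $\goalStates$ in $\umdp$. Hence the untimed-reachability event in $\ma$ is precisely the $\ta$-preimage of the corresponding event in $\umdp$, and it suffices to compare the two measures on a common, countable decomposition.

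Next I would introduce the countable set $\Pi \subseteq \FPaths[\umdp]$ of ``first-hitting'' time-abstract paths, i.e.\ finite paths $\pathTa$ with $\last{\pathTa} \in \goalStates$ and $\ithStateOfPath{\pathTa}{i} \notin \goalStates$ for every $i < \length{\pathTa}$. The event $\eventually \goalStates$ in $\umdp$ is the disjoint union of the cylinders spanned by the elements of $\Pi$. Because $\ta$ respects prefixes, taking preimages preserves disjointness, and the preimage of the cylinder of $\pathTa \in \Pi$ is exactly the set of infinite $\ma$-paths whose time-abstract prefix is $\pathTa$, whose $\ProbModelScheduler{\ma}{\sched}$-mass is the quantity $\ProbModelScheduler{\ma}{\sched}(\inducedPathsTa{\pathTa})$ referred to in Lemma~\ref{lem:taSchedSamePathProb} (as already illustrated by Example~\ref{ex:taOfScheduler}).

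Combining these two observations, $\sigma$-additivity of both measures together with a termwise application of Lemma~\ref{lem:taSchedSamePathProb} yields
\begin{align*}
\reachProbMa
 &= \sum_{\pathTa \in \Pi} \ProbModelScheduler{\ma}{\sched}(\inducedPathsTa{\pathTa}) \\
 &= \sum_{\pathTa \in \Pi} \ProbModelScheduler{\umdp}{\taOfSched}(\pathTa) = \reachProbUMdp[].
\end{align*}
The only subtle step is the measure-theoretic bookkeeping, namely checking that $\ta$ is measurable between the relevant path $\sigma$-algebras and that the cylinder decomposition transfers cleanly under the pullback despite the continuous timing information carried by $\ma$-paths. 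This is routine given the construction of $\ProbModelScheduler{\ma}{\sched}$ deferred to the appendix, so the dynamical content of the proposition is fully encapsulated by Lemma~\ref{lem:taSchedSamePathProb}.
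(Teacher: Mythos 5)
Your proposal is correct and follows essentially the same route as the paper: both decompose $\eventually \goalStates$ into the disjoint union of cylinders over first-hitting time-abstract paths (your $\Pi$ is exactly the paper's $\Pathset$) and then apply Lemma~\ref{lem:taSchedSamePathProb} termwise under $\sigma$-additivity. The measure-theoretic bookkeeping you flag as the only subtle point is likewise left implicit in the paper's proof.
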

\noindent
As the definition of $\taOfSched$ is independent of the considered set of goal states $\goalStates \subseteq \States$, Proposition~\ref{pro:taSchedSameReachProb} can be lifted to multiple untimed reachability objectives.
\paragraph{Proof of Theorem~\ref{thm:multiReachObj} (sketch).}
By applying Proposition~\ref{pro:taSchedSameReachProb}, we can show that
$
		 \ma, \sched \models \obj \rel \point
		\iff
		\umdp, \taOfSched \models \obj \rel \point 
$
for any scheduler $\sched \in \GMSched[\ma]$ and untimed reachability objectives $\obj = (\reachObj[1], \dots, \reachObj[\numObjectives])$ with thresholds $\rel \point$.
Theorem~\ref{thm:multiReachObj} is a direct consequence of this.
\subsection{Expected Reward Objectives}
\label{sec:ma:rewards}
The results for expected reward objectives are similar to untimed reachability objectives: An analysis of the underlying MDP suffices.
We show the following extension of Theorem~\ref{thm:multiReachObj} to expected reward objectives. 
\begin{restatable}{theorem}{theoremMultiExpReachRewObj}
	\label{thm:multiExpReachRewObj}
	For MA $\ma$ and untimed reachability and expected reward objectives $\obj$:
$
	\achievabilityQ{\ma}{\obj \rel \point} \iff \achievabilityQ{\umdp}{\obj \rel \point}.
$
\end{restatable}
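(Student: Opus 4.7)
The plan is to extend the proof of Theorem~\ref{thm:multiReachObj} by establishing an analogue of Proposition~\ref{pro:taSchedSameReachProb} for expected rewards, namely
\begin{displaymath}
\expReachRew[]{\ma}{\sched}{\rewFct[j]} \;=\; \expReachRew[]{\umdp}{\taOfSched}{\rewFct[j]}
\end{displaymath}
for every reward function $\rewFct[j]$ of $\ma$, every goal set $\goalStates$, and every $\sched \in \GMSched[\ma]$, where $\taOfSched \in \TASched[\umdp]$ is the time-abstraction of Definition~\ref{def:timeAbstractionOfScheduler} and the reward functions of $\umdp$ are those inherited from $\ma$ via the standard encoding of Markovian state rewards as $\markovianAct$-action rewards (cf.\ \tech{App:RewardDefs}). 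Once this identity is available, we obtain $\ma, \sched \models \obj \rel \point \iff \umdp, \taOfSched \models \obj \rel \point$ for any mixed vector $\obj$ of untimed reachability and expected reward objectives, which yields the forward direction of the theorem. The converse is immediate: every $\sched' \in \TASched[\umdp]$ lifts to $\sched \in \GMSched[\ma]$ by setting $\schedEval{\sched}{\ppath}{\act} \coloneqq \schedEval{\sched'}{\taOf{\ppath}}{\act}$, whereupon $\taOfSched = \sched'$ and the identity applies verbatim.

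To prove the reward identity I would decompose the reward accumulated along an infinite path $\ppath$ until the first visit of $\goalStates$ into three contributions: action rewards $\rewAct[j]{\state_i}{\actionOfStamp[i]}$ on probabilistic transitions, action rewards $\rewAct[j]{\state_i}{\markovianAct}$ on Markovian transitions, and sojourn rewards $\rewState[j]{\state_i} \cdot \timeOfStamp[i]$ accrued in Markovian states. The first two summands depend only on $\taOf{\ppath}$, so Lemma~\ref{lem:taSchedSamePathProb} directly identifies their $\sched$-expectation on $\ma$ with their $\taOfSched$-expectation on $\umdp$, using that the corresponding action rewards coincide in both models. The third summand is the delicate one: conditioned on any finite time-abstract prefix ending at some $\state \in \MS$, the sojourn time there is exponentially distributed with rate $\rateAtState{\state}$ and, crucially, is independent of $\sched$'s current action choice; hence its conditional expectation equals $\rewState[j]{\state}/\rateAtState{\state}$, which matches precisely the value assigned to the $\markovianAct$-transition at $\state$ in the reward function of $\umdp$. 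Summing over all finite prefixes, with monotone convergence used to pass to the absorbing event $\eventually \goalStates$ (and invoking non-Zenoness together with well-definedness of $\expReachRew[]{\ma}{\sched}{\rewFct[j]}$ to guarantee integrability), then closes the identity.

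The main obstacle I anticipate is the rigorous handling of the sojourn-reward contribution: although each individual sojourn time is exponentially distributed, $\sched$ may correlate its subsequent decisions with it, so one has to invoke the disintegration
\begin{displaymath}
\schedEval{\taOfSched}{\pathTa}{\act} = \int_{\ppath \in \inducedPathsTa{\pathTa}} \schedEval{\sched}{\ppath}{\act} \diff\ProbModelScheduler{\ma}{\sched}(\ppath \mid \inducedPathsTa{\pathTa})
\end{displaymath}
of Definition~\ref{def:timeAbstractionOfScheduler} to separate the sojourn time at a Markovian visit from $\sched$'s choice at the subsequent probabilistic state before integrating. Once this conditional-independence machinery is set up in the same style as the proof of Proposition~\ref{pro:taSchedSameReachProb}, the remaining bookkeeping (splitting the expectation visit-by-visit along the Markov chain of time-abstract prefixes and recombining via linearity) is routine.
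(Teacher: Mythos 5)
Your proposal is correct, and its substance matches the paper's argument: the paper likewise reduces the theorem to the identity $\expReachRewMa[] = \expReachRewUMdp[]$ for every $\sched\in\GMSched[\ma]$ (Proposition~\ref{pro:taSchedSameExpReachRew}) and then derives achievability equivalence exactly as you do, including the observation that $\taOfSched=\sched$ for time-abstract $\sched$ in the converse direction. You have also correctly identified the crux that the paper makes explicit in Example~\ref{ex:expRewardsOfTaFinPathAreNeq} --- the path-wise analogue of Lemma~\ref{lem:taSchedSamePathProb} fails for rewards because the scheduler may correlate later choices with earlier sojourn times --- and your fix (condition only on the time-abstract prefix up to the arrival at a Markovian state $\state$, where the residual sojourn time is exponential with mean $\nicefrac{1}{\rateAtState{\state}}$, so that the visit contributes exactly $\rewUMdp{\state}{\markovianAct}$ weighted by the time-abstract prefix probability supplied by Lemma~\ref{lem:taSchedSamePathProb}) is precisely the computation the paper performs. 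The only genuine difference is organizational: you decompose the accumulated reward into per-visit contributions and sum them all at once via Tonelli/monotone convergence, whereas the paper first proves by induction on $n$ that the expected rewards truncated at $n$ steps or at the first visit of $\goalStates$ coincide for $\ma$ under $\sched$ and $\umdp$ under $\taOfSched$ (Lemma~\ref{lem:taSchedSameExpRewOfPathsN}) --- the induction step carries out your per-visit computation for step $n{+}1$ and separately verifies that the reward of the first $n$ steps is unaffected by aggregating over the $(n{+}1)$-th transition --- and only then passes to the limit (Lemma~\ref{lem:ma:rew:PathsNOrTargetApproachesReachRew}, using monotone convergence, so integrability is not actually needed since all rewards are non-negative). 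Your route is marginally more direct; the paper's induction makes the disintegration step more mechanical to verify. Neither approach has a gap that the other avoids.
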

\noindent
To prove this, we show that a scheduler $\sched \in \GMSched[\ma]$ and its time-abstraction $\taOfSched \in \TASched[]$ induce the same expected rewards on $\ma$ and $\umdp$, respectively. Theorem~\ref{thm:multiExpReachRewObj} follows then analogously to Theorem~\ref{thm:multiReachObj}.
\begin{restatable}{proposition}{propositionTaSchedSameExpReachRew}
	\label{pro:taSchedSameExpReachRew}
	Let $\rewFct[]$ be some reward function of $\ma$ and let $\rewUMdpFct[]$ be its counterpart for $\umdp$.
	For $\goalStates \subseteq \States$ we have
$
	\expReachRewMa[] = \expReachRewUMdp[].
$
\end{restatable}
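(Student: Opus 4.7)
The plan is to define the counterpart reward function $\rewUMdpFct[]$ on $\umdp$ by keeping the action rewards as given and converting each state reward at a Markovian state $\state \in \MS$ to a per-visit reward $\rewState{\state}/\rateAtState{\state}$ -- the expected state reward accumulated during one exponential sojourn at $\state$ -- and then to decompose the expected reward on $\ma$ into step-wise contributions, show that each such contribution matches the corresponding contribution on $\umdp$ under $\taOfSched$, and sum over positions.

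Concretely, for each $i \ge 0$ I would introduce the $\sigma$-algebra $\mathcal{F}_i$ generated by the path prefix up to (but not including) the $i$-th sojourn, so that $\mathcal{F}_i$ fixes all previously elapsed times and chosen actions while leaving $\timeOfStamp[i]$ still random. The pivotal observation is that, by the MA semantics, the conditional distribution of $\timeOfStamp[i]$ given $\mathcal{F}_i$ is $\mathrm{Exp}(\rateAtState{\state_i})$ whenever $\state_i \in \MS$; the scheduler may \emph{later} use $\timeOfStamp[i]$ to resolve non-determinism at probabilistic states, but that couples $\timeOfStamp[i]$ only with the future and does not alter its conditional distribution. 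Letting $N$ be the first index with $\state_N \in \goalStates$, the tower rule yields
\begin{displaymath}
\mathbb{E}\bigl[\rewState{\state_i}\cdot\timeOfStamp[i]\cdot\mathbf{1}_{\{\state_i \in \MS,\, i < N\}}\bigr] = \frac{\rewState{\state_i}}{\rateAtState{\state_i}}\cdot \ProbModelScheduler{\ma}{\sched}(i < N,\,\state_i \in \MS),
\end{displaymath}
since the indicator event is $\mathcal{F}_i$-measurable and the conditional mean of $\timeOfStamp[i]$ is $1/\rateAtState{\state_i}$ on $\{\state_i \in \MS\}$.

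Both $\{i < N,\, \state_i = \state\}$ and $\{i < N,\, \state_i = \state,\,\actionOfStamp[i] = \act\}$ are time-abstract events, depending only on $\taOf{\ppath}$; by a routine extension of Lemma~\ref{lem:taSchedSamePathProb} from single cylinders to finite unions thereof, their $(\ma,\sched)$-probabilities coincide with the corresponding $(\umdp,\taOfSched)$-probabilities. Applying this to the displayed identity (for the state-reward part) and to the analogous identity for the action-reward contribution $\rewAct{\state_i}{\actionOfStamp[i]}$, and then summing over $i$, produces exactly $\expReachRewUMdp[]$ for the counterpart $\rewUMdpFct[]$.

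I expect the main obstacle to lie in two places. First, in justifying the interchange of expectation with the (potentially infinite) sum over path positions, which is handled cleanly by Tonelli's theorem since all rewards are non-negative (with the case of $\goalStates$ not being reached almost surely forcing both sides to $+\infty$ and making the identity trivial). Second, in rigorously sustaining the independence claim that $\timeOfStamp[i] \mid \mathcal{F}_i$ remains $\mathrm{Exp}(\rateAtState{\state_i})$ even when the scheduler may exploit past sojourn times -- the delicate point being that $\mathcal{F}_i$ conditions only on the past, whereas the scheduler's dependence on sojourn times leaks them into \emph{future} action choices, which $\taOfSched$ is precisely designed to marginalise, so the cylinder identity of Lemma~\ref{lem:taSchedSamePathProb} is exactly what reconciles the two sides.
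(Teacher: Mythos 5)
Your proof is correct, and it takes a genuinely different route from the paper's. The paper proves the proposition by induction over a truncation parameter $n$: it defines the expected reward collected along the time-abstract paths that stop after $n$ steps or upon first hitting $\goalStates$ (Lemma~\ref{lem:taSchedSameExpRewOfPathsN}), shows by induction that this quantity coincides for $\ma$ under $\sched$ and $\umdp$ under $\taOfSched$ --- the induction step requires explicitly marginalizing the first-$n$-steps reward over the $(n{+}1)$-th transition and a direct integral computation for the $(n{+}1)$-th step --- and then passes to the limit via monotone convergence (Lemma~\ref{lem:ma:rew:PathsNOrTargetApproachesReachRew}). You instead decompose the total reward by step index, use the tower property with the filtration generated by the strict past to replace $\timeOfStamp[i]$ by its conditional mean $\nicefrac{1}{\rateAtState{\state_i}}$, transfer the probabilities of the (time-abstract, finite-union-of-cylinders) events $\{i<N,\ \state_i=\state\}$ and $\{i<N,\ \state_i=\state,\ \actionOfStamp[i]=\act\}$ via Lemma~\ref{lem:taSchedSamePathProb}, and invoke Tonelli to sum over $i$. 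Both arguments hinge on the same essential insight --- that one may only condition on the past and must marginalize over the future, which is exactly what Example~\ref{ex:expRewardsOfTaFinPathAreNeq} shows goes wrong path-by-path --- but your version trades the paper's induction and its bookkeeping for a shorter linearity-of-expectation argument, at the cost of having to set up the measure-theoretic scaffolding (the filtration, the fact that the conditional law of $\timeOfStamp[i]$ given the strict past is $\mathrm{Exp}(\rateAtState{\state_i})$, which follows from the step-measure construction of $\ProbModelScheduler{\ma}{\sched}$) with some care. Two cosmetic points: the right-hand side of your displayed identity should be summed over $\state \in \MS$, since $\rewState{\state_i}$ and $\rateAtState{\state_i}$ are random as written; and your remark that failure to reach $\goalStates$ almost surely forces both sides to $+\infty$ is not true in general (all rewards could vanish), but it is also unnecessary --- Tonelli gives the identity in $[0,\infty]$ term by term regardless.
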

\noindent
Notice that $\rewUMdpFct$ encodes the \emph{expected} reward of $\ma$ obtained in a state $s$ by assuming the sojourn time to be the expected sojourn time $\nicefrac{1}{E(s)}$.
Although the claim is  similar to Proposition~\ref{pro:taSchedSameReachProb}, its proof cannot be adapted straightforwardly.
In particular, the analogon to Lemma~\ref{lem:taSchedSamePathProb} does not hold:
The expected reward  collected along a time-abstract path  $\pathTa \in \FPaths[\umdp]$ does in general not coincide for $\ma$ and $\umdp$.
\begin{example}
\label{ex:expRewardsOfTaFinPathAreNeq}
We consider standard notations for rewards as detailed in \tech{App:ExpectedReward}.
Let $\ma$ be the MA with underlying MDP $\umdp$ as shown in Fig.~\ref{fig:models}.
Let $\rewFct(\state_0) =1$ and zero otherwise. Reconsider the scheduler $\sched$
from Example~\ref{ex:taOfScheduler}.
Let $\pathTa_\act = \state_0 \pathTransUniv{\markovianAct} \state_3 \pathTransUniv{ \act} \state_6$. 
The probability $\ProbModelScheduler{\ma}{\sched}(\{ \state_0 \pathTransUniv{\ttime} \state_3 \pathTransUniv{\act} \state_6 \in \inducedPathsTa{\pathTa_\act}\mid \ttime > 1  \})$ is zero since $\sched$ chooses $\altact$ on such paths.
For the remaining paths in $\inducedPathsTa{\pathTa_\act}$, action $\act$ is chosen with  probability one.
The expected reward in $\ma$ along $\pathTa_\act$ is:
\begin{align*}
\int_{\ppath \in \inducedPathsTa{\pathTa_\act}} \rewPath[\ma]{\rewFct}{\ppath} \diff\ProbModelScheduler{\ma}{\sched}(\ppath) 
& =\int_0^1 \rewState[]{\state_0} \cdot \ttime \cdot   \rateAtState{\state_0}  \cdot e^{-\rateAtState{\state_0} \ttime}  \diff \ttime 
= 1-2e^{-1}.
\end{align*}
The expected reward in $\umdp$  along $\pathTa_\act$ differs as
\begin{displaymath}
 \rewPath[\umdp]{\rewUMdpFct}{\pathTa_\act} \cdot \ProbModelScheduler{\umdp}{\taOfSched}(\pathTa_\act) = \rewUMdp{\state_0}{\markovianAct} \cdot \schedEval{\taOfSched}{\state_0 \pathTransUniv{\markovianAct} \state_3}{\act} = 1-e^{-1}.
\end{displaymath}
The intuition is as follows:
If path $\state_0 \pathTransUniv{\ttime} \state_3 \pathTransUniv{\act} \state_6$ of $\ma$ under $\sched$ occurs, we have  $\ttime \le 1$ since $\sched$ chose $\act$.
Hence, the reward collected from paths in $\inducedPathsTa{\pathTa_\act}$ is at most $1 \cdot \rewFct(\state_0) = 1$.
There is thus a dependency between the choice of the scheduler at $\state_3$ and the collected reward at $\state_0$. This dependency is absent in $\umdp$ as the reward at a state is independent of the subsequent performed actions.

Let $\pathTa_\altact = \state_0 \pathTransUniv{\markovianAct} \state_3 \pathTransUniv{\altact} \state_4$. The expected reward  along $\pathTa_\altact$ is $2e^{-1}$ for $\ma$  and $e^{-1}$ for $\umdp$.
As the rewards for $\pathTa_\act$ and $\pathTa_\altact$ sum up to one in both $\ma$ and $\umdp$, the expected reward along all paths of length two coincides for $\ma$ and $\umdp$.

\end{example}
This observation can be generalized to arbitrary MA and paths of arbitrary length.
\paragraph{Proof of Proposition~\ref{pro:taSchedSameExpReachRew} (sketch).}
For every $n\ge0$, the expected reward collected along paths of length at most $n$ coincides for $\ma$ under $\sched$ and $\umdp$ under $\taOfSched$.
The proposition follows by letting $n$ approach infinity.

Thus, queries on MA with mixtures of untimed reachability and expected reward objectives can be analyzed on the underlying MDP $\umdp$.

\subsection{Timed Reachability Objectives}
\label{sec:ma:bounded}
Timed reachability objectives cannot be analyzed on $\umdp$ as it abstracts away from sojourn times.
We lift the digitization approach for single-objective timed reachability \cite{HatefiH12,DBLP:journals/corr/GuckHHKT14} to multiple objectives.
Instead of abstracting timing information, it is \emph{digitized}.
Let  $\dma$ denote the digitization of $\ma$ for arbitrary digitization constant $\digConstant \in \RRgz$, see Def.~\ref{def:dma}.
A time interval $\interval \subseteq \RRnn$ of the form $[\lowerTimeBound,\infty)$ or $[\lowerTimeBound, \upperTimeBound]$ with $ \lowerStepBound \coloneqq \nicefrac{\lowerTimeBound}{\digConstant} \in \NN$ and $\upperStepBound \coloneqq \nicefrac{\upperTimeBound}{\digConstant}  \in \NN$ is called \emph{well-formed}. 
For the remainder, we only consider well-formed intervals, ensured by an appropriate digitization constant. 
An interval for time-bounds $\interval$ is transformed to digitization step bounds $\diOfInterval \subseteq \NN$. 
Let $a=\inf \interval $, we set $
\diOfInterval =
 \{ \nicefrac{\ttime}{\digConstant} \in \NN \mid  \ttime \in \interval \} \setminus \{ 0 \mid a > 0 \}
$.


We first relate paths in $\ma$ to paths in its digitization.
\begin{definition}[Digitization of a path]
	\label{def:digitizationOfPath}
The \emph{digitization} $\diOfPath$ of path $\ppath = \pathIseqTimed$ in $\ma$ is the path in $\dma$ given by
		\begin{displaymath}
	\diOfPath = \big( \state_0 \pathTransUniv{\actionOfStamp[0]}\!\! \big)^{\waitingSteps[0]}  \state_0 \pathTransUniv{\actionOfStamp[0]} \big(  \state_1 \pathTransUniv{\actionOfStamp[1]} \!\!  \big)^{\waitingSteps[1]} \state_1 \pathTransUniv{\actionOfStamp[1]} \dots\ 
	\end{displaymath}
where $\waitingSteps[i] = \max \{ \waitingSteps[] \in \NN \mid \waitingSteps[]  \digConstant \le \timeOfStamp[i]  \}$ for each $i\ge 0$.
\end{definition}
\begin{example}
	\label{ex:digOfPath}
	For the path $\ppath = \state_0 \pathTransUniv{1.1} \state_3 \pathTransUniv{\altact} \state_4 \pathTransUniv{\eta} \state_5 \pathTransUniv{0.3} \state_4 $ of the MA $\ma$ in Fig.~\ref{fig:models:ma} and $\digConstant = 0.4$, we get
	$
	\diOfPath = \state_0 \pathTransUniv{\markovianAct}  \state_0 \pathTransUniv{\markovianAct}  \state_0 \pathTransUniv{\markovianAct} \state_3 \pathTransUniv{\altact} \state_4 \pathTransUniv{\eta} \state_5 \pathTransUniv{\markovianAct} \state_4 $.
\end{example}
The $\waitingSteps[i]$ in the definition above represent a digitization of the sojourn times $\timeOfStamp[i]$ such that $\waitingSteps[i]  \digConstant \le \timeOfStamp[i] < (\waitingSteps[i]{+}1)  \digConstant$. 
These digitized times are incorporated into the digitization of a path by taking the self-loop at state $\state_i \in \MS$ $\waitingSteps[i]$ times.
We also refer to the paths of $\dma$ as \emph{digital paths (of $\ma$)}. 
The number $\numOfDS{\pathDi}$ of \emph{digitization steps} of a digital path $\pathDi$ is the number of transitions emerging from Markovian states, i.e.,
$
\numOfDS{\pathDi} = \lvert \{ i < \length{\pathDi} \mid  \ithStateOfPath{\pathDi}{i} \in \MS\} \rvert
$. 
One digitization step represents the elapse of at most $\digConstant$ time units---either by staying at some $\state \in \MS$ for $\digConstant$ time or by leaving $\state$ within $\digConstant$ time.
The number $\numOfDS{\diOfPath}$ multiplied with $\digConstant$ yields an estimate for the duration $\timeOfPath{\ppath}$.
A digital path $\pathDi$ can be interpreted as representation of the set of paths of $\ma$ whose digitization is $\pathDi$.
\begin{definition}[Induced paths of a digital path]
	\label{def:ma:pathsInducedByDigPath}
	The set of \emph{induced paths} of a (finite or infinite) digital path $\pathDi$ of $\dma$ is
	\begin{displaymath}
	\inducedPathsDi{\pathDi} = \di^{-1}(\pathDi) = \{ \ppath \in \FPaths[\ma] \cup \IPaths[\ma] \mid \diOfPath = \pathDi \}.
	\end{displaymath}
\end{definition}
\noindent For sets of digital paths $\Pathset$ 
we define the \emph{induced paths} $\inducedPathsDi{\Pathset} = \bigcup_{\pathDi \in \Pathset} \inducedPathsDi{\pathDi}$.
To relate timed reachability probabilities for $\ma$ under scheduler $\sched \in \GMSched[\ma]$ with \ds-bounded reachability probabilities for $\dma$, relating $\sched$ to a scheduler for $\dma$ is necessary.

\begin{definition}[Digitization of a scheduler]
	The \emph{ digitization} of $\sched\in \GMSched[\ma]$ is given by  $\diOfSched \in \TASched[\dma]$ such that
	for any $\pathDi \in \FPaths[\dma]$ with $\last{\pathDi} \in \PS$
	\begin{displaymath}
	\schedEval{\diOfSched}{\pathDi}{\act} = \int_{\ppath \in \inducedPathsDi{\pathDi}} \schedEval{\sched}{\ppath}{\act} \diff \ProbModelScheduler{\ma}{\sched}(\ppath \mid \inducedPathsDi{\pathDi}).
	\end{displaymath}
\end{definition}
The digitization $\diOfSched$ is similar to the time-abstraction $\taOfSched$ as both schedulers get a path with restricted timing information as input and mimic the choice of $\sched$.
However, while $\taOfSched$ receives no information regarding sojourn times, $\diOfSched$ receives the digital estimate.
Intuitively, $\schedEval{\diOfSched}{\pathDi}{\act}$ considers  $\schedEval{\sched}{\ppath}{\act}$ for each $\ppath\in \inducedPathsDi{\pathDi}$, weighted with the probability that the sojourn times of a path in $\inducedPathsDi{\pathDi}$ are as given by $\ppath$.
The restriction $\last{\pathDi} \in \PS$ asserts that $\pathDi$ does not end with a self-loop on a Markovian state, implying $\inducedPathsDi{\pathDi} \neq \emptyset$. 

\begin{example}
	\label{ex:diOfSched}
	Let MA $\ma$ in Fig.~\ref{fig:models:ma} and $\digConstant=0.4$. Again, $\sched\in\GMSched[\ma]$ chooses $\act$ at state $\state_3$ iff the sojourn time at $\state_0$ is at most one.
	  Consider the digital paths $\pathDi_m = ( \state_0 \pathTransUniv{\markovianAct} \!  )^{m} \state_0 \pathTransUniv{\markovianAct} \state_3$.
	For $\ppath \in \inducedPathsDi{\pathDi_1} = \{\state_0 \pathTransUniv{\ttime} \state_3 \mid 0.4 \le \ttime < 0.8 \}$ we have $\schedEval{\sched}{\ppath}{\act} = 1$. 
	It follows $\schedEval{\diOfSched}{\ppath_1}{\act} = 1$.
	For $\ppath \in \inducedPathsDi{\pathDi_2} = \{\state_0 \pathTransUniv{\ttime} \state_3 \mid 0.8 \le \ttime < 1.2 \}$ it is unclear whether $\sched$ chooses $\act$ or $\altact$.
	Hence, $\diOfSched$ randomly guesses:
		\begin{align*}
		\schedEval{\diOfSched}{\pathDi_2}{\act}
		 = 
		\int_{\ppath \in \inducedPathsDi{\pathDi_2}} \schedEval{\sched}{\ppath}{\act} \diff \ProbModelScheduler{\ma}{\sched}(\ppath \mid \inducedPathsDi{\pathDi_2}) 
		= \frac{ \int_{0.8}^{1.0} \rateAtState{\state_0} e^{-\rateAtState{\state_0} \ttime} \diff\ttime }{\int_{0.8}^{1.2} \rateAtState{\state_0} e^{-\rateAtState{\state_0} \ttime} \diff\ttime} 
 \approx 0.55\ .
		\end{align*}
\end{example}
On $\dma$ we consider \ds-bounded reachability instead of timed reachability.
\begin{definition}[\ds-bounded reachability]
	\label{def:ma:bounded:digStepBoundedPaths}
	The set of infinite digital paths that reach $\goalStates \subseteq \States$ within the interval $\altInterval \subseteq \NN$ of consecutive natural numbers is
	\begin{align*}
	\eventuallyds[\altInterval] \goalStates = \{\pathDi \in \IPaths[\dma] \mid  \ \exists n \ge 0 \colon \ithStateOfPath{\pathDi}{n} \in \goalStates \text{ and } \numOfDS{\pref{\pathDi}{n}} \in \altInterval \}.
	\end{align*}
\end{definition}
The timed reachability probabilities for $\ma$ are estimated by \ds-bounded reachability probabilities for $\dma$.
The induced \ds-bounded reachability probability for $\ma$ (under $\sched$) coincides with \ds-bounded reachability probability on $\dma$ (under $\diOfSched$). 
\begin{restatable}{proposition}{propositionDsBoundedProbEqual}
	\label{pro:bounded:dsBoundedProbEqual}
	Let $\ma$ be an MA with $\goalStates \subseteq \States$, $\sched \in \GMSched$, and digitization $\dma$.
	Further, let $\altInterval \subseteq \NN$ be a  set of consecutive natural numbers.
	It holds that
	\begin{displaymath} 
	\ProbModelScheduler{\ma}{\sched}(\inducedPathsDi{\eventuallyds[\altInterval] \goalStates})
	=	\ProbModelScheduler{\dma}{\diOfSched}(\eventuallyds[\altInterval] \goalStates).
	\end{displaymath}
\end{restatable}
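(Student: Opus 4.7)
My plan is to reduce the event-level equality to a cylinder-level identity and then verify that identity by induction on path length. The set $\eventuallyds[\altInterval]\goalStates$ can be written as a countable disjoint union of cylinders $\mathrm{Cyl}(\pathDi)$, where $\pathDi$ ranges over finite digital paths that first reach $\goalStates$ at a digitization-step count lying in $\altInterval$. Because $\di$ is a function on single paths, the preimages $\inducedPathsDi{\mathrm{Cyl}(\pathDi)}$ of distinct such cylinders are pairwise disjoint subsets of $\IPaths[\ma]$, so $\inducedPathsDi{\eventuallyds[\altInterval]\goalStates}$ is their disjoint union. By $\sigma$-additivity of both $\ProbModelScheduler{\ma}{\sched}$ and $\ProbModelScheduler{\dma}{\diOfSched}$, it suffices to establish, for every finite digital path $\pathDi$ of $\dma$, the cylinder identity
\[
\ProbModelScheduler{\ma}{\sched}\bigl(\inducedPathsDi{\mathrm{Cyl}(\pathDi)}\bigr) \;=\; \ProbModelScheduler{\dma}{\diOfSched}\bigl(\mathrm{Cyl}(\pathDi)\bigr).
\]

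I prove this by induction on $\length{\pathDi}$. The base case $\pathDi=\sinit$ is immediate. For the inductive step, write $\pathDi'=\pathDi\pathTransUniv{\act}\state'$ with $\state=\last{\pathDi}$ and split on whether $\state\in\PS$ or $\state\in\MS$. If $\state\in\PS$, then digitization preserves the transition, $\probPdig(\state,\act,\state')=\probP(\state,\act,\state')$, and by construction $\schedEval{\diOfSched}{\pathDi}{\act}$ is exactly the $\ProbModelScheduler{\ma}{\sched}(\cdot\mid\inducedPathsDi{\pathDi})$-average of $\schedEval{\sched}{\ppath}{\act}$; multiplying the resulting conditional identity against the inductive hypothesis closes this case. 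If $\state\in\MS$ (so $\act=\markovianAct$), I split further on whether $\state'=\state$ (self-loop) or $\state'\neq\state$ (exit). A self-loop step corresponds on $\ma$ to extending the induced sojourn at $\state$ by at least another $\digConstant$ time units, which has conditional probability $e^{-\rateAtState{\state}\digConstant}=\probPdig(\state,\markovianAct,\state)$ by memorylessness of the exponential distribution. An exit step corresponds to the sojourn ending within the next $\digConstant$ time units and moving to $\state'$, contributing $(1-e^{-\rateAtState{\state}\digConstant})\,\probP(\state,\markovianAct,\state')=\probPdig(\state,\markovianAct,\state')$, exactly the digitized transition probability.

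The main obstacle I expect is the Markovian step: I need to argue carefully that the exponential sojourn law at $\state$ decomposes into $\digConstant$-wide buckets indexed by the number of self-loops already present in $\pathDi$, and that this bucket decomposition factors out of the conditional measure on $\inducedPathsDi{\pathDi}$ independently of everything scheduled earlier. Once this memoryless decomposition is nailed down, the definition of $\diOfSched$ is tailored precisely to absorb the timing-dependent randomization of $\sched$ into a digital scheduler, so the probabilistic steps are routine and the inductive one-step probabilities compose multiplicatively along the path to yield the cylinder identity.
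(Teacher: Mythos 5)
Your proposal is correct and follows essentially the same route as the paper: the paper also decomposes $\eventuallyds[\altInterval]\goalStates$ into cylinders over the finite digital paths that first reach $\goalStates$ with a step count in $\altInterval$, and reduces the claim to a cylinder-level identity (its Lemma on induced cylinders of digital paths) proved by induction on path length, with the same PS/MS case split, the same use of the conditional-average definition of $\diOfSched$ in the probabilistic case, and the same appeal to memorylessness of the exponential sojourn time to recover $\probPdig(\state,\markovianAct,\state')$ in the Markovian case. Your choice to work with the preimage $\inducedPathsDi{\mathrm{Cyl}(\pathDi)}$ of the whole cylinder rather than with $\inducedPathsDi{\pathDi}$ itself correctly sidesteps the subtlety that digital paths ending in a Markovian self-loop have empty induced path sets, which the paper handles by introducing the separate notion of the induced cylinder $\cylDi{\pathDi}$.
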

\noindent Thus, induced \ds-bounded reachability on MAs can be computed on their digitization.
Next, we relate \ds-bounded and timed reachability on MAs, \ie,
 we quantify the maximum difference between time-bounded and $\ds$-bounded reachability probabilities.
\begin{example}
	\label{ex:bounded:dsBoundedPaths}
	Let $\ma$ be the MA given in Fig.~\ref{fig:bounded:dsBoundedPaths:ma}.
	We consider the well-formed time interval $\interval = [0, 5 \digConstant]$, yielding digitization step bounds $\diOfInterval = \{0, \dots, 5\}$.
	The digitization constant $\digConstant \in \RRgz$ remains unspecified in this example.
	Fig.~\ref{fig:bounded:dsBoundedPaths:paths} illustrates paths $\ppath_1$, $\ppath_2$, and $\ppath_3$ of $\ma$.
	We depict sojourn times by arrow length.
	A black dot indicates that the path stays at the current state for a multiple of $\digConstant$ time units.
	All depicted paths reach $\goalStates = \{s_3\}$ within $5 \digConstant$ time units.
	However, the digitizations of $\ppath_1$, $\ppath_2$, and $\ppath_3$ reach $\goalStates$ within $5$, $4$, and $6$ digitization steps, respectively. This yields
	\begin{displaymath}
		\ppath_1, \ppath_2 \in \eventually[\interval] \goalStates \cap \inducedPathsDi{\eventuallyds[\diOfInterval]\goalStates} \quad \text{ and } \quad \ppath_3 \in \eventually[\interval] \goalStates \setminus \inducedPathsDi{\eventuallyds[\diOfInterval]\goalStates}.
	\end{displaymath} 
\end{example}
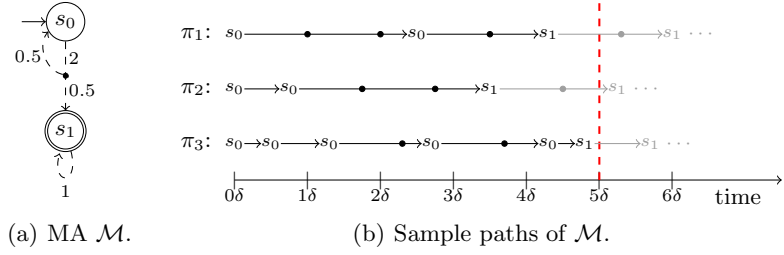
\begin{figure}[t]
	\centering
	\subfigure[MA $\ma$.]{
		\scalebox{\picscale}{
			\begin{tikzpicture}[scale=1]
    \draw[white, use as bounding box] (-0.8,-2.6) rectangle (0.8,1.1); 
    
    \node [state] (s0) at (0,0) {$\state_0$};
    \node [state, accepting] (s1) [on grid, below=15mm of s0] {$\state_1$};
    
    \initstateLeft{s0}

    \draw (s0) edge[markovian] node[pos=0.25, right=-2pt] {\scriptsize$2$} node[action] (s0markovian) {} node[pos=0.7, right=-2pt] {\scriptsize$0.5$} (s1);
    \draw (s0markovian) edge[markovian, bend left=60] node[left=-2pt] {\scriptsize$0.5$} (s0);
    \draw (s1) edge[markovian, loop below] node[below] {\scriptsize$1$} (s1);
\end{tikzpicture}
		}
		\label{fig:bounded:dsBoundedPaths:ma}
	}
	\subfigure[Sample paths of $\ma$.]{
		\scalebox{\picscale}{
			\begin{tikzpicture}[scale=1]
    
   	\draw (0,0) edge[->] (7.5,0);
    \foreach \n in {0,...,6}{
    	\node[] at (\n,0) {\scriptsize$\mid$};
    	\node[below] at (\n,0) {\scriptsize$\n\digConstant$};
   	}
    \node[below] at (6.9,0) {time};
    \draw (5,0) edge[thresholdColor, thick, dashed] (5,2.5);
    
    \node [inner sep=0pt, outer sep=0pt] (p1-1) at (0,2) {\scriptsize$\state_0$};
    \node [action] (p1-2) [on grid, right=1 of p1-1] {};
    \node [action] (p1-3) [on grid, right=1 of p1-2] {};
    \node [inner sep=0pt, outer sep=0pt] (p1-4) [on grid, right=0.5 of p1-3]  {\scriptsize$\state_0$};
    \node [action] (p1-5) [on grid, right=1 of p1-4] {};
    \node [inner sep=0pt, outer sep=0pt] (p1-6) [on grid, right=0.8 of p1-5]  {\scriptsize$\state_1$};
    \node [neutralColor, action] (p1-7) [on grid, right=1 of p1-6] {};
    \node [neutralColor, inner sep=0pt, outer sep=0pt] (p1-8) [on grid, right=0.7 of p1-7]  {\scriptsize$\state_1$};
    \draw[->] (p1-1) edge (p1-4)  (p1-4) edge (p1-6)   (p1-6) edge[neutralColor] (p1-8);
    \node[anchor=east]   at ($(p1-1) + (-0.2,0)$)  {$\ppath_1$:};
    \node[neutralColor, anchor=west]   at ($(p1-8) + (0.1,0)$)  {\scriptsize$\cdots$};
    

    \node [inner sep=0pt, outer sep=0pt](p2-1) at (0,1.25) {\scriptsize$\state_0$};
    \node [inner sep=0pt, outer sep=0pt] (p2-2) [on grid, right=0.75 of p2-1]  {\scriptsize$\state_0$};
    \node [action] (p2-3) [on grid, right=1 of p2-2] {};
    \node [action] (p2-4) [on grid, right=1 of p2-3] {};
    \node [inner sep=0pt, outer sep=0pt] (p2-5) [on grid, right=0.75 of p2-4]  {\scriptsize$\state_1$};
    \node [neutralColor, action] (p2-6) [on grid, right=1 of p2-5] {};
    \node [neutralColor, inner sep=0pt, outer sep=0pt] (p2-7) [on grid, right=0.75 of p2-6]  {\scriptsize$\state_1$};
    \draw[->] (p2-1) edge (p2-2)  (p2-2) edge (p2-5)   (p2-5) edge[neutralColor] (p2-7);
    \node[anchor=east]   at ($(p2-1) + (-0.2,0)$)  {$\ppath_2$:};
    \node[neutralColor, anchor=west]   at ($(p2-7) + (0.1,0)$)  {\scriptsize$\cdots$};
    
    \node [inner sep=0pt, outer sep=0pt](p3-1) at (0,0.5) {\scriptsize$\state_0$};
    \node [inner sep=0pt, outer sep=0pt] (p3-2) [on grid, right=0.5 of p3-1]  {\scriptsize$\state_0$};
    \node [inner sep=0pt, outer sep=0pt] (p3-3) [on grid, right=0.8 of p3-2]  {\scriptsize$\state_0$};
    \node [action] (p3-4) [on grid, right=1 of p3-3] {};
    \node [inner sep=0pt, outer sep=0pt] (p3-5) [on grid, right=0.4 of p3-4]  {\scriptsize$\state_0$};
    \node [, action] (p3-6) [on grid, right=1 of p3-5] {};
    \node [, inner sep=0pt, outer sep=0pt] (p3-7) [on grid, right=0.6 of p3-6]  {\scriptsize$\state_0$};
    \node [, inner sep=0pt, outer sep=0pt] (p3-8) [on grid, right=0.5 of p3-7]  {\scriptsize$\state_1$};
    \node [neutralColor, inner sep=0pt, outer sep=0pt] (p3-9) [on grid, right=0.9 of p3-8]  {\scriptsize$\state_1$};
    \draw[->] (p3-1) edge (p3-2)  (p3-2) edge (p3-3)   (p3-3) edge[] (p3-5) (p3-5) edge[] (p3-7) (p3-7) edge[] (p3-8) (p3-8) edge[neutralColor] (p3-9);
    \node[anchor=east]   at ($(p3-1) + (-0.2,0)$)  {$\ppath_3$:};
    \node[neutralColor, anchor=west]   at ($(p3-9) + (0.1,0)$)  {\scriptsize$\cdots$};
%
%
%
%
%
%
\end{tikzpicture}
		}
		\label{fig:bounded:dsBoundedPaths:paths}
	}
	\caption{MA $\ma$ and illustration of paths of $\ma$ (cf. Example~\ref{ex:bounded:dsBoundedPaths}).}
	\label{fig:bounded:dsBoundedPaths}
\end{figure}
Let $\rateMax = \max \{\rateAtState{\state} \mid \state \in \MS \}$ be the maximum exit rate of $\ma$.
For $\lowerTimeBound \neq 0$ define
\begin{align*}
\lowerErrorBound([\lowerTimeBound, \upperTimeBound]) &= \lowerErrorBound([\lowerTimeBound, \infty)) = 1 - (1+ \rateMax \digConstant)^{\lowerStepBound} \cdot e^{- \rateMax \lowerTimeBound}\ , \quad    \lowerErrorBound([0,\upperTimeBound)) = \lowerErrorBound([0,\infty]) = 0 ,  \\
\upperErrorBound([\lowerTimeBound,\upperTimeBound]) &=\underbrace{ 1 - (1+ \rateMax \digConstant)^{\upperStepBound} \cdot e^{- \rateMax \upperTimeBound}}_{=\upperErrorBound([0,\upperTimeBound])}  + \underbrace{1 - e^{-\rateMax \digConstant}}_{=\upperErrorBound([\lowerTimeBound, \infty))}\ , \text{ and }\   \upperErrorBound([0,\infty)) = 0 .
\end{align*}
$\lowerErrorBound(\interval)$ and $\upperErrorBound(\interval)$ approach $0$ for small digitization constants $\digConstant \in \RRgz$.

\begin{restatable}{proposition}{propositionApproxBoundedProb}
	\label{prop:bounded:approxBoundedProb}
	For MA $\ma$, scheduler $\sched \in \GMSched$, goal states $\goalStates \subseteq \States$, digitization constant $\digConstant \in \RRgz$ and time interval $\interval$
	\begin{displaymath}
		\intervalBoundedReachProbMa[]  \in
		\ProbModelScheduler{\ma}{\sched}(\inducedPathsDi{\eventuallyds[\interval] \goalStates})
		+
	 \Big[{-}\lowerErrorBound(\interval),\,  \upperErrorBound(\interval)\Big] 
	\end{displaymath}
\end{restatable}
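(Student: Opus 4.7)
The plan is to decompose
\[
\intervalBoundedReachProbMa[] \;=\; \ProbModelScheduler{\ma}{\sched}\bigl(\inducedPathsDi{\eventuallyds[\diOfInterval]\goalStates}\bigr) + \ProbModelScheduler{\ma}{\sched}(D^-) - \ProbModelScheduler{\ma}{\sched}(D^+),
\]
where $D^- := \eventually[\interval]\goalStates \setminus \inducedPathsDi{\eventuallyds[\diOfInterval]\goalStates}$ and $D^+ := \inducedPathsDi{\eventuallyds[\diOfInterval]\goalStates} \setminus \eventually[\interval]\goalStates$, and then to bound $\ProbModelScheduler{\ma}{\sched}(D^-) \le \upperErrorBound(\interval)$ and $\ProbModelScheduler{\ma}{\sched}(D^+) \le \lowerErrorBound(\interval)$ in parallel.

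First, I would establish a deterministic path-wise bound: along any $\ppath$, if $N$ denotes the number of digitization steps in the prefix of $\diOfPath$ corresponding to $\pref{\ppath}{n}$ and $m = |\{i<n \mid \ithStateOfPath{\ppath}{i} \in \MS\}|$ counts the Markovian transitions taken so far, then $(N-m)\digConstant \le \timeOfPath{\pref{\ppath}{n}} < N\digConstant$. This is immediate from the definition of digitization, since each Markovian sojourn $\ttime_i$ contributes $\waitingSteps[i]+1$ digital steps with $\waitingSteps[i]\digConstant \le \ttime_i < (\waitingSteps[i]+1)\digConstant$. From this, a path in $D^-$ is forced to take $m \ge N - \upperStepBound \ge 1$ Markovian transitions within real time $\le \upperTimeBound$, or else to fall into the half-open residual slot just below $\upperTimeBound$; dually, a path in $D^+$ must satisfy $N \ge \lowerStepBound$ while real time stays strictly below $\lowerTimeBound$, which again requires many Markovian jumps in a short window.

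Next, I would invoke a Poisson-type domination. Because every Markovian state has exit rate at most $\rateMax$ and the scheduler $\sched$ only resolves non-determinism at probabilistic states --- it has no influence over the exponential timing of Markovian sojourns --- the number of Markovian jumps fired along a trajectory by real time $t$ is stochastically dominated by a $\mathrm{Poisson}(\rateMax t)$ random variable, \emph{uniformly} in $\sched \in \GMSched$. Combining this domination with the structural bound and standard Poisson/binomial estimates reproduces precisely the factors in the error bounds: $(1+\rateMax\digConstant)^k$ upper-bounds the product over $k$ digital slots of width $\digConstant$, $1-e^{-\rateMax\digConstant}$ bounds the single-slot residual event at the upper endpoint, and $e^{-\rateMax\lowerTimeBound}$ or $e^{-\rateMax\upperTimeBound}$ is the standard survival probability through the bulk of the interval. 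Case analysis on whether $\interval$ is left-closed at $\lowerTimeBound>0$ and/or right-bounded then yields the three displayed formulas for $\lowerErrorBound$ and $\upperErrorBound$.

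The hardest part is uniformity in $\sched$. Theorem~\ref{thm:deterministicTADoesNotSuffice} shows that in the multi-objective setting optimal schedulers genuinely exploit sojourn times, so the convenient single-objective trick of reducing to a simpler scheduler class is unavailable. I would therefore carry out the probability computations by conditioning on the time-abstract skeleton $\taOf{\ppath}$ of a path and integrating separately over sojourn-time profiles: the skeleton is scheduler-dependent but carries no timing, while the sojourn-time profile is, conditionally on the skeleton, drawn from a product of exponentials whose distribution is independent of $\sched$. Applying the Poisson domination only to the latter factor delivers scheduler-independent bounds and yields the claimed containment.
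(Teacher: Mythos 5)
Your overall architecture matches the paper's: the same three-way decomposition into $\ProbModelScheduler{\ma}{\sched}(\inducedPathsDi{\eventuallyds[{\diOfInterval}] \goalStates})$ plus the two difference sets, the deterministic relation between digitization steps and elapsed time (Lemma~\ref{lem:app:digStepsBoundDuration}), and the identification of the key probabilistic estimate, namely bounding the probability of ``more than $\stepBound$ digitization steps within $\stepBound\digConstant$ time units'' uniformly over $\sched \in \GMSched$. However, the two central technical steps as you describe them do not go through. First, your mechanism for obtaining scheduler-uniformity is flawed: it is \emph{not} true that, conditionally on the time-abstract skeleton $\taOf{\ppath}$, the sojourn-time profile is a product of exponentials independent of $\sched$. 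A GM scheduler's action choice at a probabilistic state depends on the sojourn times observed so far, so conditioning on the skeleton (which records those choices) biases the sojourn-time law; e.g., for the scheduler of Example~\ref{ex:taOfScheduler}, conditioned on the skeleton ending with action $\act$, the sojourn time at $\state_0$ is an exponential truncated to $[0,1]$. This coupling between discrete choices and timing is precisely what breaks the single-objective argument (see the counterexample in App.~\ref{app:singleObj}) and is the reason the paper needs a new proof. A correct uniform bound requires either a path-wise coupling set up \emph{before} the scheduler acts, or, as the paper does, an induction over $\stepBound$ (Lemma~\ref{lem:app:bounded:upperBoundForDSBoundedPaths}) that explicitly re-anchors the scheduler after each step via shifted schedulers $\sched_\digConstant$ and $\schedShift{\sched}{\ppath}$, together with the time-shift estimate of Lemma~\ref{lem:app:LowerBoundForPathsWithBoundedStepsPlusTime}.

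Second, the reduction to a Poisson jump-count event is not valid as stated. Digitization steps count Markovian \emph{jumps} plus self-loops earned by sojourning multiples of $\digConstant$, and the self-loop ``anniversaries'' are anchored at state-entry times rather than at the absolute $\digConstant$-grid. Consequently ``at most one Markovian jump per slot of width $\digConstant$'' does not imply ``at most $\stepBound$ digitization steps by time $\stepBound\digConstant$'': with $\stepBound=2$, a jump at time $0.9\digConstant$ followed by a jump at time $1.9\digConstant$ gives one jump per slot but three digitization steps ($\waitingSteps[1]=1$ for the second sojourn). So the factor $(1+\rateMax\digConstant)^{\stepBound}e^{-\rateMax\digConstant\stepBound}$ cannot be read off as a product of independent per-slot Poisson events; the paper obtains it by the bespoke induction just mentioned. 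Finally, you compress the interval case analysis to a sentence, but relating the difference sets to the step-count events is itself nontrivial (for $[\lowerTimeBound,\upperTimeBound]$ the set $\eventually[\interval] \goalStates \setminus \inducedPathsDi{\eventuallyds[{\diOfInterval}] \goalStates}$ must be split into paths with too many steps by $\upperTimeBound$ and paths taking a transition in $[\lowerTimeBound,\lowerTimeBound+\digConstant)$, which is where the extra summand $1-e^{-\rateMax\digConstant}$ in $\upperErrorBound$ comes from). As it stands, the proposal identifies the right quantities but does not supply a working proof of either key estimate.
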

\noindent 
\begin{figure}[b]
	\centering
	\scalebox{\picscale}{
		\begin{tikzpicture}
\draw[white, use as bounding box] (-6,-0.2) rectangle (6.4,0.75); 
\draw[fill=orange, fill opacity=0.4] (-1.2,0) ellipse (4.0cm and 0.5cm);
\draw[fill=blue, fill opacity=0.3] (1.2,0) ellipse (4.0cm and 0.5cm);
\node at (-3.9,0) {$\eventually[\interval] \goalStates \setminus \inducedPathsDi{\eventuallyds[\diOfInterval]\goalStates}$};
\node at (3.9,0) {$\inducedPathsDi{\eventuallyds[\diOfInterval]\goalStates} \setminus \eventually[\interval] \goalStates$};
\node at (-0,0) {$\eventually[\interval] \goalStates \cap \inducedPathsDi{\eventuallyds[\diOfInterval]\goalStates}$};
\node[pin=20:{$\inducedPathsDi{\eventuallyds[\diOfInterval]\goalStates}$}] at (4.6,0.2) {};
\node[pin=160:{$\eventually[\interval]\goalStates$}] at (-4.6,0.2) {};
\end{tikzpicture}
	}
	\caption{Illustration of the sets $\eventually[\interval] \goalStates$ and $\inducedPathsDi{\eventuallyds[\diOfInterval] \goalStates}$. }
	\label{fig:bounded:blobs}
\end{figure}
\paragraph{Proof (sketch).}
The sets $\eventually[\interval] \goalStates$ and $\inducedPathsDi{\eventuallyds[\diOfInterval] \goalStates}$ are illustrated in Fig.~\ref{fig:bounded:blobs}.
We have
\begin{align*}
\label{eq:bounded:relationBetweenTimedAndDSBoundedProb}
\begin{split}
\ProbModelScheduler{}{\sched}(\eventually[\interval] \goalStates) =
\ProbModelScheduler{}{\sched}(\inducedPathsDi{\eventuallyds[{\diOfInterval}] \goalStates}) 
+	\ProbModelScheduler{}{\sched}(\eventually[\interval] \goalStates \setminus \inducedPathsDi{\eventuallyds[{\diOfInterval}] \goalStates})
-	\ProbModelScheduler{}{\sched}(\inducedPathsDi{\eventuallyds[{\diOfInterval}] \goalStates} \setminus \eventually[\interval] \goalStates).
\end{split}
\end{align*}
One then shows
\begin{displaymath}
	\ProbModelScheduler{\ma}{\sched}(\eventually[\interval] \goalStates \setminus \inducedPathsDi{\eventuallyds[{\diOfInterval}] \goalStates}) \le \upperErrorBound(\interval) 
	 \text{\quad and \quad} 
	\ProbModelScheduler{\ma}{\sched}(\inducedPathsDi{\eventuallyds[{\diOfInterval}] \goalStates} \setminus \eventually[\interval] \goalStates) \le \lowerErrorBound(\interval).
\end{displaymath}
To this end, show for any $\stepBound \in \NN$ that   $1- (1+ \rateMax \digConstant)^{\stepBound} \cdot e^{- \rateMax \digConstant \stepBound}$
  is an upper bound for the probability of paths that induce more then $\stepBound$ digitization steps within the the first $\stepBound \digConstant$ time units.
Then, this probability can be related to the probability of paths in $\eventually[\interval] \goalStates \setminus \inducedPathsDi{\eventuallyds[{\diOfInterval}] \goalStates}$ and $\inducedPathsDi{\eventuallyds[{\diOfInterval}] \goalStates} \setminus \eventually[\interval] \goalStates$, respectively.

From Prop.~\ref{pro:bounded:dsBoundedProbEqual} and Prop.~\ref{prop:bounded:approxBoundedProb}, we immediately have Cor.~\ref{corr:bounded:approxBoundedProb}, which ensures that the value $\intervalBoundedReachProbMa[]$ can be approximated with arbitrary precision by computing $\ProbModelScheduler{\dma}{\diOfSched}(\eventuallyds[\diOfInterval] \goalStates)$ for a sufficiently small $\digConstant$.
\begin{restatable}{corollary}{corrApproxBoundedProb}
	\label{corr:bounded:approxBoundedProb}
	For MA $\ma$, scheduler $\sched \in \GMSched$, goal states $\goalStates \subseteq \States$, digitization constant $\digConstant \in \RRgz$ and time interval $\interval$
	\begin{displaymath}
		\intervalBoundedReachProbMa[]  \in
		\ProbModelScheduler{\dma}{\diOfSched}(\eventuallyds[\diOfInterval] \goalStates)
		+
	 \Big[{-}\lowerErrorBound(\interval),\,  \upperErrorBound(\interval)\Big] 
	\end{displaymath}
\end{restatable}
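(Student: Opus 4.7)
The plan is to simply chain the two preceding results. Proposition~\ref{prop:bounded:approxBoundedProb} bounds the timed reachability probability $\intervalBoundedReachProbMa[]$ on $\ma$ in terms of the probability on $\ma$ of the induced set $\inducedPathsDi{\eventuallyds[\diOfInterval] \goalStates}$, up to the additive errors $-\lowerErrorBound(\interval)$ and $\upperErrorBound(\interval)$. Proposition~\ref{pro:bounded:dsBoundedProbEqual}, applied with $\altInterval = \diOfInterval$ (which is a set of consecutive natural numbers since $\interval$ is well-formed), identifies this quantity with the $\ds$-bounded reachability probability of $\dma$ under the digitized scheduler $\diOfSched$:
\begin{displaymath}
\ProbModelScheduler{\ma}{\sched}(\inducedPathsDi{\eventuallyds[\diOfInterval] \goalStates}) \;=\; \ProbModelScheduler{\dma}{\diOfSched}(\eventuallyds[\diOfInterval] \goalStates).
\end{displaymath}

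I would first verify that $\diOfInterval$ indeed qualifies as an interval of consecutive naturals for Proposition~\ref{pro:bounded:dsBoundedProbEqual}: by the well-formedness assumption on $\interval$, the endpoints $\lowerStepBound$ and $\upperStepBound$ are natural, and the definition of $\diOfInterval$ (either $\{\lowerStepBound, \dots, \upperStepBound\}$ or $\{\lowerStepBound, \lowerStepBound{+}1, \dots\}$, possibly with $0$ removed) is then a set of consecutive naturals. Next I would substitute the equality from Proposition~\ref{pro:bounded:dsBoundedProbEqual} into the inclusion given by Proposition~\ref{prop:bounded:approxBoundedProb}, which directly yields
\begin{displaymath}
\intervalBoundedReachProbMa[] \;\in\; \ProbModelScheduler{\dma}{\diOfSched}(\eventuallyds[\diOfInterval] \goalStates) + \bigl[-\lowerErrorBound(\interval),\, \upperErrorBound(\interval)\bigr].
\end{displaymath}

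There is no real obstacle here — all the technical content (the bound via the quantities $\lowerErrorBound(\interval), \upperErrorBound(\interval)$ and the transfer of measure between $\ma$ and its digitization) has been absorbed into the two cited propositions. The only thing worth being careful about is the coherence of the notation $\eventuallyds[\interval]$ versus $\eventuallyds[\diOfInterval]$ across the two statements; once this is read uniformly as the $\ds$-bounded reachability over $\diOfInterval$, the corollary is a one-line substitution.
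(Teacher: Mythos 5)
Your proposal is correct and matches the paper exactly: the paper derives this corollary by the same one-line combination of Proposition~\ref{pro:bounded:dsBoundedProbEqual} (with $\altInterval = \diOfInterval$) and Proposition~\ref{prop:bounded:approxBoundedProb}. Your observations that $\diOfInterval$ is a set of consecutive naturals and that the notation in the two propositions must be read uniformly as $\ds$-bounded reachability over $\diOfInterval$ are the only points requiring care, and you handle both.
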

\noindent 
This generalizes existing results~\cite{HatefiH12,DBLP:journals/corr/GuckHHKT14} that only consider schedulers which maximize (or minimize) the corresponding probabilities.
More details are given in \tech{app:singleObj}.

Next, we lift Cor.~\ref{corr:bounded:approxBoundedProb} to multiple objectives $\obj = \objTuple$.
We define the satisfaction of a \emph{timed} reachability objective $\intervalBoundedReachObj$ for the digitization $\dma$ as
$
\dma, \sched \models \intervalBoundedReachObj \rel[i] \pointi{i} \text{ iff } \ProbModelScheduler{\dma}{\sched}(\eventuallyds[\diOfInterval] \goalStates)  \rel[i]\pointi{i}
$.
This allows us to consider notations like $\achievabilityQ{\dma}{\obj \rel \point}$, where $\obj$ contains one or more timed reachability objectives.
For a point $\point  = \pointTuple \in \RR^\numObjectives$ we consider the hyperrectangle
\begin{displaymath}
\errorBoundObjPoint = \bigtimes_{i =1}^{\numObjectives}  \big[ \pointi{i} - \lowerErrorBound[i] ,\, \pointi{i} + \upperErrorBound[i] \big] \subseteq \RR^\numObjectives\ \text{, where } \upperErrorBound[i] =
\begin{cases}
\upperErrorBound(\interval) & \text{if } \obj[i] = \intervalBoundedReachObj\\
0                           & \text{if } \obj[i] = \expReachRewObj
\end{cases}
\end{displaymath}
and $\lowerErrorBound[i]$ is defined similarly.
The next example shows how the set of achievable points of $\ma$ can be approximated using achievable points of $\dma$.
\begin{example}
	\label{ex:ma:bounded:multi:overUnderApprox}
Let $\obj[] = ( \intervalBoundedReachObj[1], \intervalBoundedReachObj[2])$ be two timed reachability objectives for an MA $\ma$ with digitization $\dma$ such that
$\lowerErrorBound[1] =   0.13$, $\upperErrorBound[1] = 0.22$,
$\lowerErrorBound[2] = 0.07$, and $\upperErrorBound[2] =  0.15$.
The blue rectangle in Fig.~\ref{fig:ma:bounded:multiObjApprox:rectangle} illustrates the set $\errorBoundObjPoint$ for the point $\point = (0.4, 0.3)$.
Assume $\achievabilityQ{\dma}{\obj \rel \point}$ holds for threshold relations ${\rel} = \{\ge, \ge\}$, i.e., $\point$ is achievable for the digitization $\dma$.
	From Cor.~\ref{corr:bounded:approxBoundedProb}, we infer that $\errorBoundObjPoint$ contains at least one point $\point'$ that is achievable for $\ma$.
	Hence, the bottom left corner point of the rectangle is  achievable for $\ma$.
	This holds for any rectangle $\errorBoundObjPointUniv{\obj}{\altpoint}$ with $\altpoint \in \achievablePoints$, where $\achievablePoints$ 
	is the set of achievable points of $\dma$ denoted by
	the gray area\footnote{In the figure, $\underApprox$ partly overlaps  $\achievablePoints$, i.e., the green area also belongs to $\achievablePoints$.} in Fig.~\ref{fig:ma:bounded:multiObjApprox:coarse}.
	It follows that any point in $\underApprox$ (depicted by the green area) is achievable for $\ma$.
	On the other hand, an achievable point of $\ma$ has to be contained in a set $\errorBoundObjPointUniv{\obj}{\altpoint}$ for at least one $\altpoint \in \achievablePoints$.
	The red area depicts the points $\RR^\numObjectives \setminus \overApprox$ for which this is not the case, i.e., points that are not achievable for $\ma$.
	The digitization constant $\digConstant$ controls the accuracy of the resulting approximation.
	Fig.~\ref{fig:ma:bounded:multiObjApprox:fine} depicts a possible result when a smaller digitization constant $\altDigConstant < \digConstant$ is considered.
\end{example}
	\begin{figure}[t]
	\centering
	\subfigure[The set $\errorBoundObjPoint$.]{
		\scalebox{\picscale}{
			\begin{tikzpicture}[scale=\plotscale]
\draw[white, use as bounding box] (0.2,-0.4) rectangle (4.5,3.45); 
\begin{axis}[
axis equal image,
xmin=0,
xmax = 0.8,
width=0.52\textwidth,
ymin=0,
ymax=0.6,
axis on top,
]

\addplot[fill=importantArea, very thin] coordinates  {(0.27, 0.23) (0.62,0.23) (0.62, 0.45)   (0.27, 0.45)} -- cycle ;

\plotPoint{(0.4,0.3)}{$\point$}

\draw[dashed] (axis cs:0.27,0.0) -- (axis cs:0.27,0.23);
\draw[dashed] (axis cs:0.4,0.0) -- (axis cs:0.4,0.23);
\draw[dashed] (axis cs:0.62,0.0) -- (axis cs:0.62,0.23);

\draw[thick, decoration={brace,raise=1pt},decorate] (axis cs:0.27,0.0) -- node[above=3pt] {$\lowerErrorBound[1]$} (axis cs:0.4,0.0); 
\draw[thick, decoration={brace,raise=1pt},decorate] (axis cs:0.40,0.0) -- node[above=3pt] {$\upperErrorBound[1]$} (axis cs:0.62,0.0);

\draw[thick, decoration={brace,raise=1pt, mirror},decorate] (axis cs:0.0,0.23) -- node[right=3pt] {$\lowerErrorBound[2]$} (axis cs:0,0.3); 
\draw[thick, decoration={brace,raise=1pt, mirror},decorate] (axis cs:0,0.3) -- node[right=3pt] {$\upperErrorBound[2]$} (axis cs:0,0.45); 
\draw[dashed] (axis cs:0,0.23) -- (axis cs:0.27,0.23);
\draw[dashed] (axis cs:0,0.3) -- (axis cs:0.27,0.3);
\draw[dashed] (axis cs:0,0.45) -- (axis cs:0.27,0.45);

\end{axis}
\end{tikzpicture}
		}
		\label{fig:ma:bounded:multiObjApprox:rectangle}
	}
	\subfigure[Coarse approximation.]{
		\scalebox{\picscale}{
			\begin{tikzpicture}[scale=\plotscale]
\draw[white, use as bounding box] (-0.2,-0.4) rectangle (4.5,3.45); 
\begin{axis}[
axis equal image,
xmin=0,
xmax = 0.8,
width=0.52\textwidth,
ymin=0,
ymax=0.6,
axis on top,
axis background/.style={fill=badArea},
]

\addplot[fill=white, very thin] coordinates  {(0.72, 0) (0.72,0.4) (0.42, 0.55)  (0.0, 0.55) (0,0)} -- cycle ;
\addplot[fill=neutralArea, very thin] coordinates  {(0.5, 0) (0.5,0.25) (0.2, 0.4)  (0.0, 0.4) (0,0)} -- cycle ;
\addplot[fill=goodArea, very thin] coordinates  {(0.37, 0) (0.37,0.18) (0.07, 0.33) (0.0, 0.33)   (0, 0)} -- cycle ;

\addplot[lineColor, very thick] coordinates  {(0.27, 0.23) (0.62,0.23) (0.62, 0.45)   (0.27, 0.45)} -- cycle ;
\plotPoint{(0.4,0.3)}{$\point$}

\node at (axis cs:0.15,0.17) {$\underApprox$};
\node at (axis cs:0.435,0.17) {$\achievablePoints$};
\node at (axis cs:0.68,0.52) {$\RR^2 \setminus \overApprox$};

\end{axis}
\end{tikzpicture}
		}
		\label{fig:ma:bounded:multiObjApprox:coarse}
	}
	\subfigure[Refined approximation.]{
		\scalebox{\picscale}{
			\begin{tikzpicture}[scale=\plotscale]
\draw[white, use as bounding box] (-0.2,-0.4) rectangle (4.12,3.45); 
\begin{axis}[
axis equal image,
xmin=0,
xmax = 0.8,
width=0.52\textwidth,
ymin=0,
ymax=0.6,
axis on top,
axis background/.style={fill=badArea},
]

\addplot[fill=white, very thin] coordinates  {(0.69, 0) (0.69,0.2) (0.65,0.36) (0.55, 0.46) (0.43, 0.5)  (0.3, 0.51) (0,0.51) (0,0)} -- cycle ;
\addplot[fill=neutralArea, very thin] coordinates  {(0.59, 0) (0.59,0.12) (0.55,0.28) (0.45, 0.38) (0.33, 0.42)  (0.2, 0.43) (0,0.43) (0,0)} -- cycle ;
\addplot[fill=goodArea, very thin] coordinates  {(0.54, 0) (0.54,0.09) (0.5,0.25) (0.4, 0.35) (0.28, 0.39)  (0.15, 0.4) (0,0.4) (0,0)} -- cycle ;

\addplot[lineColor, very thick] coordinates  {(0.43, 0.32) (0.58,0.32) (0.58, 0.43)   (0.43, 0.43)} -- cycle ;


\plotPoint{(0.48,0.35)}{$\tilde{\point}$}
\node at (axis cs:0.3,0.17) {$\altUnderApprox$};
\node at (axis cs:0.55,0.17) {$\tilde{\achievablePoints}$};
\node at (axis cs:0.65,0.52) {$\RR^2 \setminus \altOverApprox$};

\end{axis}
\end{tikzpicture}
		}
		\label{fig:ma:bounded:multiObjApprox:fine}
	}
	\vspace{-3mm}
	\caption{Approximation of achievable points.}
	\label{fig:ma:bounded:multiObjApprox}
	\vspace{-3mm}
\end{figure}
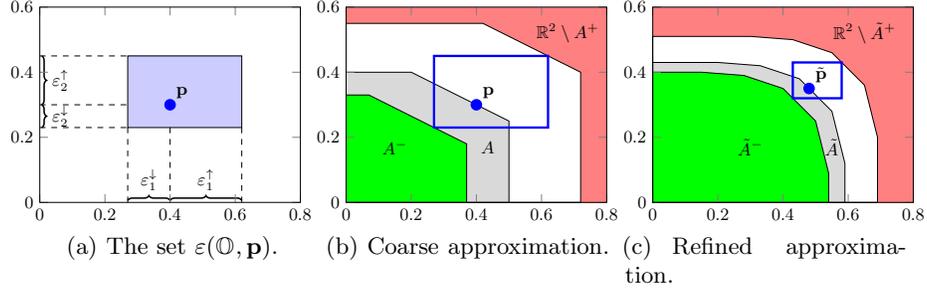

The observations from the example above are formalized in the following theorem.
The theorem also covers unbounded reachability objectives by considering the time interval $\interval = [0,\infty)$.
For expected reward objectives of the form $\expReachRewObj$ it can be shown that $\expReachRew{\ma}{\sched}{\rewFct[\rewFctIndex]} = \expReachRew{\dma}{\diOfSched}{\rewDmaFct[\rewFctIndex]}$.
This claim is similar to Proposition~\ref{pro:taSchedSameExpReachRew} and can be shown analogously. 
This enables multi-objective model checking of MAs with timed reachability objectives.
\begin{restatable}{theorem}{theoremMultiBounded}
	\label{thm:bounded:multiBounded}
	Let $\ma$ be an MA with digitization $\dma$.
	Furthermore, let  $\obj$ be (un)timed reachability or expected reward objectives with threshold relations ${\rel}$ and $|\obj| = d$.
	It holds that
	$\underApprox \subseteq \{\point \in \RR^\numObjectives \mid \achievabilityQ{\ma}{\obj \rel \point} \} \subseteq \overApprox$ with:
	\begin{align*}
	\underApprox &= \{\point' \in \RR^\numObjectives \mid \forall \point \in \RR^\numObjectives \colon \point' \in \errorBoundObjPointUniv{\obj}{\point} \text{ implies } \achievabilityQ{\dma}{\obj \rel \point}  \} \text{ and} \\
	\overApprox &= \{\point' \in \RR^\numObjectives \mid \exists \point \in \RR^\numObjectives \colon \point' \in \errorBoundObjPointUniv{\obj}{\point} \text{ and } \achievabilityQ{\dma}{\obj \rel \point}\} .
	\end{align*}\end{restatable}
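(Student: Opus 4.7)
The plan is to prove the two inclusions separately by leveraging the digitization map $\di$ as a bridge between schedulers of $\ma$ and $\dma$, together with Corollary~\ref{corr:bounded:approxBoundedProb} and the straightforward expected-reward analog of Proposition~\ref{pro:taSchedSameExpReachRew} (for which the error terms $\lowerErrorBound[i]$ and $\upperErrorBound[i]$ both vanish, matching the definition of $\errorBoundObjPointUniv{\obj}{\point}$). For the over-approximation I would start from a witness scheduler on $\ma$ and push it forward to $\dma$; for the under-approximation I would start from a witness scheduler on $\dma$ and lift it to $\ma$.

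\emph{Over-approximation.} Given $\sched \in \GMSched[\ma]$ with $\ma, \sched \models \obj \rel \point'$, I would consider the digitization $\diOfSched$ on $\dma$. Let $u_i$ and $v_i$ denote the values of $\obj[i]$ induced by $\sched$ on $\ma$ and by $\diOfSched$ on $\dma$, respectively. Corollary~\ref{corr:bounded:approxBoundedProb} (or its expected-reward analog) gives $u_i \in [v_i - \lowerErrorBound[i],\, v_i + \upperErrorBound[i]]$. I would then set $\pointi{i} = \pointi{i}' - \upperErrorBound[i]$ when $\rel[i] \in \{\ge, >\}$ and $\pointi{i} = \pointi{i}' + \lowerErrorBound[i]$ when $\rel[i] \in \{\le, <\}$. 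A direct check places $\pointi{i}'$ at the relevant boundary of $[\pointi{i} - \lowerErrorBound[i], \pointi{i} + \upperErrorBound[i]]$, so $\point' \in \errorBoundObjPointUniv{\obj}{\point}$, and $u_i \rel[i] \pointi{i}'$ together with the corollary yields $v_i \rel[i] \pointi{i}$ coordinate-wise (for instance $v_i \ge u_i - \upperErrorBound[i] \ge \pointi{i}' - \upperErrorBound[i] = \pointi{i}$ in the $\ge$ case). Hence $\diOfSched$ witnesses $\achievabilityQ{\dma}{\obj \rel \point}$, placing $\point'$ into $\overApprox$.

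\emph{Under-approximation.} Given $\point' \in \underApprox$, I would pick $\pointi{i} = \pointi{i}' + \lowerErrorBound[i]$ when $\rel[i] \in \{\ge, >\}$ and $\pointi{i} = \pointi{i}' - \upperErrorBound[i]$ when $\rel[i] \in \{\le, <\}$, so that $\point' \in \errorBoundObjPointUniv{\obj}{\point}$. By the defining property of $\underApprox$, there is a scheduler $\sched'$ of $\dma$ with $\dma, \sched' \models \obj \rel \point$. The crux is to lift $\sched'$ back to a scheduler $\sched \in \GMSched[\ma]$ by defining $\schedEval{\sched}{\ppath}{\act} = \schedEval{\sched'}{\diOfPath}{\act}$ for every $\ppath \in \FPaths[\ma]$ with $\last{\ppath} \in \PS$. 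Since $\sched$ is constant on each fibre $\inducedPathsDi{\pathDi}$, the integral defining the digitization of a scheduler collapses, so $\diOfSched = \sched'$ and the value $v_i$ achieved by $\sched'$ on $\dma$ coincides with that achieved by $\diOfSched$. Applying Corollary~\ref{corr:bounded:approxBoundedProb} to $\sched$ and propagating the inequality (e.g.\ $u_i \ge v_i - \lowerErrorBound[i] \ge \pointi{i} - \lowerErrorBound[i] = \pointi{i}'$ for $\rel[i] = \ge$) yields $u_i \rel[i] \pointi{i}'$ for every coordinate, so $\ma, \sched \models \obj \rel \point'$.

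The main obstacle is the lifting step in the under-approximation: one must exhibit an $\ma$-scheduler whose digitization reproduces a prescribed $\dma$-scheduler, verify measurability of the map $\ppath \mapsto \diOfPath$, and check that the error bounds compose correctly with the chosen translation $\point' \mapsto \point$. The mixed-objective bookkeeping is conceptually simple but notationally fiddly, since the entries of $\point$ are shifted differently depending on both the relation $\rel[i]$ and the type of $\obj[i]$ (timed reachability vs.\ untimed/expected reward, with zero error in the latter cases). Strict inequalities $<, >$ pose no extra difficulty because the enclosure in Corollary~\ref{corr:bounded:approxBoundedProb} is non-strict, so strict inequalities at either level transfer intact to the other.
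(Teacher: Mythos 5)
Your proposal is correct and follows essentially the same route as the paper's proof: the same point translation ($\pointi{i}' = \pointi{i} - \lowerErrorBound[i]$ resp.\ $\pointi{i}' = \pointi{i} + \upperErrorBound[i]$ in the $\ge$ case), the same lifting of a $\dma$-scheduler to an $\ma$-scheduler via $\schedEval{\sched}{\ppath}{\act} = \schedEval{\sched'}{\diOfPath}{\act}$ with the observation that $\diOfSched = \sched'$, and the same coordinate-wise application of Corollary~\ref{corr:bounded:approxBoundedProb} (with zero error for untimed and expected-reward objectives). The only difference is that you spell out all four threshold relations explicitly, whereas the paper treats $\ge$ and declares the remaining cases analogous.
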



  \section{Experimental Evaluation}
\label{sec:evaluation}

\paragraph{Implementation.}
We implemented multi-objective model checking of MAs into \storm~\cite{DBLP:journals/corr/DehnertJK016}.
The input model is given in the \prism language\footnote{We slightly extend the \prism language in order to describe MAs.} and translated into a sparse representation.
For MA $\ma$, the implementation performs a multi-objective analysis on the underlying MDP $\umdp$ or a digitization $\dma$ and infers (an approximation of) the achievable points of $\ma$ by exploiting the results from Sect.~\ref{sec:ma}.
For computing the achievable points of $\umdp$ and $\dma$, we apply the approach of \cite{ForejtKPatva12}. 
It repeatedly checks weighted combinations of the objectives (by means of \emph{value iteration}~\cite{Put94} -- a standard technique in single-objective MDP model checking) to refine an approximation of the set of achievable points.
This procedure is extended as follows. Full details can be found in \cite{quatmannMastersThesisMultiObjMA}.
\begin{itemize}
	\item We support $\ds$-bounded reachability objectives by combining the approach  of \cite{ForejtKPatva12} (which supports step-bounded reachability on MDPs) with techniques from single-objective MA analysis \cite{HatefiH12}.
	Roughly, we reduce $\ds$-bounded reachability to untimed reachability by storing the digitized time-epoch (i.e., the current number of digitization steps) into the state space.
	A blow-up of the resulting model is avoided by considering each time-epoch separately.
	\item In contrast to \cite{ForejtKPatva12}, we allow a simultaneous analysis of minimizing and maximizing expected reward objectives. This is achieved by performing additional preprocessing steps that comprise an analysis of end components. 
\end{itemize}

\noindent The source code including all material to reproduce the experiments is available at \url{http://www.stormchecker.org/benchmarks.html}.

\paragraph{Setup.}
Our implementation uses a single core (2GHz) of a 48-core HP BL685C G7 limited to 20GB RAM.
The timeout (TO) is two hours.
For a model, a set of objectives, and a precision $\eta \in \RRgz$, we measure the time to compute an $\eta$-approximation\footnote{An $\eta$-approximation of $\achievablePoints \subseteq \RR^\numObjectives$ is given by $\underApprox, \overApprox \subseteq \RR^\numObjectives$ with $\underApprox \subseteq \achievablePoints \subseteq \overApprox$ and for all $\point \in \overApprox$ exists a $\altpoint \in \underApprox$ such that the distance between $\point$ and $\altpoint$ is at most $\eta$.} of the set of achievable points.
This set-up coincides with Pareto queries as discussed in~\cite{ForejtKPatva12}.
The digitization constant $\digConstant$ is chosen heuristically such that recalculations with smaller constants $\altDigConstant < \digConstant$ are avoided.
We set the precision for value-iteration to $\varepsilon = 10^{-6}$.
We use classical value iteration; the use of improved algorithms~\cite{HaddadM14} is left for future work.

\paragraph{Results for MAs.}
\begin{table}[tb]
	\centering
	\caption{Experimental results for multi-objective MAs.}
	\vspace{-2mm}
	\label{tab:maRes}
	\setlength{\tabcolsep}{3pt}
\scriptsize
\begin{tabular}{crr|rr|rr|rrrr}
\multicolumn{3}{c|}{benchmark}                                        & \multicolumn{2}{c|}{$(\eventually, \eR, \eventually[\interval])$}                         & \multicolumn{2}{c|}{$(\eventually, \eR, \eventually[\interval])$}                         & \multicolumn{2}{c|}{$(\eventually, \eR, \eventually[\interval])$}                           & \multicolumn{2}{c}{$(\eventually, \eR, \eventually[\interval])$}                          \\
N(-K) & \multicolumn{1}{c}{\#states} & \multicolumn{1}{c|}{$\log_{10}(\eta)$} & \multicolumn{1}{c}{pts} & \multicolumn{1}{c|}{time} & \multicolumn{1}{c}{pts} & \multicolumn{1}{c|}{time} & \multicolumn{1}{c}{pts} & \multicolumn{1}{c|}{time}   & \multicolumn{1}{c}{pts} & \multicolumn{1}{c}{time}  \\ \hline
\multicolumn{3}{c|}{\textbf{job scheduling}}                       &\multicolumn{2}{c|}{$(0,3,0)$}                     &\multicolumn{2}{c|}{$(0,1,1)$}                     &\multicolumn{2}{c|}{$(1,3,0)$}                       &\multicolumn{2}{c}{$(1,1,2)$} \\ \hline
\multirow{2}{*}{10-2}    & \multirow{2}{*}{12\,554    }   & $-2$   & 9                       & 1.8                     & 9                       & 41                      & 15                    & \multicolumn{1}{r|}{435}    & 16             & \multicolumn{1}{r}{2\,322} \\
                         &                                & $-3$   & 44                      & 128                     & 21                      & 834                     &                  \TO{c|}{76}{0.002}                 &              \TO{c}{10}{0.013}              \\
\multirow{2}{*}{12-3}    & \multirow{2}{*}{116\,814   }   & $-2$   & 11                      & 42                      & 9                       & 798                     & 21                    & \multicolumn{1}{r|}{2\,026} &               \TO{c}{7}{0.023}              \\
                         &                                & $-3$   & 53                      & 323                     &                \TO{c|}{8}{0.005}                  &                   \TO{c|}{75}{0.003}                &               \TO{c}{2}{$\infty$}           \\
\multirow{2}{*}{17-2}    & \multirow{2}{*}{$4.6\cdot 10^6$} & $-2$ & 14                      & 1\,040                  &                \TO{c|}{1}{$\infty$}               & 22                    & \multicolumn{1}{r|}{4\,936} &               \TO{c}{2}{$\infty$}           \\
                         &                                & $-3$   & 58                      & 2\,692                  &                \TO{c|}{1}{$\infty$}               &                   \TO{c|}{68}{0.002}                &               \TO{c}{2}{$\infty$}           \\ \hline
\multicolumn{3}{c|}{\textbf{polling}}                     & \multicolumn{2}{c|}{$(0,2,0)$}                             & \multicolumn{2}{c|}{$(0,4,0)$}                    & \multicolumn{2}{c|}{$(0,0,2)$}                      & \multicolumn{2}{c}{$(0,2,2)$}               \\ \hline
\multirow{2}{*}{3-2}     & \multirow{2}{*}{1\,020  }      & $-2$   & 4                       & 0.3                     & 5                & 0.6                            & 3                    & \multicolumn{1}{r|}{130}     & 12             & \multicolumn{1}{r}{669}    \\
                         &                                & $-3$   & 4                       & 0.3                     & 5                & 0.8                            & 7                    & \multicolumn{1}{r|}{3\,030}  &              \TO{c}{16}{0.004}              \\
\multirow{2}{*}{3-3}     & \multirow{2}{*}{9\,858  }      & $-2$   & 5                       & 1.3                     & 8                & 23                             & 6                    & \multicolumn{1}{r|}{2\,530}  &              \TO{c}{17}{0.034}              \\
                         &                                & $-3$   & 6                       & 2.0                     & 19               & 3\,199                         &                  \TO{c|}{2}{0.113}                  &               \TO{c}{3}{$\infty$}           \\
\multirow{2}{*}{4-4}     & \multirow{2}{*}{827\,735}      & $-2$   & 10                      & 963                     & 20               & 4\,349                         &                  \TO{c|}{0}{$\infty$}               &               \TO{c}{2}{$\infty$}           \\
                         &                                & $-3$   & 11                      & 1\,509                  &               \TO{c|}{66}{0.001}                  &                  \TO{c|}{0}{$\infty$}               &               \TO{c}{2}{$\infty$}           \\ \hline
\multicolumn{3}{c|}{\textbf{stream}}                      & \multicolumn{2}{c|}{$(0,2,0)$}                             & \multicolumn{2}{c|}{$(0,1,1)$}                    & \multicolumn{2}{c|}{$(0,0,2)$}                      & \multicolumn{2}{c}{$(0,2,1)$}               \\ \hline
\multirow{2}{*}{30  }    & \multirow{2}{*}{1\,426     }   & $-2$   & 20                      & 0.9                     & 16                      & 90                      &  16                  &  \multicolumn{1}{r|}{55}     & 26             & \multicolumn{1}{r}{268}    \\
                         &                                & $-3$   & 51                      & 8.8                     & 46                      & 2\,686                  &  38                  & \multicolumn{1}{r|}{1\,341}  &            \TO{c}{91}{0.001}                \\ 
\multirow{2}{*}{250 }    & \multirow{2}{*}{94\,376    }   & $-2$   & 31                      & 50                      & 15                      & 5\,830                  &  16                  & \multicolumn{1}{r|}{4\,050}  &            \TO{c}{13}{0.036}                \\
                         &                                & $-3$   & 90                      & 184                     &                        \TO{c|}{2}{0.980}          &                         \TO{c|}{3}{0.094}           &            \TO{c}{2}{$\infty$}              \\
\multirow{2}{*}{1000}    & \multirow{2}{*}{$1.5 \cdot 10^6$} & $-2$& 41                      & 3\,765                  &                         \TO{c|}{1}{$\infty$}      &                         \TO{c|}{2}{0.542}           &            \TO{c}{2}{$\infty$}              \\
                         &                                & $-3$   &                         \TO{c|}{78}{0.002}        &                         \TO{c|}{1}{$\infty$}      &                         \TO{c|}{0}{$\infty$}        &            \TO{c}{2}{$\infty$}              \\ \hline
\multicolumn{3}{c|}{\textbf{mutex}}                       & \multicolumn{2}{c|}{$(0,0,3)$}                             & \multicolumn{2}{c|}{$(0,0,3)$}                    & \multicolumn{2}{c}{}                                & \multicolumn{2}{c}{}                        \\ \cline{1-7}
\multirow{2}{*}{2}       & \multirow{2}{*}{13\,476}       & $-2$   & 16                      & 351                     & 13                      & 1\,166                  &                     &                               &                &                            \\
                         &                                & $-3$   & 13                      & 2\,739                  &                        \TO{c|}{20}{0.018}         &                     &                               &                &                            \\ 
\multirow{1}{*}{3}       & \multirow{1}{*}{38\,453}       & $-2$   & 15                      & 2\,333                  &                        \TO{c|}{25}{0.013}         &                     &                               &                &                           
\end{tabular}
	\vspace{-3mm}
\end{table}
We consider four case studies:
(i) a \emph{job scheduler}~\cite{DBLP:journals/jacm/BrunoDF81}, see Sect.~\ref{sec:introduction};
(ii) a \emph{polling system}~\cite{Srinivasan:1991,DBLP:conf/concur/TimmerKPS12} containing a server processing jobs that arrive at two stations;
(iii) a \emph{video streaming client} buffering received packages and deciding when to start playback; and
(iv) a randomized \emph{mutual exclusion algorithm}~\cite{DBLP:conf/concur/TimmerKPS12}, a variant of~\cite{Pnueli1986} with a process-dependent random delay in the critical section.
Details on the benchmarks and the objectives are given in \tech{App:evaluation-details:ma}.
 
Tab.~\ref{tab:maRes} lists results.
For each instance we give the defining constants, the number of states of the MA  and the used $\eta$-approximation.
A multi-objective query is given by the triple $(l,m,n)$ indicating $l$ untimed, $m$ expected reward, and $n$ timed objectives.
For each MA and query we depict the total run-time of our implementation (time) and the number of vertices of the obtained under-approximation (\emph{pts}). 

Queries analyzed on the underlying MDP are solved efficiently on large models with up to millions of states.
For timed objectives the run-times increase drastically due to the costly analysis of digitized reachability objectives on the digitization, cf.~\cite{DBLP:journals/corr/GuckHHKT14}.
Queries with up to four objectives can be dealt with within the time limit.
Furthermore, for an approximation one order of magnitude better, the number of vertices of the result increases approximately by a factor three.  In addition, a lower digitization constant has then to be considered which often leads to timeouts in experiments with timed objectives.
 
\paragraph{Comparison with \prism~\cite{KNP11} and  \imca~\cite{DBLP:journals/corr/GuckHHKT14}.}
We compared the performance of our implementation with both \prism and \imca.
Verification times are summarized in Fig.~\ref{fig:eval:mdpScatterPlot}: On points above the diagonal, our implementation is faster.
For the comparison with \prism (no MAs), we considered the multi-objective MDP benchmarks from~\cite{ForejtKPatva12,ForejtKNPQtacas11}. 
Both implementations are based on~\cite{ForejtKPatva12}.
For the comparison with \imca (no multi-objective queries) we used the benchmarks from Tab.~\ref{tab:maRes}, with just a single objective.
We observe that our implementation is competitive.
Details are given in 
\iftech
\tech{App:evaluation-details:prism} and \tech{App:evaluation-details:imca}.
\else
\tech{}.
\fi
\pgfplotsset{every axis/.append style={
		legend style={font=\footnotesize,line width=1pt,mark size=6pt},
}}
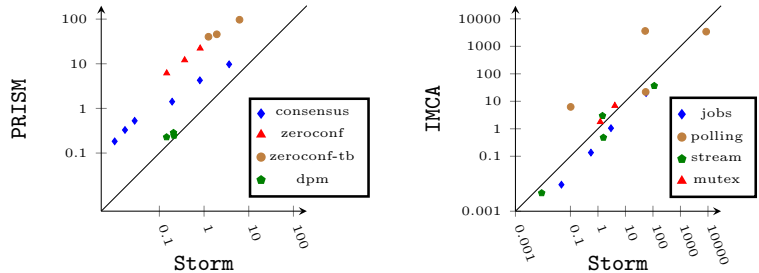
\begin{figure}[t] 
	\centering
	\subfigure{
	\scalebox{\picscale}{
\begin{tikzpicture}
\begin{axis}[
width=4.4cm,
height=4.4cm,
xmin=0.005,
ymin=0.005,
ymax=200,
xmax=200,
xmode=log,
ymode=log,
axis x line=bottom,
axis y line=left,
x label style={at={(axis description cs:0.5,0.01)},anchor=north},
y label style={at={(axis description cs:0.11,.5)},anchor=south},
ytick={0.001, 0.1, 1, 10, 100}, 
yticklabels={0.001, 0.1, 1, 10, 100}, 
xtick={0.001, 0.1, 1, 10, 100},
xticklabels={0.001, 0.1, 1, 10, 100},
xlabel=\storm,
ylabel=\prism,
yticklabel style={font=\tiny},
xticklabel style={rotate=290, anchor=west, font=\tiny},
legend style={font=\tiny,at={(axis cs:10,0.01)},anchor=south west}]
\addplot[ 
scatter/classes={
	consensus={mark=diamond*,blue,mark size=1.30},%
	zeroconf={mark=triangle*,red,mark size=1.3},%
	zeroconf-tb={mark=*,draw=brown,fill=brown, mark size=1.30},%
	dpm={mark=pentagon*,green!50!black, mark size=1.3}
},
scatter,only marks,
scatter src=explicit symbolic]
table[x=stormverif,y=prismverif,meta=benchmark]
{pics/mdpScatterPlot.csv};
\addplot[no marks,forget plot] coordinates
{(0.005,0.005) (200,200) };
\legend{consensus,zeroconf,zeroconf-tb,dpm}
\end{axis}
\end{tikzpicture} 
	} 
		\label{fig:eval:scatter:prism}
  }
    \subfigure{
      \scalebox{\picscale}{
	    \begin{tikzpicture}
\begin{axis}[
width=4.4cm,
height=4.4cm,
xmin=0.001,
ymin=0.001,
ymax=30000,
xmax=30000,
xmode=log,
ymode=log,
axis x line=bottom,
axis y line=left,
x label style={at={(axis description cs:0.5,0.01)},anchor=north},
y label style={at={(axis description cs:0.11,.5)},anchor=south},
ytick={0.001, 0.1, 1, 10, 100, 1000, 10000}, 
yticklabels={0.001, 0.1, 1, 10, 100, 1000, 10000}, 
xtick={0.001, 0.1, 1, 10, 100, 1000, 10000},
xticklabels={0.001, 0.1, 1, 10, 100, 1000, 10000},
xlabel=\storm,
ylabel=\imca,
yticklabel style={font=\tiny},
xticklabel style={rotate=290, anchor=west, font=\tiny},
legend style={font=\tiny,at={(axis cs:400,0.003)},anchor=south west}]

\addplot[ 
scatter/classes={
	jobs={mark=diamond*,blue,mark size=1.30},%
	polling={mark=*,draw=brown,fill=brown, mark size=1.30},%
	stream={mark=pentagon*,green!50!black, mark size=1.3},
	mutex={mark=triangle*,red,mark size=1.3}%
},
scatter,only marks,
scatter src=explicit symbolic]
table[x=storm,y=imca,meta=benchmark]
{pics/singleObjScatterPlot.csv};


\addplot[no marks,forget plot] coordinates
{(0.001,0.001) (30000,30000) };
\legend{jobs,polling,stream,mutex}
\end{axis}
\end{tikzpicture} 
	  }
		\label{fig:eval:scatter:imca}
  }
  \vspace{-3mm}
	\caption{Verification times (in seconds) of our implementation and other tools.}
	\label{fig:eval:mdpScatterPlot}
	\vspace{-2mm}
\end{figure}

  \section{Conclusion}
\label{sec:conclusion}
We considered multi-objective verification of Markov automata, including in particular timed reachability objectives.
The next step is to apply our algorithms to the manifold applications of MA, such as generalized stochastic Petri nets to enrich the analysis possibilities of such nets.

\subsubsection{Acknowledgement.}
This work was supported by the CDZ project CAP (GZ 1023).
  \bibliographystyle{splncs}
  \bibliography{literature}
  
  \iftech
  \clearpage
  \appendix
  \section{Additional Preliminaries}
\subsection{Models with Rewards}
\label{App:RewardDefs}

We extend the models with rewards.


\begin{definition}[Markov decision process \cite{Put94}]
	\label{def:app:MarkovDecisionProcess}
	A \emph{Markov decision process} (MDP) is a tuple $\mdpDef$, where
	$\States, \sinit, \Actions, \numRewFunctions$ are as in Definition~\ref{def:MarkovAutomaton}, $\rewMdpFct[1], \dots, \rewMdpFct[\numRewFunctions]$ are \emph{action reward functions} $\rewMdpFct[i] \colon \States \times \Actions \to \RRnn$, and $\probP \colon \States \times \Actions \times \States \to [0,1]$ is a \emph{transition probability function} satisfying 
	$		\sum_{\state' \in \States}\probP(\state, \act, \state') \in \{0,1\}$
	for all $\state \in \States$ and $\act \in \Actions$.	  
\end{definition}
The reward $\rewMdp{\state}{\act}$ is collected when choosing action $\act$ at state $\state$.
Note that we do not consider state rewards for MDPs.

\begin{definition}[Underlying MDP]
	\label{def:app:UnderlyingMdp}
		For MA $\maDef$ with transition probabilities $\probP$ the underlying MDP of $\ma$ is given by $\umdpDef$, where for each $i \in \{1, \dots, \numRewFunctions\}$
		\begin{displaymath}
\rewUMdp[i]{\state}{\act} = 
	\begin{cases}
	\rewAct[i]{\state}{\act} & \text{if } \state \in \PS \\
	\rewAct[i]{\state}{\markovianAct} + \nicefrac{1}{\rateAtState{\state}} \cdot \rewState[i]{\state} & \text{if } \state \in \MS \text{ and } \alpha = \markovianAct \\
	0 &\text{otherwise}.
	\end{cases}
	\end{displaymath}
\end{definition}
The reward functions $\rewUMdpFct[1], \dots, \rewUMdpFct[\numRewFunctions]$ incorporate the action and state rewards of $\ma$ where the state rewards are multiplied with the expected sojourn times $\nicefrac{1}{\rateAtState{\state}}$ of states $\state \in \MS$.

\begin{definition}[Digitization of an MA]
	\label{def:app:dma}
	For an MA $\ma = (\maIngredientsExceptRew, \{\rewFct[1],\allowbreak \dots, \rewFct[\numRewFunctions]\})$ with transition probabilities $\probP$ and a digitization constant $\digConstant \in \RRgz$, the digitization of $\ma$ \wrt $\digConstant$ is given by the MDP $\dmaDef$, where  $\probPdig$ is as in Definition~\ref{def:dma}  and for each $i \in \{1, \dots, \numRewFunctions\}$
		\begin{displaymath}
		\rewDma[i]{\state}{\act} = 
		\begin{cases}
		\rewAct[i]{\state}{\act} & \text{if } \state \in \PS \\
		\big(\rewAct[i]{\state}{\markovianAct} + \nicefrac{1}{\rateAtState{\state}} \cdot \rewState[i]{\state}\big) \cdot \big(1-e^{-\rateAtState{\state} \digConstant}\big)  & \text{if } \state \in \MS \text{ and } \alpha = \markovianAct \\
		0 &\text{otherwise}.
		\end{cases}
		\end{displaymath}
\end{definition}

\subsection{Measures}
\subsubsection{Probability measure.}
\label{App:ProbMeasure}
Given a scheduler $\sched \in \GMSched[]$, the probability measure $\ProbModelScheduler{\ma}{\sched}$ is defined for measurable sets of infinite paths of MA $\ma$.
This is achieved by considering the probability measure $\StepMeasure{\ma}{\sched}{\ppath}$ for transition steps.
For a history $\ppath \in \FPaths$ with $\state = \last{\ppath}$ and a measurable set of transition steps $T \subseteq \RRnn \times \Actions \times \States$ we have 
\begin{displaymath}
\StepMeasure{\ma}{\sched}{\ppath}(T) = 
\begin{dcases}
\sum_{(0, \act, \state') \in T} \schedEval{\sched}{\ppath}{\act} \cdot \probP(\state, \act, \state') & \text{if } \state\in \PS \\
\int_{\substack{\{ \ttime \mid (\ttime, \markovianAct, \state') \in T\}}} \rateAtState{\state} \cdot e^{-\rateAtState{\state} \ttime } \cdot \sum_{(\ttime, \markovianAct, \state') \in T} \probP(\state, \markovianAct, \state') \diff \ttime & \text{if } \state \in \MS
\end{dcases}
\end{displaymath}
$\ProbModelScheduler{\ma}{\sched}$ is obtained by lifting $\StepMeasure{\ma}{\sched}{\ppath}$ to sequences of transition steps (i.e., paths).
More information can be found in \cite{Neuhausser2010,HatefiH12}.
To simplify the notations, we write
 $\ProbModelScheduler{\ma}{\sched}(\ppath)$ instead of $\ProbModelScheduler{\ma}{\sched}(\{\ppath\})$.
 For a set of finite paths $\Pathset \subseteq \FPaths[\ma]$ we set
 $\ProbModelScheduler{\ma}{\sched}(\Pathset)$ = $\ProbModelScheduler{\ma}{\sched}(\cylinderset{\Pathset})$, where $\cylinderset{\Pathset}$ is the \emph{Cylinder} of $\Pathset$ given by
\begin{displaymath}
	\cylinderset{\Pathset} = \{ \ppath \pathTransTimed{n} \state_{n+1} \pathTransTimed{n+1} \dots  \in \IPaths[\ma] \mid  \ppath \in \Pathset \}.
\end{displaymath}

%

\subsubsection{Expected reward.}
\label{App:ExpectedReward}

We fix a reward function $\rewFct$ of the MA $\ma$.
The reward of a finite path $\ppath' = \pathFseqTimed \in \FPaths$  is given by
\begin{displaymath}
\rewPath[\ma]{\rewFct}{\ppath'} = \sum_{i=0}^{\length{\ppath'}-1} \rewState{\state_i} \cdot \timeOfStamp[i] + \rewAct{\state_i}{\actionOfStamp[i]}.
\end{displaymath}
Intuitively, $\rewPath[\ma]{\rewFct}{\ppath'}$ is the sum over the rewards obtained in every step $\pathstepTimed{i}$ depicted in the path $\ppath'$.
The reward obtained in step $i$ is composed of the state reward of $\state_i$ multiplied with the sojourn time $\timeOfStamp[i]$  as well as  the action reward given by $\state_i$ and $\actionOfStamp[i]$.
State rewards assigned to probabilistic states do not affect the reward of a path as the sojourn time in such states is zero.

For an infinite path $\ppath = \pathIseqTimed \in \IPaths$, the reward of $\ppath$ up to a set of goal states $\goalStates \subseteq \States$ is given by
\begin{displaymath}
\rewPathToTarget[\ma]{\rewFct}{\ppath}{\goalStates} =
\begin{cases}
\rewPath[\ma]{\rewFct}{\pref{\ppath}{n}} & \text{ if } n = \min \{i\ge 0 \mid \state_i \in \goalStates \}\\
\lim_{n\to \infty} \rewPath[\ma]{\rewFct}{\pref{\ppath}{n}}   & \text{ if } \state_i \notin \goalStates \text{ for all } i \ge 0\ .
\end{cases}
\end{displaymath}
Intuitively, we stop collecting reward as soon as $\ppath$ reaches a state in $\goalStates$.
If no state in $\goalStates$ is reached, reward is accumulated along the infinite path, which potentially yields an infinite reward.
The expected reward $\expReachRewMa$ is the expected value of the function $\rewPathToTarget[\ma]{\rewFct}{\cdot}{\goalStates} \colon \IPaths[\ma] \to \RRnn$, i.e.,
\begin{displaymath}
\expReachRewMa =
\int_{\ppath \in \IPaths[\ma]} \rewPathToTarget[\ma]{\rewFct}{\ppath}{\goalStates} \diff\ProbModelScheduler{\ma}{\sched}(\ppath).
\end{displaymath}

  \section{Proofs About Sets of Achievable Points}
\label{app:proofsMOMC} 
\subsection{Proof of Theorem~\ref{thm:deterministicTADoesNotSuffice}}

\theoremDeterministicTADoesNotSuffice*

\begin{proof}
Consider the MA $\ma$ in Fig.~\ref{fig:maAchievablePoints:ma} with objectives $\obj = (\reachObjUniv{\{s_2\}}, \reachObjUniv{\{s_4\}})$, relations ${\rel} = (\ge, \ge)$, and point $\point=(0.5,0.5)$.
We have $\achievabilityQ{\ma}{\obj \rel \point}$ (A scheduler achieving both objectives is given in Example~\ref{ex:timingIsImportant}).
However, there are only two deterministic time abstract schedulers for $\ma$: 
\begin{displaymath}
\sched_\act\colon \text{ always choose }  \act \qquad  \text{and} \qquad \sched_\altact \colon \text{ always choose } \altact
\end{displaymath}
and it holds that $\ma, \sched_\act \not\models \reachObjUniv{\{s_4\}} \ge 0.5$ and $\ma,\sched_\altact \not\models \reachObjUniv{\{s_2\}} \ge 0.5$.
\qed\end{proof}

\subsection{Proof of Proposition~\ref{prop:achievablePointsIsConvex}}
\propositionAchievablePointsIsConvex*

\begin{proof}
Let $\ma$ be an MA and let $\obj= \objTuple$ be objectives with relations ${\rel} = \relTuple$ and points $\point_1, \point_2 \in \RR^\numObjectives$ such that $\achievabilityQ{\ma}{\obj \rel \point_1}$ and $\achievabilityQ{\ma}{\obj \rel \point_2}$ holds.
For $i \in {1,2}$, let $\sched_i \in \GMSched$ be a scheduler satisfying $\ma, \sched_i \models \obj \rel \point_i$.
Consider some $w \in [0,1]$. The point $\point = w \cdot \point_1 + (1-w) \cdot \point_2$ is achievable with the  scheduler that makes an initial one-off random choice:
\begin{itemize}
	\item with probability $w$ mimic $\sched_1$ and
	\item with probability $1-w$ mimic $\sched_2$.
\end{itemize}
Hence, $\achievabilityQ{\ma}{\obj \rel \point}$, implying that the set of achievable points is convex.
\qed\end{proof}

\subsection{Proof of Theorem~\ref{Thm:InfinitePolytope}}
\label{app:proofsMOMC:infinite}
\theoremInfinitePolytope*

\begin{proof}
	We show that the claim holds for the MA $\ma$ in Fig.~\ref{fig:maAchievablePoints:ma} with objectives $\obj = (\reachObjUniv{\{s_2\}}, \boundedReachObjUniv{\{s_4\}}{{[0,2]}})$ and relations ${\rel} = (\ge, \ge)$.
	
	For the sake of contradiction assume that the polytope $\achievablePoints = \{\point \in \RR^2 \mid \achievabilityQ{\ma}{\obj \rel \point} \}$ is finite.
	Then, there must be two distinct vertices $\point_1,\point_2$ of $\achievablePoints$ such that $\{w \cdot \point_1 + (1-w) \cdot \point_2 \mid w \in [0,1]\}$ is a face of $\achievablePoints$.
	In particular, this means that $\point = 0.5 \cdot \point_1 + 0.5 \cdot \point_2$ is achievable but $\point_\varepsilon = \point + (0,\varepsilon)$ is not achievable for all $\varepsilon >0$.
	We show that there is in fact an $\varepsilon$ for which $\point_\varepsilon$ is achievable, contradicting our assumption that $\achievablePoints$ is finite.
	
	For $i \in {1,2}$, let $\sched_i \in \GMSched$ be a scheduler satisfying $\ma, \sched_i \models \obj \rel \point_i$.
	$\sched_1 \neq \sched_2$ has to hold as the schedulers achieve different vertices of $\achievablePoints$.
	The point $\point$ is achievable with the \emph{randomized} scheduler $\sched$ that mimics $\sched_1$ with probability 0.5 and mimics $\sched_2$ otherwise.
	Consider $\ttime = -\log(\ProbModelScheduler{\ma}{\sched}(\eventually \{\state_2\}))$ and the \emph{deterministic} scheduler $\sched'$ given by
	\begin{displaymath}
		\schedEval{\sched'}{\state_0 \pathTransUniv{\ttime_0} \state_1}{\act} =
		\begin{cases}
		1 & \text{if } \ttime_0 > \ttime) \\
		0 & \text{otherwise.}
		\end{cases}
	\end{displaymath}
	$\sched'$ satisfies $\ProbModelScheduler{\ma}{\sched'}(\eventually \{\state_2\}) = e^{-\ttime} =  \ProbModelScheduler{\ma}{\sched}(\eventually \{\state_2\})$.
	Moreover, we have
	\begin{displaymath}
	\ProbModelScheduler{\ma}{\sched'}(\eventually[{[0,\ttime]}] \{\state_3\}) = \ProbModelScheduler{\ma}{\sched'}(\eventually \{\state_3\})=  \ProbModelScheduler{\ma}{\sched}(\eventually \{\state_3\}) > \ProbModelScheduler{\ma}{\sched}(\eventually[{[0,\ttime]}] \{\state_3\}),
	\end{displaymath}
	where the last inequality is due to $\sched \neq \sched'$.
	While the probability to reach $s_3$ is equal under both schedulers, $s_3$ is reached earlier when $\sched'$ is considered. This increases the probability to reach $s_4$ in time, i.e., $\ProbModelScheduler{\ma}{\sched'}(\eventually[{[0,2]}] \{\state_4\}) > \ProbModelScheduler{\ma}{\sched}(\eventually[{[0,2]}] \{\state_4\})$.
	It follows that $\ma, \sched' \models \obj \rel \point_\varepsilon$ for some $\varepsilon>0$.
\qed\end{proof}

  \section{Proofs for Untimed Reachability}
\label{app:proofsUnbounded}
\subsection{Proof of Lemma~\ref{lem:taSchedSamePathProb}}
\lemmaTaSchedSamePathProb*
\begin{proof}
	The proof 
	is by induction over  the length of the considered path $\length{\pathTa} = n$. 
	Let $\maDef$ and $\umdpDef$.
	If $n=0$, then $\{\pathTa\} = \inducedPathsTa{\pathTa} = \{\sinit\}$. Hence, $\ProbModelScheduler{\ma}{\sched}(\inducedPathsTa{\pathTa}) = 1 = \ProbModelScheduler{\umdp}{\taOfSched}(\pathTa)$.
	In the induction step, we assume that the lemma holds for a fixed path $\pathTa \in \FPaths[\umdp]$ with length $\length{\pathTa}=n$ and $\last{\pathTa} = \state$. Consider the path $\pathTa \pathTransTa{} \state' \in \FPaths[\umdp]$.
	\paragraph{\underline{Case $\state \in \PS$:}} It follows that
	\begin{align*}
	\ProbModelScheduler{\ma}{\sched}(\inducedPathsTa{\pathTa \pathTransTa{} \state'})
	&= \int_{\ppath \in \inducedPathsTa{\pathTa}} \schedEval{\sched}{\ppath}{\act} \cdot \probP(\state, \act, \state') \diff\ProbModelScheduler{\ma}{\sched}(\ppath) \\
	&= \probP(\state, \act, \state') \cdot \int_{\ppath \in \inducedPathsTa{\pathTa}} \schedEval{\sched}{\ppath}{\act} \diff\ProbModelScheduler{\ma}{\sched}(\{\ppath\} \cap \inducedPathsTa{\pathTa}) \\
	&= \probP(\state, \act, \state')  \cdot \int_{\ppath \in \inducedPathsTa{\pathTa}} \schedEval{\sched}{\ppath}{\act} \diff\big[\ProbModelScheduler{\ma}{\sched}(\ppath \mid \inducedPathsTa{\pathTa}) \cdot \ProbModelScheduler{\ma}{\sched}(\inducedPathsTa{\pathTa})\big] \\
	&= \ProbModelScheduler{\ma}{\sched}(\inducedPathsTa{\pathTa}) \cdot \probP(\state, \act, \state') \cdot \int_{\ppath \in \inducedPathsTa{\pathTa}} \schedEval{\sched}{\ppath}{\act} \diff\ProbModelScheduler{\ma}{\sched}(\ppath \mid \inducedPathsTa{\pathTa})  \\
	&= \ProbModelScheduler{\ma}{\sched}(\inducedPathsTa{\pathTa}) \cdot \probP(\state, \act, \state') \cdot \schedEval{\taOfSched}{\pathTa}{\act}  \\
	&\eqInductionHypothesis \ProbModelScheduler{\umdp}{\taOfSched}(\pathTa) \cdot \probP(\state, \act, \state') \cdot \schedEval{\taOfSched}{\pathTa}{\act}  \\
	&= \ProbModelScheduler{\umdp}{\taOfSched}(\pathTa \pathTransTa{} \state').
	\end{align*}
	\paragraph{\underline{Case $\state \in \MS$:}}  As $s \in \MS$ we have $\act = \markovianAct$ and it follows
	\begin{align*}
	\ProbModelScheduler{\ma}{\sched}(\inducedPathsTa{\pathTa \pathTransUniv{\markovianAct} \state'})
	&= \int_{\ppath \in \inducedPathsTa{\pathTa}} \int_{0}^{\infty} \rateAtState{\state}\cdot e^{-\rateAtState{\state} \ttime} \cdot \probP(\state, \markovianAct, \state') \diff\ttime \diff\ProbModelScheduler{\ma}{\sched}(\ppath) \\
	&= \probP(\state, \markovianAct, \state')\cdot \int_{\ppath \in \inducedPathsTa{\pathTa}} \int_{0}^{\infty} \rateAtState{\state}\cdot e^{-\rateAtState{\state} \ttime} \diff\ttime \diff\ProbModelScheduler{\ma}{\sched}(\ppath) \\
	&= \probP(\state, \markovianAct, \state')\cdot \ProbModelScheduler{\ma}{\sched}(\inducedPathsTa{\pathTa}) \\
	&\eqInductionHypothesis \probP(\state, \markovianAct, \state')\cdot   \ProbModelScheduler{\umdp}{\taOfSched}(\pathTa) \\
	&= \ProbModelScheduler{\umdp}{\taOfSched}(\pathTa \pathTransUniv{\markovianAct} \state').
	\end{align*}
\qed\end{proof}

\subsection{Proof of Proposition~\ref{pro:taSchedSameReachProb}}
\label{app:proofsUnbounded:taSchedSameReachProb}

\propositionTaSchedSameReachProb*

\begin{proof}
	\label{proof:pro:taSchedSameReachProb}
	Let $\Pathset$ be the set of finite time-abstract  paths of $\umdp$  that end at the first visit of a state in $\goalStates$, i.e.,
	\begin{displaymath}
	\Pathset = \{\pathFseqTa \in \FPaths[\umdp] \mid \state_n \in \goalStates \text{ and } \forall i < n \colon  \state_i \notin \goalStates \}.
	\end{displaymath}
	Every path $\ppath \in \eventually \goalStates \subseteq \IPaths[\ma]$ has a unique prefix $\ppath'$ with $\taOf{\ppath'} \in \Pathset$.
	We have
	\begin{displaymath}
	\eventually \goalStates = \bigcupdot_{\pathTa \in \Pathset} \cylinderset{\inducedPathsTa{\pathTa}}.
	\end{displaymath}
	The claim follows with Lemma~\ref{lem:taSchedSamePathProb} since
	\begin{displaymath}
	\reachProbMa
	= \sum_{\pathTa \in \Pathset} \ProbModelScheduler{\ma}{\sched}(\inducedPathsTa{\pathTa})
	\overset{Lem.\,\ref{lem:taSchedSamePathProb}}{=}\sum_{\pathTa \in \Pathset} \ProbModelScheduler{\umdp}{\taOfSched}(\pathTa) = 
	\reachProbUMdp[].
	\end{displaymath}
\qed\end{proof}

\subsection{Proof of Theorem~\ref{thm:multiReachObj}}
\theoremMultiReachObj*
	
	\begin{proof}
		\label{thm:multiReachObj:proof}
		Let $\obj = (\reachObj[1], \dots, \reachObj[\numObjectives])$ be the considered list of objectives with threshold relations ${\rel }= \relTuple$.
		The following equivalences hold for any $\sched \in \GMSched[\ma]$ and $\point \in \RR^\numObjectives$.
		\begin{align*}
		\ma, \sched \models\obj \rel \point\ \, & \Longleftrightarrow \ \,\forall i \colon \ma, \sched \models \reachObj[i] \rel[i] \pointi{i} \\ 
		& \Longleftrightarrow \ \,\forall i \colon \reachProbMa[i] \rel[i] \pointi{i}  \\ 
		& \overset{\mathclap{Prop.\,\ref{pro:taSchedSameReachProb}}}{\Longleftrightarrow}\ \,  \forall i \colon \reachProbUMdp[i] \rel[i] \pointi{i}  \\ 
		& \Longleftrightarrow \ \,\forall i \colon  \umdp, \taOfSched \models \reachObj[i] \rel[i] \pointi{i}  \\
		&   \Longleftrightarrow\ \,  \umdp, \taOfSched \models \obj \rel \point\ .
		\end{align*}
		Assume that $\achievabilityQ{\ma}{\obj \rel \point}$ holds, i.e., there is a $\sched \in \GMSched[\ma]$ such that $\ma, \sched \models\obj \rel \point$.
		It follows that $\umdp, \taOfSched \models \obj \rel \point$  which means that $\achievabilityQ{\umdp}{\obj \rel \point}$ holds as well.
		For the other direction assume $\achievabilityQ{\umdp}{\obj \rel \point}$, i.e., $\umdp, \sched \models \obj \rel \point$ for some time-abstract scheduler $\sched \in \TASched[]$.
		We have $\taOfSched=\sched $. It follows that $\umdp, \taOfSched \models \obj \rel \point$.
		Applying the equivalences above yields $\ma, \sched \models \obj \rel \point$ and thus $\achievabilityQ{\ma}{\obj \rel \point}$.
	\qed\end{proof}
  \section{Proofs for Expected Reward}
\label{app:proofsExpRew}

\subsection{Proof of Proposition~\ref{pro:taSchedSameExpReachRew}}

Let $n \ge 0$ and $\goalStates\subseteq \States$.
The set of time-abstract paths that end after $n$ steps or at the first visit of a state in $\goalStates$ is denoted by
\begin{align*}
\PathsNOrTarget =  \{\pathstepTa{0} \dots \pathTransTa{m-1} \state_m \in \FPaths[\umdp] \mid\  & (m=n \text{ or } \state_m \in \goalStates) \text{ and } \\
&\state_i \notin \goalStates \text{ for all } 0\le i < m\}.
\end{align*}
For $\ma$ under $\sched \in \GMSched[\ma]$ and $\umdp$ under $\taOfSched \in \TASched[]$, we define the expected reward collected along the paths of $\PathsNOrTarget$ as
\begin{align*}
\eRModelScheduler{\ma}{\sched}(\rewFct[], \PathsNOrTarget) &= \sum_{\pathTa \in \PathsNOrTarget} \int_{\ppath \in \inducedPathsTa{\pathTa}} \rewPath[\ma]{\rewFct}{\ppath} \diff \ProbModelScheduler{\ma}{\sched}(\ppath)
\text{ and }\\
\eRModelScheduler{\umdp}{\taOfSched}(\rewUMdpFct[], \PathsNOrTarget) &= \sum_{\pathTa \in \PathsNOrTarget} \rewPath[\umdp]{\rewUMdpFct}{\pathTa} \cdot  \ProbModelScheduler{\umdp}{\taOfSched}(\pathTa),
\end{align*}
respectively.
Intuitively, $\eRModelScheduler{\ma}{\sched}(\rewFct[], \PathsNOrTarget)$ corresponds to $\expReachRewMa[]$ assuming that no more reward is collected after the $n$-th transition.
It follows that the value $\eRModelScheduler{\ma}{\sched}(\rewFct[], \PathsNOrTarget)$ approaches $\expReachRewMa[]$ for large $n$.
Similarly, $\eRModelScheduler{\umdp}{\taOfSched}(\rewUMdpFct[], \PathsNOrTarget)$ approaches $\expReachRewUMdp$ for large $n$.
This observation is formalized by the following lemma.
\begin{restatable}{lemma}{lemmaPathsNOrTargetApproachesReachRew}
	\label{lem:ma:rew:PathsNOrTargetApproachesReachRew}
	For MA $\maDef$ with $\goalStates \subseteq \States$,  $\sched \in \GMSched$, and reward function $\rewFct[]$ it holds that
	\begin{align*}
	\lim_{n\to\infty} \eRModelScheduler{\ma}{\sched}(\rewFct[], \PathsNOrTarget)  = \expReachRewMa[].
	\end{align*}
	Furthermore, any reward function $\rewUMdpFct[]$ for $\umdp$ satisfies
	\begin{align*}
	\lim_{n\to\infty} \eRModelScheduler{\umdp}{\taOfSched}(\rewUMdpFct[], \PathsNOrTarget) =  \expReachRewUMdp.
	\end{align*}
\end{restatable}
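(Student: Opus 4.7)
The plan is to invoke the monotone convergence theorem, applied separately on $\IPaths[\ma]$ and $\IPaths[\umdp]$. First I would rewrite the sum defining $\eRModelScheduler{\ma}{\sched}(\rewFct[], \PathsNOrTarget)$ as a single integral over $\IPaths[\ma]$. Let $\tau\colon\IPaths[\ma]\to\NN\cup\{\infty\}$ denote the hitting time $\tau(\ppath)=\min\{m\ge 0\mid \ithStateOfPath{\ppath}{m}\in\goalStates\}$ (with $\min\emptyset=\infty$), and set $f_n(\ppath)=\rewPath[\ma]{\rewFct}{\pref{\ppath}{\min\{n,\tau(\ppath)\}}}$. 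The key combinatorial observation is that every infinite path $\ppath$ has exactly one prefix whose time-abstraction lies in $\PathsNOrTarget$, namely $\pref{\ppath}{\min\{n,\tau(\ppath)\}}$. Together with Tonelli's theorem (applicable since rewards are non-negative by Def.~\ref{def:MarkovAutomaton}), this yields
\[
\eRModelScheduler{\ma}{\sched}(\rewFct[], \PathsNOrTarget) \;=\; \int_{\ppath\in\IPaths[\ma]} f_n(\ppath)\diff \ProbModelScheduler{\ma}{\sched}(\ppath).
\]

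Next I would verify the hypotheses of monotone convergence for $(f_n)_n$. Non-negativity is immediate. Monotonicity $f_n\le f_{n+1}$ holds because moving from $\min\{n,\tau(\ppath)\}$ to $\min\{n+1,\tau(\ppath)\}$ either leaves the prefix unchanged or extends it by one transition, adding only the non-negative summand $\rewState{\ithStateOfPath{\ppath}{n}}\cdot \timeOfStamp[n]+\rewAct{\ithStateOfPath{\ppath}{n}}{\actionOfStamp[n]}$. Pointwise convergence $f_n(\ppath)\to \rewPathToTarget[\ma]{\rewFct}{\ppath}{\goalStates}$ is immediate from the definition of $\rewPathToTarget$: if $\tau(\ppath)<\infty$ then $f_n(\ppath)$ is eventually constant at $\rewPath[\ma]{\rewFct}{\pref{\ppath}{\tau(\ppath)}}$; otherwise $f_n(\ppath)=\rewPath[\ma]{\rewFct}{\pref{\ppath}{n}}$ converges to the prescribed limit. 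Monotone convergence then delivers
\[
\lim_{n\to\infty}\eRModelScheduler{\ma}{\sched}(\rewFct[],\PathsNOrTarget)\;=\;\int_{\IPaths[\ma]} \rewPathToTarget[\ma]{\rewFct}{\ppath}{\goalStates}\diff \ProbModelScheduler{\ma}{\sched}(\ppath)\;=\;\expReachRewMa[].
\]

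The second claim for $\umdp$ under $\taOfSched$ is by an essentially identical argument on $\IPaths[\umdp]$: for each $\pathTa\in\PathsNOrTarget$ the summand $\rewPath[\umdp]{\rewUMdpFct}{\pathTa}\cdot\ProbModelScheduler{\umdp}{\taOfSched}(\pathTa)$ is the integral over the cylinder of infinite extensions of $\pathTa$ of the function $\tilde f_n(\pathTa')=\rewPath[\umdp]{\rewUMdpFct}{\pref{\pathTa'}{\min\{n,\tau(\pathTa')\}}}$, and the same monotone convergence step yields the limit $\expReachRewUMdp$. The main obstacle is the bookkeeping in the first step for the MA case: the inner integral in the definition of $\eRModelScheduler{\ma}{\sched}$ ranges over \emph{finite} paths parameterised by continuous sojourn times while $\ProbModelScheduler{\ma}{\sched}$ is defined on infinite paths, so one must identify $\inducedPathsTa{\pathTa}$ with the cylinder of its infinite extensions and read $\rewPath[\ma]{\rewFct}{\ppath}$ as a measurable function on $\IPaths[\ma]$ depending only on the first $\length{\pathTa}$ transitions, so that Tonelli legitimately exchanges the countable sum over $\PathsNOrTarget$ with the integral.
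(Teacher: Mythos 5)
Your proposal is correct and follows essentially the same route as the paper's proof: the paper defines exactly the same truncated reward functions $f_n$ (your $\rewPath[\ma]{\rewFct}{\pref{\ppath}{\min\{n,\tau(\ppath)\}}}$), rewrites $\eRModelScheduler{\ma}{\sched}(\rewFct[], \PathsNOrTarget)$ as $\int_{\IPaths[\ma]} f_n \,\diff\ProbModelScheduler{\ma}{\sched}$, and concludes by monotone convergence, treating the MDP case as analogous. Your additional care about the Tonelli/uniqueness-of-prefix bookkeeping only makes explicit a step the paper leaves implicit.
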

\begin{proof}
	We show the first claim. The second claim follows analogously.
	For each $n \ge 0$, consider the function $f_n \colon \IPaths[\ma] \to \RRnn$ given by
	\begin{displaymath}
	f_n(\ppath) = 
	\begin{cases}
	\rewPath[\ma]{\rewFct}{\pref{\ppath}{m}} & \text{if } m = \min\big\{i \in \{0, \dots, n\} \mid \state_i \in \goalStates\big\} \\ 
	\rewPath[\ma]{\rewFct}{\pref{\ppath}{n}} & \text{if } \state_i \notin \goalStates \text{ for all }  i \le n
	\end{cases}
	\end{displaymath}
	for every path $\ppath = \pathIseqTimed \in \IPaths[\ma]$.
	Intuitively, $f_n(\ppath)$ is the reward  collected on $\ppath$ within the first $n$ steps and only  up to the first visit of $\goalStates$.
	This allows us to express the expected reward collected along the paths of $\PathsNOrTarget[n]$ as
	\begin{align*}
	\eRModelScheduler{\ma}{\sched}( \PathsNOrTarget[n]) 
	=  \sum_{\pathTa \in \PathsNOrTarget[n]} \int_{\ppath \in \inducedPathsTa{\pathTa}} \rewPath[\ma]{\rewFct}{\ppath} \diff \ProbModelScheduler{\ma}{\sched}(\ppath)  
	=  \int_{\ppath \in \IPaths[\ma]} f_n(\ppath) \diff \ProbModelScheduler{\ma}{\sched}(\ppath).
	\end{align*}
	It holds that   $\lim_{n \to \infty} f_n(\ppath) = \rewPathToTarget[\ma]{\rewFct[]}{\ppath}{\goalStates}$ which is  a direct consequence  from the definition of the  reward of $\ppath$ up to $\goalStates$ (cf. App.~\ref{App:ExpectedReward}).
	Furthermore, note  that the sequence of functions $f_0, f_1, \dots$ is non-decreasing, i.e., we have $f_n(\ppath) \le f_{n+1}(\ppath)$ for all $n\ge 0$ and $\ppath \in \IPaths[\ma]$.
	By applying the \emph{monotone convergence theorem} \cite{ash2000probability} we obtain
	\begin{align*}
	\lim_{n\to\infty} \eRModelScheduler{\ma}{\sched}( \PathsNOrTarget[n]) 
	& =\lim_{n\to\infty}   \int_{\ppath \in \IPaths[\ma]} f_n(\ppath) \diff \ProbModelScheduler{\ma}{\sched}(\ppath)   \\
	&=  \int_{\ppath \in \IPaths[\ma]} \lim_{n \to \infty} f_n(\ppath) \diff \ProbModelScheduler{\ma}{\sched}(\ppath)  \\
	&=  \int_{\ppath \in \IPaths[\ma]} \rewPathToTarget[\ma]{\rewFct[]}{\ppath}{\goalStates} \diff \ProbModelScheduler{\ma}{\sched}(\ppath) 
	= \expReachRewMa.
	\end{align*}
\qed\end{proof}
The next step is to show that the expected reward collected along the paths of $\PathsNOrTarget[n]$ coincides for $\ma$ under $\sched$ and $\umdp$ under $\taOfSched$.
\begin{restatable}{lemma}{lemmaTaSchedSameExpRewOfPathsN}
	\label{lem:taSchedSameExpRewOfPathsN}
	Let $\rewFct[]$ be some reward function of $\ma$ and let $\rewUMdpFct[]$ be its counterpart for $\umdp$.
	Let $\maDef$ be an MA with $\goalStates \subseteq \States$ and  $\sched \in \GMSched$.
	For all $\goalStates \subseteq \States$ and $n\ge 0$ it holds that
	\begin{displaymath}
	\eRModelScheduler{\ma}{\sched}(\rewFct[], \PathsNOrTarget) = \eRModelScheduler{\umdp}{\taOfSched}(\rewUMdpFct[], \PathsNOrTarget) .
	\end{displaymath}
\end{restatable}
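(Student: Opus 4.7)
The plan is to prove the lemma by induction on $n$. The base case $n=0$ is immediate since $\PathsNOrTarget[0]$ contains only the zero-length path at $\sinit$, whose reward is zero on both sides. For the inductive step, note that $\PathsNOrTarget[n+1]$ differs from $\PathsNOrTarget[n]$ only in that each $\pathTa \in \PathsNOrTarget[n]$ of length exactly $n$ with $\last{\pathTa} \notin \goalStates$ is replaced by its one-step extensions $\pathTa \pathTransUniv{\act} \state'$, while all shorter paths and all length-$n$ paths ending in $\goalStates$ carry over unchanged. Hence both $\eRModelScheduler{\ma}{\sched}(\rewFct[], \PathsNOrTarget[n+1])$ and $\eRModelScheduler{\umdp}{\taOfSched}(\rewUMdpFct[], \PathsNOrTarget[n+1])$ can be written as the corresponding $n$-term (equal by the inductive hypothesis) plus a per-prefix correction, reducing the lemma to equating these corrections for each such $\pathTa$.

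Fix such a $\pathTa$ and set $s = \last{\pathTa}$. On the MA side, decompose the reward of an extension $\ppath' \in \inducedPathsTa{\pathTa \pathTransUniv{\act} \state'}$ as $\rewPath[\ma]{\rewFct}{\pref{\ppath'}{n}} + \rewState{s}\,\timeOfStamp[n] + \rewAct{s}{\actionOfStamp[n]}$. Using that $\bigcupdot_{\act,\state'} \cylinderset{\inducedPathsTa{\pathTa \pathTransUniv{\act} \state'}} = \cylinderset{\inducedPathsTa{\pathTa}}$ up to null sets, the prefix part sums precisely to the removed term $\int_{\ppath \in \inducedPathsTa{\pathTa}} \rewPath[\ma]{\rewFct}{\ppath}\,\diff \ProbModelScheduler{\ma}{\sched}$. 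An analogous cancellation on the MDP side uses $\sum_{\act,\state'} \ProbModelScheduler{\umdp}{\taOfSched}(\pathTa \pathTransUniv{\act} \state') = \ProbModelScheduler{\umdp}{\taOfSched}(\pathTa)$. What remains is the identity
\begin{displaymath}
\sum_{\act,\state'} \int_{\ppath' \in \inducedPathsTa{\pathTa \pathTransUniv{\act} \state'}} \!\!\big[\rewState{s}\, \timeOfStamp[n] + \rewAct{s}{\actionOfStamp[n]}\big] \diff \ProbModelScheduler{\ma}{\sched}(\ppath') = \sum_{\act} \rewUMdp{s}{\act}\, \schedEval{\taOfSched}{\pathTa}{\act}\, \ProbModelScheduler{\umdp}{\taOfSched}(\pathTa).
\end{displaymath}

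For $s \in \PS$ we have $\timeOfStamp[n]=0$ and $\rewUMdp{s}{\act}=\rewAct{s}{\act}$, so the identity reduces to the defining equation of $\taOfSched$ combined with Lemma~\ref{lem:taSchedSamePathProb}. For $s \in \MS$ the action is fixed to $\markovianAct$ and $\timeOfStamp[n]$ is exponentially distributed with rate $\rateAtState{s}$ independently of the scheduler's subsequent choices. After summing $\probP(s,\markovianAct,\state')$ over $\state'$ and evaluating
\begin{displaymath}
\int_0^\infty \big(\rewState{s}\,\ttime + \rewAct{s}{\markovianAct}\big)\,\rateAtState{s}\,e^{-\rateAtState{s}\ttime}\,\diff\ttime \;=\; \tfrac{\rewState{s}}{\rateAtState{s}} + \rewAct{s}{\markovianAct},
\end{displaymath}
the left-hand side collapses to $\ProbModelScheduler{\ma}{\sched}(\inducedPathsTa{\pathTa}) \cdot \rewUMdp{s}{\markovianAct}$, which equals the right-hand side by Lemma~\ref{lem:taSchedSamePathProb}.

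The main obstacle is the bookkeeping for this step-wise decomposition: one must sum the step-$n$ corrections over \emph{all} one-step extensions $(\act,\state')$ before comparing with the MDP side, because this summation is precisely what marginalizes both the scheduler's randomized choice at $s$ and, for $s \in \MS$, the exponentially distributed sojourn time, thereby recovering the expected-sojourn factor $\nicefrac{1}{\rateAtState{s}}$ built into $\rewUMdpFct$. Example~\ref{ex:expRewardsOfTaFinPathAreNeq} shows that term-by-term matching fails without this summation, so care is needed to keep the step-$n$ contribution aggregated while the prefix contributions cancel against the removed $\PathsNOrTarget[n]$ term.
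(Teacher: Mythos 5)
Your proposal is correct and follows essentially the same route as the paper's proof: induction on $n$, splitting the reward of each one-step extension into the prefix reward (which marginalizes back to the $n$-step term handled by the induction hypothesis) and the step-$n$ reward, then evaluating the latter via the exponential sojourn-time integral to recover the factor $\nicefrac{1}{\rateAtState{s}}$ built into $\rewUMdpFct$, with Lemma~\ref{lem:taSchedSamePathProb} supplying the matching path probabilities. The only difference is cosmetic bookkeeping (you phrase the step as a per-prefix correction to the $n$-term, whereas the paper reassembles the full sum over $\PathsNOrTarget[n+1]$), and your closing remark about why the aggregation over all extensions is essential matches the paper's discussion of Example~\ref{ex:expRewardsOfTaFinPathAreNeq}.
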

\begin{proof}
	The proof is by induction over the path length $n$.
	To simplify the notation, we often omit the reward functions $\rewFct[]$ and $\rewUMdpFct[]$ and write, e.g., $\rewPath[\umdp]{}{\ppath}$ instead of $\rewPath[\umdp]{\rewUMdpFct}{\ppath}$ or  $\eRModelScheduler{\ma}{\sched}( \PathsNOrTarget[n])$ instead of  $\eRModelScheduler{\ma}{\sched}(\rewFct[], \PathsNOrTarget[n])$.
	
	If $n = 0$, then $\PathsNOrTarget = \{ \sinit\}$. The claim holds since $\rewPath[\ma]{}{\sinit} = \rewPath[\umdp]{}{\sinit} = 0$.
	
	In the induction step, we assume that the lemma is true for some fixed $n \ge 0$.
	We split the term $\eRModelScheduler{\ma}{\sched}( \PathsNOrTarget[n+1])$ into the reward that is obtained by paths which reach $\goalStates$ within $n$ steps and the reward obtained by paths of length $n+1$.
	In a second step, we consider the sum of the reward collected within the first $n$ steps and the reward obtained in the  $(n+1)$-th step:
	\begin{align}
	& \mathrel{\phantom{=}} \eRModelScheduler{\ma}{\sched}( \PathsNOrTarget[n+1]) \nonumber \\
	&= \sum_{\substack{\pathTa \in \PathsNOrTarget[n+1] \\ \length{\pathTa} \le n}}   \int_{\substack{\ppath \in \inducedPathsTa{\pathTa}}} \rewPath[\ma]{}{\ppath} \diff \ProbModelScheduler{\ma}{\sched}(\ppath)\nonumber \\
	&\newlineIndention + \sum_{\substack{\pathTa \in \PathsNOrTarget[n+1] \\ \length{\pathTa} = n+1}}  \int_{\substack{\ppath = \ppath' \pathTransTimed{} \state' \in \inducedPathsTa{\pathTa}\\ \last{\ppath'} = \state}} \rewPath[\ma]{}{\ppath'} +  \rewState[]{\state} \cdot \timeOfStamp[] + \rewAct[]{\state}{\actionOfStamp[]}\diff \ProbModelScheduler{\ma}{\sched}(\ppath) \nonumber\\
	&=  \sum_{\substack{\pathTa \in \PathsNOrTarget[n+1]}}  \int_{\substack{\ppath \in \inducedPathsTa{\pathTa}}} \rewPath[\ma]{}{\pref{\ppath}{n}} \diff \ProbModelScheduler{\ma}{\sched}(\ppath) \label{eq:taSchedSameExpRewOfPathsN:RewN}\\
	&\newlineIndention + \sum_{\substack{\pathTa \in \PathsNOrTarget[n+1] \\ \length{\pathTa} = n+1}}  \int_{\substack{\ppath = \ppath' \pathTransTimed{} \state' \in \inducedPathsTa{\pathTa}\\ \last{\ppath'} = \state}}   \rewState[]{\state} \cdot \timeOfStamp[] + \rewAct[]{\state}{\actionOfStamp[]}\diff \ProbModelScheduler{\ma}{\sched}(\ppath), \label{eq:taSchedSameExpRewOfPathsN:RewNplus1}
	\end{align}
	where we define  $\pref{\ppath}{n}$ for paths with $\length{\ppath} \le n$ such that $\pref{\ppath}{n} = \ppath$.
	The two terms \eqref{eq:taSchedSameExpRewOfPathsN:RewN} and \eqref{eq:taSchedSameExpRewOfPathsN:RewNplus1} are treated separately.
	
	\paragraph{\underline{Term \eqref{eq:taSchedSameExpRewOfPathsN:RewN}}:}
	Let $\PathsAtMostNEventuallyTarget =  \{\pathTa \in \PathsNOrTarget[n+1] \mid  \length{\pathTa} \le n \}$ be the paths in $\PathsNOrTarget[n+1]$ of length at most $n$.
	We have $\PathsAtMostNEventuallyTarget \subseteq \PathsNOrTarget$ and every path in $\PathsAtMostNEventuallyTarget$ visits a state in $\goalStates$. 
	Correspondingly, $\PathsEqualNNeverTarget = \PathsNOrTarget \setminus \PathsAtMostNEventuallyTarget$ is the set of time-abstract paths of length $n$ that do not visit a state in $\goalStates$.
	Hence, the paths in $\PathsNOrTarget[n+1]$ with length $n+1$ have a prefix in $\PathsEqualNNeverTarget$.
	The set $\PathsNOrTarget[n+1]$ is partitioned such that
	\begin{align*}
	\PathsNOrTarget[n+1]
	& = \PathsAtMostNEventuallyTarget \cupdot \left\{\pathTa \in \PathsNOrTarget[n+1]  \mid \length{\pathTa} = n+1 \right\}\\
	&= \PathsAtMostNEventuallyTarget \cupdot \{\pathTa = \pathTa' \pathTransTa{} \state' \in \FPaths[\umdp]  \mid \pathTa' \in \PathsEqualNNeverTarget \}.
	\end{align*}
	The reward obtained within the first $n$ steps is independent of the $(n+1)$-th transition. 
	To show this formally, we fix a path $\pathTa' \in \PathsEqualNNeverTarget$ with $\last{\pathTa'} = \state$ and derive
	\begin{align}
	&\mathrel{\phantom{=}} \sum_{\substack{\pathTa' \pathTransTa{} \state'  \in \FPaths[\umdp]  }}\  \int_{\substack{\ppath \in \inducedPathsTa{\pathTa' \pathTransTa{} \state'}}} \rewPath[\ma]{}{\pref{\ppath}{n}} \diff \ProbModelScheduler{\ma}{\sched}(\ppath) \nonumber\\
	&=
	\begin{dcases}
	\int_{\substack{\ppath' \in \inducedPathsTa{\pathTa'}}}   \rewPath[\ma]{}{\ppath'} \cdot \sum_{\substack{(\act, \state') \in \Actions \times \States}} \schedEval{\sched}{\ppath'}{\act} \cdot \probP(\state, \act, \state') \diff \ProbModelScheduler{\ma}{\sched}(\ppath')  & \text{if } \state \in \PS \\
	\int_{\substack{\ppath' \in \inducedPathsTa{\pathTa'}}}  \rewPath[\ma]{}{\ppath'}  \cdot \sum_{\substack{\state' \in \States}}  \probP(\state, \markovianAct, \state')  \diff \ProbModelScheduler{\ma}{\sched}(\ppath') & \text{if } \state \in \MS 
	\end{dcases} \nonumber\\
	&=  \int_{\substack{\ppath' \in \inducedPathsTa{\pathTa'}}}   \rewPath[\ma]{}{\ppath'} \diff \ProbModelScheduler{\ma}{\sched}(\ppath'). \label{eq:taSchedSameExpRewOfPathsN:RewNIndependentOfNplus1}
	\end{align}
	With the above-mentioned partition of the set $\PathsNOrTarget[n+1]$, it follows that the expected reward obtained within the first $n$ steps is given by
	\begin{align}
	& \mathrel{\phantom{=}} \sum_{\substack{\pathTa \in \PathsNOrTarget[n+1]}}  \int_{\substack{\ppath \in \inducedPathsTa{\pathTa}}} \rewPath[\ma]{}{\pref{\ppath}{n}} \diff \ProbModelScheduler{\ma}{\sched}(\ppath) \nonumber \\
	&= \sum_{\substack{\pathTa \in \PathsAtMostNEventuallyTarget}} \int_{\substack{\ppath \in \inducedPathsTa{\pathTa}}}   \rewPath[\ma]{}{\ppath} \diff \ProbModelScheduler{\ma}{\sched}(\ppath) \nonumber\\  
	& \newlineIndention + \sum_{\substack{\pathTa' \in \PathsEqualNNeverTarget}} \  \sum_{\substack{\pathTa' \pathTransTa{} \state'  \in \FPaths[\umdp]   }} \  \int_{\substack{\ppath \in \inducedPathsTa{\pathTa' \pathTransTa{} \state'}}} \rewPath[\ma]{}{\pref{\ppath}{n}} \diff \ProbModelScheduler{\ma}{\sched}(\ppath) \nonumber \\
	&\overset{\mathclap{\eqref{eq:taSchedSameExpRewOfPathsN:RewNIndependentOfNplus1}}}{=} \sum_{\substack{\pathTa \in \PathsAtMostNEventuallyTarget}} \int_{\substack{\ppath \in \inducedPathsTa{\pathTa}}}   \rewPath[\ma]{}{\ppath} \diff \ProbModelScheduler{\ma}{\sched}(\ppath) +  \sum_{\substack{\pathTa \in \PathsEqualNNeverTarget}} \int_{\substack{\ppath \in \inducedPathsTa{\pathTa}}}  \rewPath[\ma]{}{\ppath} \diff \ProbModelScheduler{\ma}{\sched}(\ppath)  \nonumber \\
	&=  \eRModelScheduler{\ma}{\sched}(\PathsNOrTarget)   \nonumber  \displaybreak[0] \\
	& \eqInductionHypothesis  \eRModelScheduler{\umdp}{\taOfSched}(\PathsNOrTarget) \nonumber \\
	&= \sum_{\pathTa \in \PathsAtMostNEventuallyTarget} \rewPath[\umdp]{}{\pathTa} \cdot \ProbModelScheduler{\umdp}{\taOfSched}(\pathTa) + \sum_{\pathTa \in \PathsEqualNNeverTarget}    \rewPath[\umdp]{}{\pathTa} \cdot \ProbModelScheduler{\umdp}{\taOfSched}(\pathTa)\nonumber  \\ 
	&= \sum_{\pathTa \in \PathsAtMostNEventuallyTarget} \rewPath[\umdp]{}{\pathTa} \cdot \ProbModelScheduler{\umdp}{\taOfSched}(\pathTa) \nonumber\\
	& \newlineIndention + \sum_{\pathTa' \in \PathsEqualNNeverTarget}   \quad \sum_{\substack{\pathTa \in \FPaths[\umdp] \\ \pathTa = \pathTa' \pathTransTa{} \state' }}  \rewPath[\umdp]{}{\pref{\pathTa}{n}} \cdot \ProbModelScheduler{\umdp}{\taOfSched}(\pathTa) \nonumber\\ 
	&= \sum_{\pathTa \in \PathsNOrTarget[n+1]} \rewPath[\umdp]{}{\pref{\pathTa}{n}} \cdot \ProbModelScheduler{\umdp}{\taOfSched}(\pathTa).
	\label{eq:taSchedSameExpRewOfPathsN:umdpRewN}
	\end{align}
	
	\paragraph{\underline{Term \eqref{eq:taSchedSameExpRewOfPathsN:RewNplus1}}:}
	For the  expected reward obtained in step $n+1$,  consider a path $\pathTa = \pathTa' \pathTransTa{} \state' \in \PathsNOrTarget[n+1]$ such that $  \length{\pathTa'} = n$ and $\last{\pathTa'} = \state$.
	\begin{itemize}
		\item If $\state \in \MS$, we have $\pathTa = \pathTa' \pathTransUniv{\markovianAct} \state'$. It follows that
		\begin{align}
		& \mathrel{\phantom{=}} \int_{\ppath = \ppath'\pathTransUniv{\ttime} \state' \in \inducedPathsTa{\pathTa}} \rewState{\state}  \cdot \ttime + \rewAct{\state}{\markovianAct}   \diff \ProbModelScheduler{\ma}{\sched}(\ppath) \nonumber\\
		&=
		\int_{\substack{\ppath = \ppath'\pathTransUniv{\ttime} \state' \in \inducedPathsTa{\pathTa}}}  \rewState{\state}  \cdot \ttime \diff \ProbModelScheduler{\ma}{\sched}(\ppath)
		+ 
		\int_{\ppath \in \inducedPathsTa{\pathTa}} \rewAct{\state}{\markovianAct} \diff \ProbModelScheduler{\ma}{\sched}(\ppath) 
		\nonumber \\
		&=
		\rewState{\state} \cdot \int_{\ppath' \in \inducedPathsTa{\pathTa'}} \int_{0}^{\infty} \ttime \cdot \rateAtState{\state} \cdot  e^{-\rateAtState{\state} \ttime} \cdot \probP(\state, \markovianAct, \state') \diff \ttime  \diff \ProbModelScheduler{\ma}{\sched}(\ppath') \nonumber\\
		& \newlineIndention +
		\rewAct{\state}{\markovianAct} \cdot \ProbModelScheduler{\ma}{\sched}(\inducedPathsTa{\pathTa}) 
		\nonumber\\
		&=
		\frac{\rewState{\state}}{\rateAtState{\state}} \cdot \ProbModelScheduler{\ma}{\sched}(\inducedPathsTa{\pathTa})
		+
		\rewAct{\state}{\markovianAct} \cdot \ProbModelScheduler{\ma}{\sched}(\inducedPathsTa{\pathTa}) 
		\nonumber \\
		&=\mathrel{} \rewUMdp{\state}{\markovianAct} \cdot \ProbModelScheduler{\ma}{\sched}(\inducedPathsTa{\pathTa})
		\overset{Lem.\,\ref{lem:taSchedSamePathProb}}{=}  \rewUMdp{\state}{\markovianAct} \cdot \ProbModelScheduler{\umdp}{\taOfSched}(\pathTa).
		\label{eq:taSchedSameExpRewOfPathsN:umdpRewNplus1}
		\end{align}
		\item If $\state \in \PS$, then $\int_{\ppath = \ppath'\pathTransUniv{\act} \state' \in \inducedPathsTa{\pathTa}} \rewAct{\state}{\act}   \diff \ProbModelScheduler{\ma}{\sched}(\ppath) = \rewUMdp{\state}{\act} \cdot \ProbModelScheduler{\umdp}{\taOfSched}(\pathTa)$ follows similarly.
	\end{itemize}

	Combining the two results yields
	\begin{align*}
	\eRModelScheduler{\ma}{\sched}(\PathsNOrTarget[n+1]) 
	\ \ &\overset{\mathclap{\ref{eq:taSchedSameExpRewOfPathsN:RewN},\, \ref{eq:taSchedSameExpRewOfPathsN:RewNplus1}}}{=} \ \sum_{\substack{\pathTa \in \PathsNOrTarget[n+1]}}  \int_{\substack{\ppath \in \inducedPathsTa{\pathTa}}} \rewPath[\ma]{}{\pref{\ppath}{n}} \diff \ProbModelScheduler{\ma}{\sched}(\ppath) \\
	&\newlineIndention + \sum_{\substack{\pathTa \in \PathsNOrTarget[n+1] \\ \length{\pathTa} = n+1}}  \int_{\substack{\ppath = \ppath' \pathTransTimed{} \state' \in \inducedPathsTa{\pathTa}\\ \last{\ppath'} = \state}}   \rewState[]{\state} \cdot \timeOfStamp[] + \rewAct[]{\state}{\actionOfStamp[]}\diff \ProbModelScheduler{\ma}{\sched}(\ppath)  \displaybreak[0]\\
	&\overset{\mathclap{\ref{eq:taSchedSameExpRewOfPathsN:umdpRewN},\, \ref{eq:taSchedSameExpRewOfPathsN:umdpRewNplus1}}}{=}\  \sum_{\pathTa \in \PathsNOrTarget[n+1]} \rewPath[\umdp]{}{\pref{\pathTa}{n}} \cdot \ProbModelScheduler{\umdp}{\taOfSched}(\pathTa)\\
	& \newlineIndention  + \sum_{\substack{\pathTa = \pathTa' \pathTransTa{} \state' \in \PathsNOrTarget[n+1] \\ \length{\pathTa} = n+1}} \rewUMdp{\last{\pathTa'}}{\act} \cdot \ProbModelScheduler{\umdp}{\taOfSched}(\pathTa)\\
	&=\  \sum_{\substack{\pathTa \in \PathsNOrTarget[n+1]}} \rewPath[\umdp]{}{\pathTa} \cdot \ProbModelScheduler{\umdp}{\taOfSched}(\pathTa)
	= \eRModelScheduler{\umdp}{\taOfSched}(\PathsNOrTarget[n+1]).
	\end{align*}
\qed\end{proof}

We now show Proposition~\ref{pro:taSchedSameExpReachRew}.

\propositionTaSchedSameExpReachRew*

\begin{proof}
	The proposition is a direct consequence of Lemma~\ref{lem:ma:rew:PathsNOrTargetApproachesReachRew} and Lemma~\ref{lem:taSchedSameExpRewOfPathsN} as
	\begin{align*}
	\expReachRewMa[]  
	& = \lim_{n\to\infty} \eRModelScheduler{\ma}{\sched}(\rewFct[], \PathsNOrTarget)\\
	& = \lim_{n\to\infty} \eRModelScheduler{\umdp}{\taOfSched}(\rewUMdpFct[], \PathsNOrTarget) 
	=  \expReachRewUMdp.
	\end{align*}
\qed\end{proof}

\subsection{Proof of Theorem~\ref{thm:multiExpReachRewObj}}
\theoremMultiExpReachRewObj*

\begin{proof}
	Let $\obj = \objTuple$ be the considered list of untimed reachability and expected reward objectives with threshold relations ${\rel }= \relTuple$.
	The following equivalences hold for any $\sched \in \GMSched[\ma]$ and $\point \in \RR^\numObjectives$.
	\begin{align*}
	\ma, \sched \models\obj \rel \point & \Longleftrightarrow \forall i \colon \ma, \sched \models \obj[i] \rel[i] \pointi{i} \\ 
	& \overset{\equalityMarker}{\Longleftrightarrow} \forall i \colon  \umdp, \taOfSched \models \obj[i] \rel[i] \pointi{i}   
	\Longleftrightarrow  \umdp, \taOfSched \models \obj \rel \point\ ,
	\end{align*}
	where for the equivalence marked with $\equalityMarker$ we consider two cases:
	If $\obj[i]$ is of the form $\reachObj[]$, Proposition~\ref{pro:taSchedSameReachProb} yields
	\begin{align*}
	\ma, \sched \models \obj[i] \rel[i] \pointi{i} 
	& \Longleftrightarrow \reachProbMa[] \rel[i] \pointi{i}\\
	&	\Longleftrightarrow \reachProbUMdp[] \rel[i] \pointi{i}
	\Longleftrightarrow	\umdp, \taOfSched \models \obj[i] \rel[i] \pointi{i} \ .
	\end{align*}
	Otherwise, $\obj[i]$ is of the form $\expReachRewObj[]$ and with Proposition~\ref{pro:taSchedSameExpReachRew} it follows that
	\begin{align*}
	\ma, \sched \models \obj[i] \rel[i] \pointi{i} 
	& \Longleftrightarrow \expReachRew{\ma}{\sched}{\rewFct[\rewFctIndex]} \rel[i] \pointi{i}\\
	&	\Longleftrightarrow  \expReachRew{\umdp}{\taOfSched}{\rewUMdpFct[\rewFctIndex]}\rel[i] \pointi{i}
	\Longleftrightarrow	\umdp, \taOfSched \models \obj[i] \rel[i] \pointi{i} \ .
	\end{align*}
	The remaining steps of the  proof are completely analogous to the proof of Theorem~\ref{thm:multiReachObj} conducted on page~\pageref{thm:multiReachObj:proof}.
\qed\end{proof}

  \section{Proofs for Timed Reachability}
\label{app:proofsBounded}
\subsection{Proof of Proposition~\ref{pro:bounded:dsBoundedProbEqual}}
Let $\maDef$ be an MA and let $\dma$ be the digitization of $\ma$ with respect to some $\digConstant \in \RRgz$.
We consider the \emph{infinite} paths of $\ma$ that are represented by a \emph{finite} digital path.
\begin{definition}[Induced cylinder of a digital path]
	\label{def:ma:cylOfDigPath}
	Given a digital path $\pathDi \in \FPaths[\dma]$ of MA $\ma$, the \emph{induced cylinder of $\pathDi$} is given by
	\begin{displaymath}
	\cylDi{\pathDi} = \{\ppath \in \IPaths[\ma] \mid \pathDi \text{ is a prefix of } \diOfPath  \}.
	\end{displaymath}
\end{definition}
Recall the definition of the cylinder of a set of finite paths (cf.  App. \ref{App:ProbMeasure}).
If $\pathDi \in \FPaths[\dma]$ does not end with a self-loop at a Markovian state, then 
$\cylDi{\pathDi} = \cylinderset{\inducedPathsDi{\pathDi}}$ holds.

	\begin{example}
		\label{ex:ma:bounded:inducedPathsCyl}
		Let $\ma$ and $\dma$ be as in Fig.~\ref{fig:models}.
		We consider the path $\pathDi_1 = \state_0 \pathTransUniv{\markovianAct}  \state_0 \pathTransUniv{\markovianAct}  \state_0 \pathTransUniv{\markovianAct} \state_3 \pathTransUniv{\altact} \state_4 $ and digitization constant $\digConstant = 0.4$.		
		The set $\cylDi{\pathDi_1}$ contains each infinite path whose digitization has the prefix $\pathDi_1$, i.e.,
		\begin{displaymath}
		\cylDi{\pathDi_1} 
		= \{ \state_0 \pathTransUniv{\ttime} \state_3 \pathTransUniv{\altact} \state_4 \pathTransTimed{} \dots \in \IPaths[\ma] \mid 0.8 \le \ttime < 1.2\}.
		\end{displaymath}
		We observe that these are exactly the paths that have a prefix in $\inducedPathsDi{\pathDi_1}$.
		Put differently, we have  $\cylDi{\pathDi_1} = \cylinderset{\inducedPathsDi{\pathDi_1}}$.
		
		Next, consider the digital path $\pathDi_2 =  \state_0 \pathTransUniv{\markovianAct} \state_0 \pathTransUniv{\markovianAct} \state_0$.
		Note that there is no path $\ppath \in \FPaths[\ma]$ with $\diOfPath = \pathDi_2$, implying $\inducedPathsDi{\pathDi_2} = \emptyset$.
		Intuitively, $\pathDi_2$ depicts a sojourn time at $\last{\pathDi_2}$ but finite paths of MAs do not depict sojourn times at their last state.
		On the other hand, the induced cylinder of $\pathDi_2$ contains all paths that sojourn at least $2 \digConstant$ time units at $\state_0$, i.e.,
		\begin{displaymath}
		\cylDi{\pathDi_2} 
		= \{ \state_0 \pathTransUniv{\ttime} \state_1  \pathTransTimed{} \dots \in \IPaths[\ma] \mid \ttime \ge 0.8  \}.
		\end{displaymath}
	\end{example}

The schedulers $\sched$ and $\diOfSched$ induce the same probabilities for a given digital path.
This is formalized by the following lemma.
Note that  a similar statement for $\taOfSched$ and time-abstract paths was shown in Lemma~\ref{lem:taSchedSamePathProb}.
\begin{lemma}
	\label{lem:app:bounded:diSchedSamePathProb}
	Let $\ma$ be an MA with scheduler $\sched \in \GMSched$, digitization $\dma$, and digital path $\pathDi \in \FPaths[\dma]$.
	It holds that
	\begin{displaymath}
	\ProbModelScheduler{\ma}{\sched}( \cylDi{\pathDi} ) =
	\ProbModelScheduler{\dma}{\diOfSched}(\pathDi).
	\end{displaymath}
\end{lemma}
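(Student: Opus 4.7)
The plan is to prove the lemma by induction on the length $n = \length{\pathDi}$, in close analogy with the proof of Lemma~\ref{lem:taSchedSamePathProb}, but with an additional case split reflecting the self-loops introduced by digitization. For $n=0$ we have $\pathDi = \sinit$, so $\cylDi{\pathDi} = \IPaths[\ma]$ and $\ProbModelScheduler{\dma}{\diOfSched}(\sinit) = 1$, giving the base case. For the inductive step, assume the claim for $\pathDi$ with $\last{\pathDi} = \state$ and extend by one step to obtain $\pathDi \pathTransUniv{\act} \state'$. The idea is to split $\cylDi{\pathDi}$ according to how the next transition is taken in $\ma$ and show that the conditional probability of landing in $\cylDi{\pathDi \pathTransUniv{\act} \state'}$ matches the one-step probability $\schedEval{\diOfSched}{\pathDi}{\act} \cdot \probPdig(\state, \act, \state')$ used in $\dma$.

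I would handle three cases. First, if $\state \in \PS$, then $\act \in \Actions$, $\probPdig = \probP$, and the argument is essentially identical to Lemma~\ref{lem:taSchedSamePathProb}: the one-step measure is $\probP(\state,\act,\state') \int_{\ppath \in \inducedPathsDi{\pathDi}} \schedEval{\sched}{\ppath}{\act}\,\diff\ProbModelScheduler{\ma}{\sched}(\ppath \mid \inducedPathsDi{\pathDi})$, which is exactly $\probPdig(\state,\act,\state') \cdot \schedEval{\diOfSched}{\pathDi}{\act}$ by definition of $\diOfSched$. Second, if $\state \in \MS$ and $\state' \neq \state$, then $\act = \markovianAct$ and the extension corresponds to leaving $\state$ within $\digConstant$ time units and jumping to $\state'$; by integrating the exponential density over $[0,\digConstant]$ we obtain $(1-e^{-\rateAtState{\state}\digConstant}) \cdot \probP(\state,\markovianAct,\state') = \probPdig(\state,\markovianAct,\state')$, and no scheduler value is needed since $\state \in \MS$. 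Third, the self-loop case $\state = \state' \in \MS$ is the most delicate: the event corresponding to one more digitization step at $\state$ in $\dma$ is the disjoint union of (a) sojourning at $\state$ for at least $\digConstant$ time units (probability $e^{-\rateAtState{\state}\digConstant}$), and (b) leaving $\state$ within $\digConstant$ time units but back to $\state$ itself (probability $(1-e^{-\rateAtState{\state}\digConstant}) \cdot \probP(\state,\markovianAct,\state)$), whose sum is exactly $\probPdig(\state,\markovianAct,\state)$.

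The main technical step driving all three cases is that the conditional distribution of the sojourn time at $\state \in \MS$, given that the path is in $\cylDi{\pathDi}$, remains exponential with rate $\rateAtState{\state}$ on the appropriate residual interval, by memorylessness of the exponential distribution. This is what makes the probabilities decompose multiplicatively across digitization steps and lets the induction close cleanly. Concretely, if $\pathDi$ ends with $k$ self-loops at $\state$, then conditioning on $\cylDi{\pathDi}$ already conditions on sojourn time $\ge k\digConstant$, and the remaining sojourn time is again $\mathrm{Exp}(\rateAtState{\state})$; the next self-loop then contributes precisely $\probPdig(\state,\markovianAct,\state)$, independently of $k$.

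The main obstacle I anticipate is bookkeeping around the definition of $\diOfSched$, which is only specified for digital paths ending in $\PS$. When extending a prefix $\pathDi$ ending in $\MS$ by a Markovian step, the scheduler plays no role and the one-step probability is purely determined by the exponential density and $\probP$; when extending a prefix ending in $\PS$, I need to carefully rewrite the integral $\int \schedEval{\sched}{\ppath}{\act}\,\diff\ProbModelScheduler{\ma}{\sched}(\ppath)$ over $\inducedPathsDi{\pathDi}$ as a product of $\ProbModelScheduler{\ma}{\sched}(\cylDi{\pathDi})$ and the conditional integral used in the definition of $\diOfSched$, relying on the fact that $\cylDi{\pathDi} = \cylinderset{\inducedPathsDi{\pathDi}}$ holds whenever $\last{\pathDi} \in \PS$. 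Once these measure-theoretic manipulations are spelled out, the inductive step concludes by $\ProbModelScheduler{\ma}{\sched}(\cylDi{\pathDi \pathTransUniv{\act} \state'}) = \ProbModelScheduler{\ma}{\sched}(\cylDi{\pathDi}) \cdot \schedEval{\diOfSched}{\pathDi}{\act} \cdot \probPdig(\state,\act,\state') = \ProbModelScheduler{\dma}{\diOfSched}(\pathDi \pathTransUniv{\act} \state')$ using the induction hypothesis.
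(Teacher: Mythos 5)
Your proposal is correct and follows essentially the same route as the paper's proof: induction on the length of the digital path, with the probabilistic case handled exactly as in Lemma~\ref{lem:taSchedSamePathProb} via the conditional integral defining $\diOfSched$, and the Markovian case split into the self-loop and non-self-loop subcases yielding $\probPdig(\state,\markovianAct,\state')$. Your explicit appeal to memorylessness of the exponential sojourn time to justify the conditional one-step probability is the same fact the paper uses implicitly when computing $\ProbModelScheduler{\ma}{\sched}(\cylDi{\pathDi \pathTransUniv{\markovianAct} \state'} \mid \cylDi{\pathDi})$, so no substantive difference remains.
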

	\begin{proof}
		The proof is by induction over the length $n$ of $\pathDi$.
		Let $\maDef$ and $\dmaDef$.
		If $n=0$, then $\pathDi = \sinit$ and $\cylDi{\pathDi} = \IPaths[\ma]$.
		Hence, $\ProbModelScheduler{\ma}{\sched}(\cylDi{\sinit}) = 1 = \ProbModelScheduler{\dma}{\diOfSched}(\sinit)$.
		In the induction step it is assumed that the lemma holds for a fixed path $\pathDi \in \FPaths[\dma]$ with $\length{\pathDi} = n$ and $\last{\pathDi} = \state$.
		Consider a path $\pathDi \pathTransTa{} \state' \in \FPaths[\dma]$.
		We distinguish the following cases.
		
		\paragraph{\underline{Case $\state \in \PS$:}}
		It follows that $\cylDi{\pathDi \pathTransTa{} \state'} = \cylinderset{\inducedPathsDi{\pathDi \pathTransTa{} \state'}}$ since $\pathDi \pathTransTa{} \state'$ ends with a probabilistic transition.
		Hence,
		\begin{align*}
		\ProbModelScheduler{\ma}{\sched}(\cylDi{\pathDi \pathTransTa{} \state'})
		& =  \ProbModelScheduler{\ma}{\sched}(\inducedPathsDi{\pathDi \pathTransTa{} \state'})\\
		& = \int_{\ppath \in \inducedPathsDi{\pathDi}} \schedEval{\sched}{\ppath}{\act} \cdot \probP(\state, \act, \state') \diff\ProbModelScheduler{\ma}{\sched}(\ppath) \\
		& = \int_{\ppath \in \inducedPathsDi{\pathDi}} \schedEval{\sched}{\ppath}{\act} \cdot \probP(\state, \act, \state') \diff\ProbModelScheduler{\ma}{\sched}(\{\ppath\} \cap \inducedPathsDi{\pathDi}) \\
		& = \int_{\ppath \in \inducedPathsDi{\pathDi}} \schedEval{\sched}{\ppath}{\act} \cdot \probP(\state, \act, \state') \diff\big[\ProbModelScheduler{\ma}{\sched}(\ppath \mid \inducedPathsDi{\pathDi}) \cdot \ProbModelScheduler{\ma}{\sched}(\inducedPathsDi{\pathDi})\big] \\
		& = \ProbModelScheduler{\ma}{\sched}(\inducedPathsDi{\pathDi}) \cdot \probP(\state, \act, \state') \cdot \int_{\ppath \in \inducedPathsDi{\pathDi}} \schedEval{\sched}{\ppath}{\act} \diff\ProbModelScheduler{\ma}{\sched}(\ppath \mid \inducedPathsDi{\pathDi})  \\
		& = \ProbModelScheduler{\ma}{\sched}(\inducedPathsDi{\pathDi}) \cdot \probP(\state, \act, \state') \cdot \schedEval{\diOfSched}{\pathDi}{\act}  \\
		& \eqInductionHypothesis \ProbModelScheduler{\umdp}{\diOfSched}(\pathDi) \cdot \probP(\state, \act, \state') \cdot \schedEval{\diOfSched}{\pathDi}{\act}  \\
		& = \ProbModelScheduler{\umdp}{\diOfSched}(\pathDi \pathTransTa{} \state').
		\end{align*}
		
		\paragraph{\underline{Case $\state \in \MS$:}}
		As $\state \in \MS$ we have $\act = \markovianAct$ and it follows
		\begin{align}
		\ProbModelScheduler{\ma}{\sched}(\cylDi{\pathDi \pathTransUniv{\markovianAct} \state'}) 
		& = 	\ProbModelScheduler{\ma}{\sched}(\cylDi{\pathDi} \cap \cylDi{\pathDi \pathTransUniv{\markovianAct} \state'}) \nonumber \\
		& = \ProbModelScheduler{\ma}{\sched}(\cylDi{\pathDi}) \cdot \ProbModelScheduler{\ma}{\sched}(\cylDi{\pathDi \pathTransUniv{\markovianAct} \state'} \mid \cylDi{\pathDi}  ). \label{eq:app:lem:diSchedSamePathProb:MSSplit} 
		\end{align}
		Assume that a path $\ppath \in \cylDi{\pathDi}$ has been observed, i.e., $\pref{\diOfPath}{m} = \pathDi$ holds for some $m\ge 0$.
		The term $\ProbModelScheduler{\ma}{\sched}(\cylDi{\pathDi \pathTransUniv{\markovianAct} \state'} \mid \cylDi{\pathDi}  )$ coincides with the probability that also $\pref{\diOfPath}{m+1} = \pathDi \pathTransUniv{\markovianAct} \state'$ holds.
		We have either
		\begin{itemize}
			\item $\state \neq \state'$ which means that the transition from $\state$ to $\state'$ has to be taken during a period of $\digConstant$ time units or
			\item $\state = \state'$ where we additionally have to consider the case that no transition is taken at $\state$ for $\digConstant$ time units.
		\end{itemize}
		It follows that
		\begin{align}
		\ProbModelScheduler{\ma}{\sched}(\cylDi{\pathDi \pathTransUniv{\markovianAct} \state'} \mid \cylDi{\pathDi}  )
		& = 
		\begin{cases}
		\probP(\state, \markovianAct, \state') (1-e^{-\rateAtState{\state} \digConstant}) & \text{if } \state \neq \state' \\
		\probP(\state, \markovianAct, \state') (1-e^{-\rateAtState{\state} \digConstant}) + e^{-\rateAtState{\state} \digConstant} & \text{if } \state = \state' 
		\end{cases}\nonumber\\
		& = \probPdig(\state, \markovianAct, \state'). \label{eq:app:lem:diSchedSamePathProb:MSConditional}
		\end{align}
		We conclude that
		\begin{align*}
		\ProbModelScheduler{\ma}{\sched}(\cylDi{\pathDi \pathTransUniv{\markovianAct} \state'}) 
		\ \ & \overset{\mathclap{\ref{eq:app:lem:diSchedSamePathProb:MSSplit},\, \ref{eq:app:lem:diSchedSamePathProb:MSConditional}}}{=} \ \ \ProbModelScheduler{\ma}{\sched}(\cylDi{\pathDi}) \cdot  \probPdig(\state, \markovianAct, \state') \\
		& \eqInductionHypothesis\ \  \ProbModelScheduler{\dma}{\diOfSched}(\pathDi) \cdot  \probPdig(\state, \markovianAct, \state')
		= \ProbModelScheduler{\dma}{\diOfSched}(\pathDi \pathTransUniv{\markovianAct} \state').
		\end{align*}
	\qed\end{proof}

%

We apply Lemma~\ref{lem:app:bounded:diSchedSamePathProb}  to show Proposition~\ref{pro:bounded:dsBoundedProbEqual}.
The idea of the proof is similar to the proof of Proposition~\ref{pro:taSchedSameReachProb} conducted on page~\pageref{proof:pro:taSchedSameReachProb}.
\propositionDsBoundedProbEqual*
\begin{proof}
  Consider the set $\Pathset_\goalStates^\altInterval \subseteq \FPaths[\dma]$ of paths that (i) visit $\goalStates$ within $\altInterval$ digitization steps and (ii) do not have a proper prefix that satisfies (i).
  Every path in $\eventuallyds[\altInterval] \goalStates$ has a unique prefix in $\Pathset_\goalStates^\altInterval$, yielding
  \begin{align*}
  \eventuallyds[\altInterval] \goalStates &= \bigcupdot_{\pathDi \in \Pathset_\goalStates^\altInterval} \cylinderset{\{\pathDi\}}
  \end{align*}
	For the corresponding paths of $\ma$ we obtain
	\begin{align*}
	\inducedPathsDi{\eventuallyds[\altInterval] \goalStates}
  	& = \{\ppath \in \IPaths[\ma] \mid \diOfPath \in \eventuallyds[\altInterval] \goalStates \} \\
	& = \{ \ppath \in \IPaths[\ma] \mid \diOfPath \text{ has a unique prefix in } \Pathset_\goalStates^\altInterval \} \\
	& = \bigcupdot_{\pathDi \in \Pathset_\goalStates^\altInterval} \cylDi{\pathDi}\ .  
	\end{align*}
  The proposition follows with Lemma~\ref{lem:app:bounded:diSchedSamePathProb} since
  \begin{align*}
  \ProbModelScheduler{\dma}{\diOfSched}(\eventuallyds[\altInterval] \goalStates)
  =  \sum_{\pathDi \in \Pathset_\goalStates^\altInterval} \ProbModelScheduler{\dma}{\diOfSched}(\pathDi)
  \overset{Lem.\,\ref{lem:app:bounded:diSchedSamePathProb}}{=}  \sum_{\pathDi \in \Pathset_\goalStates^\altInterval} \ProbModelScheduler{\ma}{\sched}( \cylDi{\pathDi})
   = 	\ProbModelScheduler{\ma}{\sched}(\inducedPathsDi{\eventuallyds[\altInterval] \goalStates}).
  \end{align*}
\qed\end{proof}

\subsection{Proof of Proposition~\ref{prop:bounded:approxBoundedProb}}
\label{app:proofsbounded:thmApprxBoundedProb}

	The notation $\numOfDS{\pathDi}$ for paths $\pathDi$ of $\dma$ is also applied to paths of $\ma$, where $\numOfDS{\ppath} = \numOfDS{\diOfPath}$ for any $\ppath \in \FPaths[\ma]$.
	Intuitively, one digitization step represents the elapse of at most $\digConstant$ time units.
	Consequently, the duration of a path with $\stepBound \in \NN$ digitization steps is at most $\stepBound \digConstant$.
	\begin{lemma}
		\label{lem:app:digStepsBoundDuration}
		For a path $\ppath \in \FPaths[\ma]$  and digitization constant $\digConstant$ it holds that
		\begin{displaymath}
		\timeOfPath{\ppath} 
		\le \numOfDS{\ppath} \cdot \digConstant
		\ .
		\end{displaymath}
	\end{lemma}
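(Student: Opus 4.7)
The plan is to unfold the definitions of $\timeOfPath{\ppath}$, $\diOfPath$, and $\numOfDS{\cdot}$, and to exploit the defining maximality of the digitized waiting steps $\waitingSteps[i]$. Fix a path $\ppath = \pathFseqTimed$ of length $n = \length{\ppath}$. The first observation is that only Markovian positions contribute to $\timeOfPath{\ppath}$: by the convention fixing $\timeOfStamp$ for stamps, $\state_i \in \PS$ forces $\stamp[i] \in \Actions$ and hence $\timeOfStamp[i] = 0$, while $\state_i \in \MS$ forces $\stamp[i] \in \RRnn$ with $\timeOfStamp[i] = \stamp[i]$. Therefore $\timeOfPath{\ppath} = \sum_{i < n,\, \state_i \in \MS} \timeOfStamp[i]$.

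Next I would read off $\numOfDS{\diOfPath}$ from the construction in Def.~\ref{def:digitizationOfPath}. For each original index $i$, the digitization contributes the block $(\state_i \pathTransUniv{\actionOfStamp[i]})^{\waitingSteps[i]} \state_i \pathTransUniv{\actionOfStamp[i]}$, i.e., $\waitingSteps[i] + 1$ transitions all sourced at $\state_i$. For $\state_i \in \PS$ we have $\timeOfStamp[i] = 0$, so $\waitingSteps[i] = 0$, and the single transition from $\state_i$ does not count towards $\numOfDS$. For $\state_i \in \MS$, each of the $\waitingSteps[i]+1$ transitions sources at the Markovian state $\state_i$ and thus counts. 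Hence
\begin{displaymath}
\numOfDS{\ppath} = \numOfDS{\diOfPath} = \sum_{i < n,\, \state_i \in \MS}(\waitingSteps[i] + 1).
\end{displaymath}

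Finally, the maximality in $\waitingSteps[i] = \max\{\waitingSteps \in \NN \mid \waitingSteps \digConstant \le \timeOfStamp[i]\}$ yields $\timeOfStamp[i] < (\waitingSteps[i]+1)\digConstant$ for every $i$ with $\state_i \in \MS$. Summing over Markovian indices gives
\begin{displaymath}
\timeOfPath{\ppath} = \sum_{i < n,\, \state_i \in \MS} \timeOfStamp[i] \le \sum_{i < n,\, \state_i \in \MS}(\waitingSteps[i]+1)\digConstant = \digConstant \cdot \numOfDS{\ppath},
\end{displaymath}
which is the claim. There is no real obstacle here; the only thing worth pausing on is the bookkeeping for the terminal state $\state_n$: since $\numOfDS$ ranges only over indices strictly below $\length{\diOfPath}$, the (uncounted) sojourn in $\state_n$ does not appear on either side of the inequality, matching the convention that $\timeOfPath{\ppath}$ sums only stamps $\stamp[0],\dots,\stamp[n-1]$.
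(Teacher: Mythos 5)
Your proof is correct and follows essentially the same route as the paper's: identify $\numOfDS{\ppath}$ as $\sum_{i<n,\,\state_i\in\MS}(\waitingSteps[i]+1)$, bound each $\timeOfStamp[i]$ by $(\waitingSteps[i]+1)\digConstant$ via the maximality of $\waitingSteps[i]$, and sum over Markovian positions. The extra bookkeeping you include (why probabilistic positions contribute zero time and zero digitization steps, and why the terminal state is irrelevant) is left implicit in the paper but does not change the argument.
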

	\begin{proof}
		Let $\ppath = \pathFseqTimed$ and let $\waitingSteps[i] = \max \{ \waitingSteps[] \in \NN \mid \waitingSteps[]  \digConstant \le \timeOfStamp[i]  \}$ for each $i  \in \nobreak \{0, \dots, n-1\}$ (as in Definition \ref{def:digitizationOfPath}).
		The number $\numOfDS{\ppath}$ is given by $\sum_{ 0 \le i < n,\, \state_i \in \MS} (\waitingSteps[i] + 1)$. 
		With $\timeOfStamp[i] \le (\waitingSteps[i] + 1) \digConstant$ it follows that
		\begin{displaymath}
		\timeOfPath{\ppath}
		= \sum_{\substack{0 \le i < n\\ \state_i \in \MS}} \timeOfStamp[i]
		\le \sum_{\substack{0 \le i < n\\ \state_i \in \MS}} ( \waitingSteps[i] + 1)  \digConstant = \numOfDS{\ppath} \cdot \digConstant\ .
		\end{displaymath}
	\qed\end{proof}

For a  path $\ppath$ and $\ttime \in \RRnn$, the prefix of $\ppath$ up to time point $\ttime$ is given by $\prefTime{\ppath}{\ttime} = \pref{\ppath}{\max \{ n \mid \timeOfPath{\pref{\ppath}{n}} \le \ttime \}}$.
For the proof of Proposition~\ref{prop:bounded:approxBoundedProb}, we focus on the probability that (under a given scheduler $\sched$) the digitization approach yields an inaccurate estimate of the actual time.
This is the probability that more than $\stepBound \in \NN$ digitization steps have been performed within $\stepBound \digConstant$ time units.
We denote this value by $\ProbModelScheduler{\ma}{\sched}(\dsBoundedPaths{\stepBound \digConstant}{>}{k})$.

	\begin{definition}[Digitization step bounded paths]
		\label{def:app:PathsBoundedSteps}
		Assume an MA $\ma$ and a digitization constant $\digConstant \in \RRgz$.
		For some $\ttime \in \RRnn$, $\stepBound \in \NN$, and ${\rel} \in \{<,\le, >, \ge\}$ the set of paths whose prefix up to time point $\ttime$ has $\rel j$ digitization steps is defined as
		\begin{displaymath}
		\dsBoundedPaths{\ttime}{\rel}{ \stepBound} = \{\ppath \in \IPaths[\ma] \mid \numOfDS{\prefTime{\ppath}{\ttime}} \rel \stepBound  \}.
		\end{displaymath}
	\end{definition}

	\begin{example}
		\label{ex:app:bounded:pathsBoundedSteps}
		Let $\ma$ be the MA given in Fig.~\ref{fig:app:bounded:upperBound:ma}.
		We consider the set $\dsBoundedPaths{5 \digConstant}{\le}{ 5}$.
		The digitization constant $\digConstant$ remains unspecified in this example.
		Fig.~\ref{fig:app:bounded:upperBound:paths} illustrates paths $\ppath_1$, $\ppath_2$, and $\ppath_3$ of $\ma$.
		We depict sojourn times by arrow length.
		For instance, the path $\ppath_1$ corresponds to $\state_0 \pathTransUniv{2.5 \digConstant} \state_0 \pathTransUniv{1.8 \digConstant} \state_1 \pathTransUniv{1.7 \digConstant} \dots \in \IPaths[\ma]$.
		Digitization steps that are ``earned'' by sojourning at some state for a multiple of $\digConstant$ time units are indicated by black dots.
		Transitions of $\ppath_i$ (where $i \in \{1,2,3\}$) that do not belong to $\prefTime{\ppath_i}{5 \digConstant}$ are depicted in gray.
		We obtain
		\begin{align*}
		\numOfDS{\prefTime{\ppath_1}{5\digConstant}} = 5  \quad & \implies \quad  \ppath_1 \in  \dsBoundedPaths{5 \digConstant}{\le}{ 5} \\
		\numOfDS{\prefTime{\ppath_2}{5\digConstant}} = 4 \quad & \implies \quad \ppath_2 \in  \dsBoundedPaths{5 \digConstant}{\le}{ 5} \\
		\numOfDS{\prefTime{\ppath_3}{5\digConstant}} = 7 \quad & \implies \quad  \ppath_3 \notin  \dsBoundedPaths{5 \digConstant}{\le}{ 5}\ .
		\end{align*}
		Note that only the digitization steps of the prefix up to time point $5 \digConstant$ are considered.
		For example, the step of $\ppath_2$ at time point $4.5 \digConstant$ is not considered since the corresponding transition is not part of $\prefTime{\ppath_2}{5\digConstant}$.
		However, we have $\numOfDS{\prefTime{\ppath_2}{5.5 \digConstant}} = 6$, implying $\ppath_2 \notin \dsBoundedPaths[]{5.5 \digConstant}{\le}{5}$.
		
		All considered paths reach $\goalStates = \{\state_1 \}$ within $5 \digConstant$ time units but $\ppath_3 \in \dsBoundedPaths{5 \digConstant}{>}{5}$ requires more than $5$ digitization steps.
	\end{example}
	\begin{figure}[t]
		\centering
		\subfigure[MA $\ma$.]{
			\scalebox{\picscale}{
				\begin{tikzpicture}[scale=1]
    \draw[white, use as bounding box] (-0.8,-2.6) rectangle (0.8,1.1); 
    
    \node [state] (s0) at (0,0) {$\state_0$};
    \node [state, accepting] (s1) [on grid, below=15mm of s0] {$\state_1$};
    
    \initstateLeft{s0}

    \draw (s0) edge[markovian] node[pos=0.25, right=-2pt] {\scriptsize$2$} node[action] (s0markovian) {} node[pos=0.7, right=-2pt] {\scriptsize$0.5$} (s1);
    \draw (s0markovian) edge[markovian, bend left=60] node[left=-2pt] {\scriptsize$0.5$} (s0);
    \draw (s1) edge[markovian, loop below] node[below] {\scriptsize$1$} (s1);
\end{tikzpicture}
			}
			\label{fig:app:bounded:upperBound:ma}
		}
		\subfigure[Sample paths of $\ma$.]{
			\scalebox{\picscale}{
				\begin{tikzpicture}[scale=1]
    
   	\draw (0,0) edge[->] (7.5,0);
    \foreach \n in {0,...,6}{
    	\node[] at (\n,0) {\scriptsize$\mid$};
    	\node[below] at (\n,0) {\scriptsize$\n\digConstant$};
   	}
    \node[below] at (6.9,0) {time};
    \draw (5,0) edge[thresholdColor, thick, dashed] (5,2.5);
    
    \node [inner sep=0pt, outer sep=0pt] (p1-1) at (0,2) {\scriptsize$\state_0$};
    \node [action] (p1-2) [on grid, right=1 of p1-1] {};
    \node [action] (p1-3) [on grid, right=1 of p1-2] {};
    \node [inner sep=0pt, outer sep=0pt] (p1-4) [on grid, right=0.5 of p1-3]  {\scriptsize$\state_0$};
    \node [action] (p1-5) [on grid, right=1 of p1-4] {};
    \node [inner sep=0pt, outer sep=0pt] (p1-6) [on grid, right=0.8 of p1-5]  {\scriptsize$\state_1$};
    \node [neutralColor, action] (p1-7) [on grid, right=1 of p1-6] {};
    \node [neutralColor, inner sep=0pt, outer sep=0pt] (p1-8) [on grid, right=0.7 of p1-7]  {\scriptsize$\state_1$};
    \draw[->] (p1-1) edge (p1-4)  (p1-4) edge (p1-6)   (p1-6) edge[neutralColor] (p1-8);
    \node[anchor=east]   at ($(p1-1) + (-0.2,0)$)  {$\ppath_1$:};
    \node[neutralColor, anchor=west]   at ($(p1-8) + (0.1,0)$)  {\scriptsize$\cdots$};
    

    \node [inner sep=0pt, outer sep=0pt](p2-1) at (0,1.25) {\scriptsize$\state_0$};
    \node [inner sep=0pt, outer sep=0pt] (p2-2) [on grid, right=0.75 of p2-1]  {\scriptsize$\state_0$};
    \node [action] (p2-3) [on grid, right=1 of p2-2] {};
    \node [action] (p2-4) [on grid, right=1 of p2-3] {};
    \node [inner sep=0pt, outer sep=0pt] (p2-5) [on grid, right=0.75 of p2-4]  {\scriptsize$\state_1$};
    \node [neutralColor, action] (p2-6) [on grid, right=1 of p2-5] {};
    \node [neutralColor, inner sep=0pt, outer sep=0pt] (p2-7) [on grid, right=0.75 of p2-6]  {\scriptsize$\state_1$};
    \draw[->] (p2-1) edge (p2-2)  (p2-2) edge (p2-5)   (p2-5) edge[neutralColor] (p2-7);
    \node[anchor=east]   at ($(p2-1) + (-0.2,0)$)  {$\ppath_2$:};
    \node[neutralColor, anchor=west]   at ($(p2-7) + (0.1,0)$)  {\scriptsize$\cdots$};
    
    \node [inner sep=0pt, outer sep=0pt](p3-1) at (0,0.5) {\scriptsize$\state_0$};
    \node [inner sep=0pt, outer sep=0pt] (p3-2) [on grid, right=0.5 of p3-1]  {\scriptsize$\state_0$};
    \node [inner sep=0pt, outer sep=0pt] (p3-3) [on grid, right=0.8 of p3-2]  {\scriptsize$\state_0$};
    \node [action] (p3-4) [on grid, right=1 of p3-3] {};
    \node [inner sep=0pt, outer sep=0pt] (p3-5) [on grid, right=0.4 of p3-4]  {\scriptsize$\state_0$};
    \node [, action] (p3-6) [on grid, right=1 of p3-5] {};
    \node [, inner sep=0pt, outer sep=0pt] (p3-7) [on grid, right=0.6 of p3-6]  {\scriptsize$\state_0$};
    \node [, inner sep=0pt, outer sep=0pt] (p3-8) [on grid, right=0.5 of p3-7]  {\scriptsize$\state_1$};
    \node [neutralColor, inner sep=0pt, outer sep=0pt] (p3-9) [on grid, right=0.9 of p3-8]  {\scriptsize$\state_1$};
    \draw[->] (p3-1) edge (p3-2)  (p3-2) edge (p3-3)   (p3-3) edge[] (p3-5) (p3-5) edge[] (p3-7) (p3-7) edge[] (p3-8) (p3-8) edge[neutralColor] (p3-9);
    \node[anchor=east]   at ($(p3-1) + (-0.2,0)$)  {$\ppath_3$:};
    \node[neutralColor, anchor=west]   at ($(p3-9) + (0.1,0)$)  {\scriptsize$\cdots$};
%
%
%
%
%
%
\end{tikzpicture}
			}
			\label{fig:app:bounded:upperBound:paths}
		}
		\caption{MA $\ma$ and illustration of paths of $\ma$ (cf. Example~\ref{ex:app:bounded:pathsBoundedSteps}).}
		\label{fig:app:bounded:upperBound}
	\end{figure}

\noindent The following lemma gives an upper bound for the probability $\ProbModelScheduler{\ma}{\sched}(\dsBoundedPaths{\stepBound \digConstant}{>}{\stepBound})$.
	\begin{lemma}
		\label{lem:app:bounded:upperBoundForDSBoundedPaths}
		Let $\ma$ be an MA with $\sched \in \GMSched$ and maximum rate $\rateMax = \max \{\rateAtState{\state} \mid \state \in \MS \}$.
		Further, let $\digConstant \in \RRgz$  and $\stepBound \in \NN$.
		It holds that
		\begin{displaymath}
		\ProbModelScheduler{\ma}{\sched}(\dsBoundedPaths{\stepBound \digConstant}{>}{\stepBound})
		\le 1- (1+ \rateMax \digConstant)^{\stepBound} \cdot e^{- \rateMax \digConstant \stepBound}
		\end{displaymath}
	\end{lemma}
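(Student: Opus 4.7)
I will prove the bound by induction on $\stepBound$. The right-hand side factors as $1 - \bigl((1{+}\rateMax\digConstant)e^{-\rateMax\digConstant}\bigr)^\stepBound$, where $(1{+}\rateMax\digConstant)e^{-\rateMax\digConstant}$ equals the probability that a Poisson random variable of mean $\rateMax\digConstant$ takes value at most $1$. This strongly suggests decomposing $[0, \stepBound\digConstant]$ into $\stepBound$ sub-intervals of length $\digConstant$ and arguing per sub-interval via the rate bound $\rateAtState{\state} \le \rateMax$.

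For the base case $\stepBound = 0$, the prefix $\prefTime{\ppath}{0} = \pref{\ppath}{0}$ is just the initial state, so $\numOfDS = 0$ deterministically and the right-hand side is $0$. For the inductive step, after resolving any probabilistic initial state instantaneously, I condition on the sojourn time $T_0$ in the first Markovian state $\state$. In the subcase $T_0 > \digConstant$, exactly one self-loop digitization step is committed within $[0, \digConstant]$; by memorylessness, the process from time $\digConstant$ onward is distributionally identical to a fresh execution of $\ma$ from $\state$ with budget $(\stepBound{-}1)\digConstant$, so the hypothesis applies with an overall multiplicative factor $P(T_0 > \digConstant) \ge e^{-\rateMax\digConstant}$. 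In the subcase $T_0 \le \digConstant$, one transition step is collected in the first sub-interval; integrating the density bounded by $\rateMax e^{-\rateMax t}$ over $[0, \digConstant]$ and reapplying the hypothesis to the continuation produces the additional $\rateMax\digConstant$ summand, combining to yield the per-sub-interval factor $(1{+}\rateMax\digConstant)e^{-\rateMax\digConstant}$.

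\textbf{Main obstacle:} The event $\numOfDS{\prefTime{\ppath}{\stepBound\digConstant}} > \stepBound$ is \emph{not} equivalent to ``some sub-interval sees at least two Markovian transitions''. For instance, a path with exactly one Markovian transition per sub-interval can still exceed $\stepBound$ digitization steps when a single sojourn straddles a $\digConstant$-boundary and thereby contributes an extra self-loop. Hence a naive union bound over sub-intervals is insufficient, and the inductive subcase $T_0 \le \digConstant$ is delicate because the residual budget $\stepBound\digConstant - T_0$ is not a clean multiple of $\digConstant$, and the monotonicity of $P(\numOfDS \le m)$ in the budget argument points the wrong way for a direct invocation of the hypothesis. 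A clean resolution either strengthens the induction to handle fractional budgets, or observes that the supremum over all MAs with maximum rate $\rateMax$ is attained by the single-state CTMC with self-loop rate $\rateMax$, for which the iid geometric chunk-count structure allows the claimed bound to be verified with equality by direct computation.
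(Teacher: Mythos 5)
Your skeleton coincides with the paper's: both argue by induction on $\stepBound$ via the complementary lower bound $\ProbModelScheduler{\ma}{\sched}(\dsBoundedPaths{\stepBound \digConstant}{\le}{\stepBound}) \ge (1+\rateMax\digConstant)^{\stepBound}e^{-\rateMax\digConstant\stepBound}$, peel off the first length-$\digConstant$ interval, and split on whether the first Markovian sojourn exceeds $\digConstant$. But you have only \emph{diagnosed} the central difficulty --- the residual budget $\stepBound\digConstant+\digConstant-\ttime$ after an early transition at time $\ttime<\digConstant$ is not a multiple of $\digConstant$ --- and then stopped, offering two unexecuted escape routes. Resolving exactly this is the substance of the proof. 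The paper does it with a separate auxiliary result (Lemma~\ref{lem:app:LowerBoundForPathsWithBoundedStepsPlusTime}): for all $\ttime\in\RRnn$,
\begin{displaymath}
\ProbModelScheduler{\ma}{\sched}(\dsBoundedPaths{\stepBound \digConstant + \ttime}{\le}{ \stepBound}) \ge \ProbModelScheduler{\ma}{\sched}(\dsBoundedPaths{\stepBound \digConstant}{\le}{ \stepBound}) \cdot e^{-\rateMax\ttime},
\end{displaymath}
proved by showing that a path lies in the left-hand set iff it lies in $\dsBoundedPaths{\stepBound\digConstant}{\le}{\stepBound}$ and takes no transition during the window of length $\ttime$ after time $\stepBound\digConstant$, the latter having conditional probability at least $e^{-\rateMax\ttime}$. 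Only with this penalty factor $e^{-\rateMax(\digConstant-\ttime)}$ inserted into the early-transition integral, and only after the (not entirely obvious) final inequality $e^{(\rateMax-\rateAtState{\sinit})\digConstant} + \rateAtState{\sinit}\int_0^{\digConstant}e^{(\rateMax-\rateAtState{\sinit})\ttime}\,\diff\ttime \ge 1+\rateMax\digConstant$, does the per-interval factor $(1+\rateMax\digConstant)e^{-\rateMax\digConstant}$ emerge. Your sketch of this branch --- ``integrate the density bounded by $\rateMax e^{-\rateMax t}$ and reapply the hypothesis'' --- does not produce it: $\int_0^\digConstant \rateMax e^{-\rateMax t}\,\diff t = 1-e^{-\rateMax\digConstant}$, not $\rateMax\digConstant e^{-\rateMax\digConstant}$; the pointwise bound $\rateAtState{\state}e^{-\rateAtState{\state}t}\le \rateMax e^{-\rateMax t}$ fails for $\rateAtState{\state}<\rateMax$ and large $t$; and an \emph{upper} bound on the density is the wrong direction when lower-bounding the good event.

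Your proposed alternative resolution --- that the supremum over all MAs with maximum rate $\rateMax$ is attained by the single-state CTMC with rate $\rateMax$ --- is also left unproven, and it is not a routine observation. The digitization-step count is not pathwise monotone under slowing down a sojourn (a longer sojourn that still completes within the budget contributes \emph{more} self-loop steps, while one that fails to complete contributes none), so the claimed stochastic domination, uniformly over history-dependent randomized schedulers and over MAs interleaving probabilistic states with Markovian states of heterogeneous rates, would itself require a coupling argument of comparable weight to the paper's induction. In short: the proposal identifies the correct structure and correctly names the obstacle, but the step that makes the proof work is missing.
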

For the proof of Lemma~\ref{lem:app:bounded:upperBoundForDSBoundedPaths} we employ the following auxiliary lemma.
\begin{lemma}
	\label{lem:app:LowerBoundForPathsWithBoundedStepsPlusTime}
	Let $\ma$ be an MA with $\sched \in \GMSched[]$ and maximum rate $\rateMax = \max \{\rateAtState{\state} \mid \state \in \MS \}$.
	For each $\digConstant \in \RRgz$, $\stepBound \in \NN$, and $\ttime \in \RRnn$ it holds that
	\begin{displaymath}
	\ProbModelScheduler{\ma}{\sched}(\dsBoundedPaths{\stepBound \digConstant + \ttime}{\le}{ \stepBound}) \ge \ProbModelScheduler{\ma}{\sched}(\dsBoundedPaths{\stepBound \digConstant}{\le}{ \stepBound}) \cdot e^{-\rateMax\ttime}\ .
	\end{displaymath}
\end{lemma}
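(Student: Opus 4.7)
The plan is to establish the inequality via a one-step memorylessness argument at the wall-clock boundary $\stepBound \digConstant$. First I would observe the set inclusion $\dsBoundedPaths{\stepBound \digConstant + \ttime}{\le}{ \stepBound} \subseteq \dsBoundedPaths{\stepBound \digConstant}{\le}{ \stepBound}$, which holds because $\prefTime{\ppath}{\stepBound \digConstant}$ is always a prefix of $\prefTime{\ppath}{\stepBound \digConstant + \ttime}$ and therefore cannot contain strictly more Markovian-state visits. Consequently it suffices to prove the conditional estimate
\begin{displaymath}
\ProbModelScheduler{\ma}{\sched}\bigl(\dsBoundedPaths{\stepBound \digConstant + \ttime}{\le}{ \stepBound} \,\big|\, \dsBoundedPaths{\stepBound \digConstant}{\le}{ \stepBound}\bigr) \ \ge\ e^{-\rateMax \ttime},
\end{displaymath}
the case of a zero-probability conditioning event being trivial.

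For the conditional bound I would decompose a path $\ppath$ at time $\stepBound \digConstant$: writing $\pref{\ppath}{n} = \prefTime{\ppath}{\stepBound \digConstant}$, let $\state = \ithStateOfPath{\ppath}{n}$ be the state occupied at time $\stepBound \digConstant$. Since probabilistic states have sojourn time zero, for almost every path we have $\state \in \MS$ and the residual sojourn at $\state$ is strictly positive. A sufficient condition for $\ppath \in \dsBoundedPaths{\stepBound \digConstant + \ttime}{\le}{ \stepBound}$ is that no transition is fired within the time window $(\stepBound \digConstant, \stepBound \digConstant + \ttime]$: in that case $\prefTime{\ppath}{\stepBound \digConstant + \ttime} = \prefTime{\ppath}{\stepBound \digConstant}$ and the digitization counts coincide.

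The key step is then to invoke the memoryless property of the exponential distribution: conditional on the history $\pref{\ppath}{n}$ and on the sojourn at $\state$ already having lasted $\stepBound \digConstant - \timeOfPath{\pref{\ppath}{n}}$ time units, the probability that it extends by a further $\ttime$ time units equals $e^{-\rateAtState{\state} \ttime}$, which is at least $e^{-\rateMax \ttime}$ by the definition of $\rateMax$. Since this lower bound is uniform in the history, integrating over all histories compatible with $\dsBoundedPaths{\stepBound \digConstant}{\le}{ \stepBound}$---using the step-probability measure $\StepMeasure{\ma}{\sched}{\cdot}$ from Appendix~\ref{App:ProbMeasure} to disintegrate along the partition indexed by $n$ and the cylinders induced by $\pref{\ppath}{n}$---yields the claim.

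The main obstacle I anticipate is the measure-theoretic bookkeeping, namely disintegrating $\ProbModelScheduler{\ma}{\sched}$ along a cut at \emph{wall-clock} time $\stepBound \digConstant$ rather than along a fixed transition index, and cleanly factoring the residual-sojourn probability out of the resulting integral. The probabilistic content itself---memorylessness combined with the uniform bound $\rateAtState{\state} \le \rateMax$---is straightforward, and the degenerate cases ($\MS = \emptyset$, $\ttime = 0$, or vanishing conditioning probability) reduce the statement to a triviality.
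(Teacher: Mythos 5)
Your proposal is correct and follows essentially the same route as the paper's proof: both reduce the claim to the event that no transition fires in the window $(\stepBound\digConstant, \stepBound\digConstant+\ttime]$, partition by the Markovian state occupied at time $\stepBound\digConstant$, and apply memorylessness together with the uniform bound $\rateAtState{\state}\le\rateMax$. The only cosmetic difference is that the paper establishes the exact set identity (no transition in the window is also \emph{necessary}, via Lemma~\ref{lem:app:digStepsBoundDuration}) and then sums, whereas you only use sufficiency to lower-bound a conditional probability, which suffices for the inequality.
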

\begin{proof}
	First, we show that the set $\dsBoundedPaths{\stepBound \digConstant + \ttime}{\le}{ \stepBound}$ corresponds to the paths of $\dsBoundedPaths{\stepBound \digConstant}{\le}{ \stepBound}$ with the additional requirement that no transition is taken between the time points $\stepBound \digConstant$ and $\stepBound \digConstant + \ttime$, i.e.,
	\begin{displaymath}
	\dsBoundedPaths{\stepBound \digConstant + \ttime}{\le}{ \stepBound} 
	= \{\ppath \in  \dsBoundedPaths{\stepBound \digConstant}{\le}{ \stepBound} \mid \text{there is no prefix } \ppath' \text{ of } \ppath \text{ with } \stepBound \digConstant < \timeOfPath{\ppath'} \le \stepBound \digConstant + \ttime\}.
	\end{displaymath}
	\begin{itemize}
		\item[``$\subseteq$'':] If $\ppath \in \dsBoundedPaths{\stepBound \digConstant + \ttime}{\le}{ \stepBound}$, then $\ppath  \in \dsBoundedPaths{\stepBound \digConstant}{\le}{ \stepBound}$ follows immediately.
		Furthermore, assume towards a contradiction that there is a prefix $\ppath'$ of $\ppath$ with $\stepBound  \digConstant < \timeOfPath{\ppath'}  \le \stepBound \digConstant + \ttime$.
		Then, $\stepBound <  \nicefrac{\timeOfPath{\ppath'}}{\digConstant} \le \numOfDS{\ppath'} $ (cf. Lemma~\ref{lem:app:digStepsBoundDuration}).
		As $\timeOfPath{\ppath'} \le \stepBound \digConstant + \ttime$, this means that $\numOfDS{\prefTime{\ppath}{\stepBound\digConstant + \ttime}} \ge \numOfDS{\ppath'} > k$ which contradicts $\ppath  \in \dsBoundedPaths{\stepBound \digConstant + \ttime}{\le}{ \stepBound}$.
		\item[``$\supseteq$'':] For $\ppath \in \dsBoundedPaths{\stepBound \digConstant}{\le}{ \stepBound}$ with no prefix $\ppath'$ such that $\stepBound \digConstant < \timeOfPath{\ppath'}  \le \stepBound \digConstant + \ttime$, it holds that $\prefTime{\ppath}{\stepBound \digConstant + \ttime} = \prefTime{\ppath}{\stepBound \digConstant}$.
		Hence, $\numOfDS{\prefTime{\ppath}{\stepBound \digConstant + \ttime}} = \numOfDS{\prefTime{\ppath}{\stepBound \digConstant}} \le \stepBound$ and it follows that $\ppath \in \dsBoundedPaths{\stepBound \digConstant + \ttime}{\le}{ \stepBound}$.
	\end{itemize}
	The probability for no transition to be taken between $\stepBound \digConstant$ and $\stepBound \digConstant + \ttime$ only depends on the current state at time point $\stepBound \digConstant$.
	More precisely, for some state $\state \in \MS$ assume the set of paths $\{\ppath \in \dsBoundedPaths{\stepBound \digConstant}{\le}{ \stepBound} \mid \last{\prefTime{\ppath}{\stepBound \digConstant}} = \state\}$. The probability that a path in this set takes no transition between time points $\stepBound \digConstant$ and $\stepBound \digConstant + \ttime$ is given by $e^{-\rateAtState{\state} \ttime}$.
	With $\rateMax \ge \rateAtState{\state}$ for all $\state \in \MS$ it follows that
	\begin{align*}
	& \mathrel{\phantom{=}}\ProbModelScheduler{\ma}{\sched}(\dsBoundedPaths{\stepBound \digConstant + \ttime}{\le}{ \stepBound}) \\
	& = \ProbModelScheduler{\ma}{\sched}( \{\ppath \in \dsBoundedPaths{\stepBound \digConstant}{\le}{ \stepBound} \mid  \text{there is no prefix } \ppath' \text{ of } \ppath \text{ with } \stepBound \digConstant < \timeOfPath{\ppath'}  \le \stepBound \digConstant + \ttime\} ) \\
	& = \sum_{\state \in \MS} \ProbModelScheduler{\ma}{\sched}( \{\ppath \in \dsBoundedPaths{\stepBound \digConstant}{\le}{ \stepBound} \mid \last{\prefTime{\ppath}{\stepBound \digConstant}} = \state \} ) \cdot e^{-\rateAtState{\state} \ttime}\\
	& \ge  \sum_{\state \in \MS} \ProbModelScheduler{\ma}{\sched}( \{\ppath \in \dsBoundedPaths{\stepBound \digConstant}{\le}{ \stepBound} \mid \last{\prefTime{\ppath}{\stepBound \digConstant}} = \state \} ) \cdot e^{-\rateMax \ttime}\\
	& = \ProbModelScheduler{\ma}{\sched}(\dsBoundedPaths{\stepBound \digConstant}{\le}{ \stepBound}) \cdot e^{-\rateMax \ttime}\ .
	\end{align*}
\qed\end{proof}

\begin{proof}[of Lemma~\ref{lem:app:bounded:upperBoundForDSBoundedPaths}]
	Let $\maDefNR$.
	By induction over $\stepBound$ we show that 
	\begin{align*}
	\label{eq:app:bounded:lowerBoundForDSBoundedPaths}
	\ProbModelScheduler{\ma}{\sched}(\dsBoundedPaths{\stepBound \digConstant}{\le}{ \stepBound}) 
	\ge (1+ \rateMax \digConstant)^{\stepBound} \cdot e^{- \rateMax \digConstant \stepBound}. 
	\end{align*}
	The claim follows as 
	$\dsBoundedPaths{\stepBound \digConstant}{>}{\stepBound} = \IPaths[\ma] \setminus \dsBoundedPaths{\stepBound \digConstant}{\le}{\stepBound}$.
	
	For $\stepBound=0$, we have $\ppath \in \dsBoundedPaths{0 \cdot \digConstant}{\le}{ 0}$ iff $\ppath$ takes no Markovian transition at time point zero.
	As this happens with probability one, it follows that
	\begin{displaymath}
	\ProbModelScheduler{\ma}{\sched}(\dsBoundedPaths{0 \cdot \digConstant}{\le}{ 0}) 
	= 1
	= (1+ \rateMax  \digConstant)^0 \cdot e^{-\rateMax \digConstant \cdot 0 }\ .
	\end{displaymath}
	
	We assume in the induction step that the proposition holds for some fixed $\stepBound$.
	We distinguish between two cases for the initial state $\sinit$ of $\ma$.
	
	\paragraph{\underline{Case $\sinit \in \MS$}:}
	We partition the set 
	$\dsBoundedPaths{\stepBound \digConstant + \digConstant}{\le}{ \stepBound + 1} = \PathsWaitDelta \cupdot \PathsNotWaitDelta$ with
	\begin{align*}
	\PathsWaitDelta &= \{ \sinit \pathTransUniv{\ttime} \pathstepTimed{1} \dots \in \dsBoundedPaths{\stepBound \digConstant + \digConstant}{\le}{ \stepBound + 1} \mid \ttime \ge \digConstant \} 
	\text{ and } \\ 
	\PathsNotWaitDelta &= \{ \sinit \pathTransUniv{\ttime} \pathstepTimed{1} \dots \in \dsBoundedPaths{\stepBound \digConstant + \digConstant}{\le}{ \stepBound + 1} \mid \ttime < \digConstant \}.
	\end{align*}
	Hence, $\PathsWaitDelta$ contains the paths where we wait at least $\digConstant$ time units at $\sinit$ and  $\PathsNotWaitDelta$ contains the paths where the first transition is taken within $\ttime < \digConstant$ time units.
	It follows that $\ProbModelScheduler{\ma}{\sched}(\dsBoundedPaths{\stepBound \digConstant + \digConstant}{\le}{ \stepBound + 1} ) = \ProbModelScheduler{\ma}{\sched}(\PathsWaitDelta) + \ProbModelScheduler{\ma}{\sched}(\PathsNotWaitDelta)$.
	We consider the probabilities for $\PathsWaitDelta$ and $\PathsNotWaitDelta$ separately.
	\begin{itemize}
		\item $\ProbModelScheduler{\ma}{\sched}(\PathsWaitDelta)$:
		For a path $\sinit \pathTransUniv{\ttime+\digConstant} \pathstepTimed{1} \dots \in \PathsWaitDelta$, after the first $\digConstant$ time units there are at most $\stepBound$ digitization steps within the next $\stepBound  \digConstant$ time units, i.e.,
		\begin{displaymath}
		\sinit \pathTransUniv{\ttime + \digConstant} \pathstepTimed{1} \dots \in \PathsWaitDelta \iff 
		\sinit \pathTransUniv{\ttime} \pathstepTimed{1} \dots \in \dsBoundedPaths{\stepBound \digConstant}{\le}{ \stepBound}.
		\end{displaymath}
		The probability for $\PathsWaitDelta$ can therefore be derived from the probability to wait at $\sinit$ for at least $\digConstant$ time units and the probability for $\dsBoundedPaths{\stepBound \digConstant}{\le}{ \stepBound}$.
		In order to apply this, we need to modify the considered scheduler as it might depend on the sojourn time in $\sinit$.
		Let $\sched_\digConstant$ be the scheduler for $\ma$ that mimics $\sched$ on paths where the first transition is delayed by $\digConstant$, i.e., $\sched_\digConstant$ satisfies
		\begin{displaymath}
		\schedEval{\sched_\digConstant}{\sinit \pathTransUniv{\ttime} \dots \pathstepTimedEnd{n-1}{n}}{\act}
		=
		\schedEval{\sched}{\sinit \pathTransUniv{\ttime + \digConstant} \dots  \pathstepTimedEnd{n-1}{n}}{\act}.
		\end{displaymath}
		for all $\sinit \pathTransUniv{\ttime} \dots \pathstepTimedEnd{n-1}{n} \in \FPaths[\ma]$ and $\act \in \Actions$.
		It holds that
		\begin{align}
		\ProbModelScheduler{\ma}{\sched}( \PathsWaitDelta )
		& =  e^{-\rateAtState{\sinit} \digConstant} \cdot \ProbModelScheduler{\ma}{\sched_\digConstant}( \dsBoundedPaths{\stepBound \digConstant}{\le}{ \stepBound}) \nonumber \\
		& \geInductionHypothesis e^{-\rateAtState{\sinit} \digConstant} \cdot (1+ \rateMax \digConstant)^{\stepBound} \cdot e^{- \rateMax \digConstant \stepBound}  \nonumber \\
		& =  e^{-\rateAtState{\sinit} \digConstant} \cdot  (1+ \rateMax \digConstant)^{\stepBound} \cdot e^{- \rateMax \digConstant \stepBound}   \cdot e^{-\rateMax \digConstant} \cdot e^{\rateMax \digConstant} \nonumber\\
		& =  (1+ \rateMax \digConstant)^{\stepBound} \cdot e^{-\rateMax \digConstant (\stepBound+1)} \cdot e^{(\rateMax-\rateAtState{\sinit}) \digConstant}\ . \label{eq:app:boundedUpperBoundOnBoundedReachProb:SlowPaths}
		\end{align}
		\item $\ProbModelScheduler{\ma}{\sched}(\PathsNotWaitDelta)$:
		For a path $\sinit \pathTransUniv{\ttime} \pathstepTimed{1} \dots \in \PathsNotWaitDelta$, the first digitization step happens at less than $\digConstant$ time units, i.e., $0 \le  \ttime <\digConstant$.
		It follows that  there are at most $\stepBound$ digitization steps  in the remaining $ \stepBound \digConstant + \digConstant - \ttime$ time units, i.e.,
		\begin{displaymath}
		\sinit \pathTransUniv{\ttime} \pathstepTimed{1} \pathstepTimed{2} \dots \in \PathsNotWaitDelta
		\iff 
		\pathstepTimed{1} \pathstepTimed{2} \dots \in \dsBoundedPaths[\state_1]{ \stepBound \digConstant + \digConstant -  \ttime}{\le \stepBound}\ ,
		\end{displaymath}
		where $\dsBoundedPaths[\state_1]{ \stepBound \digConstant + \digConstant -   \ttime}{\le}{ \stepBound}$ refers to the paths $\dsBoundedPaths{ \stepBound \digConstant + \digConstant -  \ttime}{\le}{ \stepBound}$ of $\maWithInitState{\state_1} = (\States, \Actions,\allowbreak \probTransRel, \state_1, \rewFct[1], \dots, \rewFct[\numRewFunctions])$, the MA obtained from $\ma$ by changing the initial state to $\state_1$. 
		Hence, the probability for $\PathsNotWaitDelta$ can be derived from the probability to take a transition from $\sinit$ to some state $\state$ within $ \ttime < \digConstant$ time units and the probability for $\dsBoundedPaths[\state]{ \stepBound \digConstant + \digConstant -   \ttime}{\le}{ \stepBound} $.
		Again, we need to adapt the considered scheduler.
		Let $\ppath \in \FPaths[\ma]$ with $\last{\ppath} = \state$.
		The scheduler $\schedShift{\sched}{\ppath}$ for $\maWithInitState{\state}$ mimics the scheduler $\sched$ for $\ma$, where $\ppath$ is prepended to the given path, i.e., we set
		\begin{displaymath}
		\schedEval{\schedShift{\sched}{\ppath}}{\state \pathTransTimed{j} \dots \pathstepTimedEnd{n-1}{n}}{\act}
		=
		\schedEval{\sched}{\ppath \pathTransTimed{j} \dots  \pathstepTimedEnd{n-1}{n}}{\act}
		\end{displaymath}
		for all $\state \pathTransTimed{j} \dots \pathstepTimedEnd{n-1}{n} \in \FPaths[\maWithInitState{\state}]$ and $\act \in \Actions$.
		With Lemma~\ref{lem:app:LowerBoundForPathsWithBoundedStepsPlusTime} it follows that
		\begin{align}
		& \mathrel{\phantom{=}}\ProbModelScheduler{\ma}{\sched}( \PathsNotWaitDelta  )\nonumber\\
		& =  \int_{0}^{\digConstant} \rateAtState{\sinit} \cdot e^{-\rateAtState{\sinit} \ttime} \cdot \left(\sum_{\state \in \States} \probP(\sinit, \markovianAct, \state) \cdot \ProbModelScheduler{\maWithInitState{\state}}{\schedShift{\sched}{\ppath}}( \dsBoundedPaths[\state]{ \stepBound \digConstant + \digConstant - \ttime}{\le}{ \stepBound} ) \right) \diff \ttime \nonumber\\
		& \ge  \int_{0}^{\digConstant} \rateAtState{\sinit} \cdot e^{-\rateAtState{\sinit} \ttime} \cdot \left(\sum_{\state \in \States} \probP(\sinit, \markovianAct, \state) \cdot \ProbModelScheduler{\maWithInitState{\state}}{\schedShift{\sched}{\ppath}}( \dsBoundedPaths[\state]{ \stepBound \digConstant}{\le}{ \stepBound} ) \cdot e^{-\rate (\digConstant - \ttime)} \right)\diff \ttime \nonumber\\
		& \geInductionHypothesis  \int_{0}^{\digConstant} \rateAtState{\sinit} \cdot e^{-\rateAtState{\sinit} \ttime} \cdot \left(\sum_{\state \in \States} \probP(\sinit, \markovianAct, \state) \cdot (1+\rateMax \digConstant)^{\stepBound} \cdot e^{- \rateMax \digConstant \stepBound} \cdot e^{-\rate (\digConstant - \ttime)} \right) \diff \ttime \nonumber\\
		& =   (1+\rateMax \digConstant)^{\stepBound} \cdot e^{- \rateMax \digConstant \stepBound} \cdot  \rateAtState{\sinit} \cdot \int_{0}^{\digConstant}   e^{-\rateAtState{\sinit} \ttime} \cdot e^{-\rate (\digConstant - \ttime)} \cdot \left(\sum_{\state \in \States} \probP(\sinit, \markovianAct, \state)  \right) \diff \ttime \nonumber\\
		& = (1+\rateMax \digConstant)^{\stepBound} \cdot e^{- \rateMax \digConstant \stepBound} \cdot \rateAtState{\sinit} \cdot \int_{0}^{\digConstant}  e^{-\rateAtState{\sinit} \ttime} \cdot   e^{-\rate \digConstant} \cdot   e^{\rate\ttime} \diff \ttime \nonumber\\
		& =   (1+\rateMax \digConstant)^{\stepBound} \cdot e^{-\rateMax \digConstant (\stepBound+1) } \cdot \rateAtState{\sinit} \cdot \int_{0}^{\digConstant} e^{(\rateMax-\rateAtState{\sinit}) \ttime} \diff \ttime\ . \label{eq:app:boundedUpperBoundOnBoundedReachProb:FastPaths}
		\end{align}
	\end{itemize}
	
	Combining the results for $\PathsWaitDelta$ and $\PathsNotWaitDelta$ (i.e., Equations~\ref{eq:app:boundedUpperBoundOnBoundedReachProb:SlowPaths} and~\ref{eq:app:boundedUpperBoundOnBoundedReachProb:FastPaths}), we obtain
	\begin{align*}
	& \mathrel{\phantom{=}} \ProbModelScheduler{\ma}{\sched}(\dsBoundedPaths{\stepBound \digConstant + \digConstant}{\le}{ \stepBound + 1}) \nonumber \\
	&= \ProbModelScheduler{\ma}{\sched}(\PathsWaitDelta) + \ProbModelScheduler{\ma}{\sched}(\PathsNotWaitDelta) \\
	& \ge (1+ \rateMax  \digConstant)^{\stepBound} \cdot e^{-\rateMax \digConstant (\stepBound+1) } \cdot \Big( e^{(\rateMax-\rateAtState{\sinit}) \digConstant} +  \rateAtState{\sinit} \cdot  \int_{0}^{\digConstant} e^{(\rateMax-\rateAtState{\sinit}) \ttime}   \diff\ttime \Big)\\
	& \overset{\equalityMarker}{\ge} (1+ \rateMax \digConstant)^{\stepBound} \cdot e^{-\rateMax \digConstant (\stepBound+1) } \cdot \left( 1 + \rateMax \digConstant \right) 
	 = (1+ \rateMax \digConstant)^{\stepBound+1} \cdot e^{-\rateMax \digConstant (\stepBound+1) }\ ,
	\end{align*}
	where the inequality marked with $\equalityMarker$ is due to
	\begin{align*}
	& \mathrel{\phantom{=}} e^{(\rateMax-\rateAtState{\sinit}) \digConstant} +  \rateAtState{\sinit} \cdot  \int_{0}^{\digConstant} e^{(\rateMax-\rateAtState{\sinit}) \ttime}   \diff\ttime \\
	& =  e^{(\rateMax-\rateAtState{\sinit}) \digConstant} +  (\rateAtState{\sinit} - \rateMax + \rateMax ) \cdot  \int_{0}^{\digConstant} e^{(\rateMax-\rateAtState{\sinit}) \ttime}   \diff\ttime \\
	& =  e^{(\rateMax-\rateAtState{\sinit}) \digConstant} - \big(\rateMax - \rateAtState{\sinit} \big) \cdot  \int_{0}^{\digConstant} e^{(\rateMax-\rateAtState{\sinit}) \ttime}   \diff\ttime  + \rateMax \cdot  \int_{0}^{\digConstant} e^{(\rateMax-\rateAtState{\sinit}) \ttime}   \diff\ttime \\
	& = 
	\begin{dcases}
	1 - 0  + \rateMax \cdot  \int_{0}^{\digConstant} e^{(\rateMax-\rateAtState{\sinit}) \ttime} \diff\ttime & \text{if } \rateAtState{\sinit} = \rateMax \\
	e^{(\rateMax-\rateAtState{\sinit}) \digConstant} -  \big( e^{(\rateMax-\rateAtState{\sinit}) \digConstant} - 1 \big)  + \rateMax \cdot  \int_{0}^{\digConstant} e^{(\rateMax-\rateAtState{\sinit}) \ttime}   \diff\ttime & \text{if } \rateAtState{\sinit} < \rateMax
	\end{dcases}\\
	&=  1 + \rateMax \cdot  \int_{0}^{\digConstant} e^{(\rateMax-\rateAtState{\sinit}) \ttime}   \diff\ttime
	\ge  1 + \rateMax \cdot  \int_{0}^{\digConstant} 1  \diff\ttime  = 1 + \rateMax \digConstant\ .
	\end{align*}
	
	\paragraph{\underline{Case $\sinit \in \PS$:}}
	Since $\ma$ is non-zeno, a state $\state \in \MS$ is reached from $\sinit$ within zero time almost surely (i.e., with probability one).
	From  the previous case, it already follows that the Proposition holds for $\maWithInitState{\state}$ with $\state \in \MS$ and the set $\dsBoundedPaths[\state]{ \stepBound \digConstant + \digConstant}{\le}{ \stepBound+1}$.
	With $\Pathset_\MS = \{\pathFseqTimed \in \FPaths[\ma] \mid \state_n \in \MS \text{ and } \forall i < n \colon \state_i \in \PS\}$ we obtain
	\begin{align*}
	\ProbModelScheduler{\ma}{\sched}(\dsBoundedPaths{\stepBound \digConstant + \digConstant}{\le}{ \stepBound + 1})
	& = \int_{\substack{\ppath \in \Pathset_\MS \\ \last{\ppath} = \state}}     \ProbModelScheduler{\maWithInitState{\state}}{\schedShift{\sched}{\ppath}}(\dsBoundedPaths[\state]{ \stepBound \digConstant + \digConstant}{\le}{ \stepBound+1}) \diff \ProbModelScheduler{\ma}{\sched}(\ppath)\\
	& \ge \int_{\substack{\ppath \in \Pathset_\MS \\ \last{\ppath} = \state}}  (1+\rateMax \digConstant)^{\stepBound+1} \cdot e^{-\rateMax \digConstant (\stepBound+1) }    \diff \ProbModelScheduler{\ma}{\sched}(\ppath)\\
	& =  (1+\rateMax \digConstant)^{\stepBound+1} \cdot e^{-\rateMax \digConstant (\stepBound+1) } \cdot \ProbModelScheduler{\ma}{\sched}(\Pathset_\MS) \\
	& =   (1+\rateMax \digConstant)^{\stepBound+1} \cdot e^{-\rateMax \digConstant (\stepBound+1) } \ .
	\end{align*}
\qed\end{proof}
We now present the proof of Proposition~\ref{prop:bounded:approxBoundedProb}.
\propositionApproxBoundedProb*
\begin{proof}
In Section~\ref{sec:ma:bounded} we already discussed that 
\begin{align*}
\begin{split}
\ProbModelScheduler{}{\sched}(\eventually[\interval] \goalStates) =
\ProbModelScheduler{}{\sched}(\inducedPathsDi{\eventuallyds[{\diOfInterval}] \goalStates}) 
+	\ProbModelScheduler{}{\sched}(\eventually[\interval] \goalStates \setminus \inducedPathsDi{\eventuallyds[{\diOfInterval}] \goalStates})
-	\ProbModelScheduler{}{\sched}(\inducedPathsDi{\eventuallyds[{\diOfInterval}] \goalStates} \setminus \eventually[\interval] \goalStates).
\end{split}
\end{align*}
	The main part of the proof is to show that
	\begin{align}
	\label{eq:bounded:upperBoundsForError}
	\ProbModelScheduler{\ma}{\sched}(\inducedPathsDi{\eventuallyds[{\diOfInterval}] \goalStates} \setminus \eventually[\interval] \goalStates) \le \lowerErrorBound(\interval)
	\text{ and } 
	\ProbModelScheduler{\ma}{\sched}(\eventually[\interval] \goalStates \setminus \inducedPathsDi{\eventuallyds[{\diOfInterval}] \goalStates}) \le \upperErrorBound(\interval).
	\end{align}
	Then, the proposition follows directly.
We show Equation~\ref{eq:bounded:upperBoundsForError} for the different forms of the time interval $\interval$.

\paragraph{\underline{Case $\interval = [0, \infty)$:}}
In this case we have $\diOfInterval = \NN$. It follows that 
\begin{displaymath}
\inducedPathsDi{\eventuallyds[{\diOfInterval}] \goalStates} = \eventually[\interval] \goalStates = \{\pathIseqTimed \in \IPaths[\ma] \mid \state_i \in \goalStates \text{ for some } i \ge 0  \}.
\end{displaymath}
Hence,
\begin{displaymath}
\ProbModelScheduler{\ma}{\sched}(\inducedPathsDi{\eventuallyds[{\diOfInterval}] \goalStates} \setminus \eventually[\interval] \goalStates) 
=
\ProbModelScheduler{\ma}{\sched}(\eventually[\interval] \goalStates \setminus \inducedPathsDi{\eventuallyds[{\diOfInterval}] \goalStates}) 
= \ProbModelScheduler{\ma}{\sched}(\emptyset) = 0 = \lowerErrorBound(\interval) = \upperErrorBound(\interval).
\end{displaymath}

\paragraph{\underline{Case $\interval = [0,\upperTimeBound]$ for $\upperTimeBound = \upperStepBound\digConstant$:}}
We have $\diOfInterval = \{0,1, \dots, \upperStepBound\}$.

\begin{itemize}
	\item
We show that $\inducedPathsDi{\eventuallyds[{\diOfInterval}] \goalStates} \subseteq \eventually[\interval] \goalStates$ which implies
\begin{displaymath}
\ProbModelScheduler{\ma}{\sched}(\inducedPathsDi{\eventuallyds[{\diOfInterval}] \goalStates} \setminus \eventually[\interval] \goalStates) = \ProbModelScheduler{\ma}{\sched}(\emptyset) = 0 = \lowerErrorBound(\interval).
\end{displaymath}
Let $\ppath \in \inducedPathsDi{\eventuallyds[{\diOfInterval}] \goalStates}$ and let $\ppath'$ be the smallest prefix of $\ppath$ with $\last{\ppath'} \in \goalStates$.
It follows that $\diOf{\ppath'}$ is also the smallest prefix of $\diOfPath$ with $\last{\diOf{\ppath'}} \in \goalStates$.
Since $\diOfPath \in \eventuallyds[\diOfInterval] \goalStates$, it follows that $\numOfDS{\ppath'} = \numOfDS{\diOf{\ppath'}} \le \upperStepBound$.
From  Lemma~\ref{lem:app:digStepsBoundDuration} we obtain 
\begin{displaymath}
\timeOfPath{\ppath'} \le \numOfDS{\ppath'} \cdot \digConstant = \numOfDS{\diOf{\ppath'}} \cdot \digConstant \le \upperStepBound  \digConstant = \upperTimeBound\ .
\end{displaymath}
Hence, the prefix $\ppath'$ reaches $\goalStates$ within $\upperTimeBound$ time units, implying $\ppath \in \eventually[\interval] \goalStates$.

\item
Next, we show $\eventually[\interval] \goalStates \setminus \inducedPathsDi{\eventuallyds[\diOfInterval] \goalStates} \subseteq \dsBoundedPaths{\upperTimeBound}{>}{\upperStepBound}$.
With Lemma~\ref{lem:app:bounded:upperBoundForDSBoundedPaths} we obtain
\begin{displaymath}
\ProbModelScheduler{\ma}{\sched}(\eventually[\interval] \goalStates \setminus \inducedPathsDi{\eventuallyds[{\diOfInterval}] \goalStates}) \le  \ProbModelScheduler{\ma}{\sched}(\dsBoundedPaths{\upperTimeBound}{>}{\upperStepBound})
\le 1 - (1+ \rateMax \digConstant)^{\upperStepBound} \cdot e^{- \rateMax \upperTimeBound}
 = \upperErrorBound(\interval)
\end{displaymath}
	
Consider a path $\ppath \in \eventually[\interval] \goalStates \setminus \inducedPathsDi{\eventuallyds[\diOfInterval] \goalStates}$.
Note that $\ppath$ reaches $\goalStates$ within $\upperTimeBound$ time units but with more than $\upperStepBound$ digitization steps.
Hence, the prefix of $\ppath$ up to time point $\upperTimeBound$ certainly has more than $\upperStepBound$ digitization steps, i.e., $\ppath$ satisfies $\numOfDS{\prefTime{\ppath}{\upperTimeBound}} > \upperStepBound$ which means $\ppath \in \dsBoundedPaths{\upperTimeBound}{>}{\upperStepBound}$.
\end{itemize}

\paragraph{\underline{Case $\interval = [\lowerTimeBound,\infty)$ for $\lowerTimeBound = \lowerStepBound\digConstant$:}}
	We have $\diOfInterval = \{\lowerStepBound+1, \lowerStepBound+2, \dots\}$.
\begin{itemize}
	\item 
	We show that $\inducedPathsDi{\eventuallyds[\diOfInterval] \goalStates} \setminus \eventually[\interval] \goalStates \subseteq \dsBoundedPaths[]{\lowerTimeBound}{>}{\lowerStepBound}$.
	With Lemma~\ref{lem:app:bounded:upperBoundForDSBoundedPaths} we obtain
	\begin{displaymath}
	\ProbModelScheduler{\ma}{\sched}(\inducedPathsDi{\eventuallyds[\diOfInterval] \goalStates} \setminus \eventually[\interval] \goalStates ) \le  \ProbModelScheduler{\ma}{\sched}(\dsBoundedPaths{\lowerTimeBound}{>}{\lowerStepBound})
	\le 1 - (1+ \rateMax \digConstant)^{\lowerStepBound} \cdot e^{- \rateMax \lowerTimeBound}
	= \lowerErrorBound(\interval).
	\end{displaymath}
	Consider a path $\ppath \in \inducedPathsDi{\eventuallyds[\diOfInterval] \goalStates} \setminus \eventually[\interval] \goalStates$.
	As $\ppath \notin \eventually[\interval] \goalStates$, it follows that $\ppath$ has to reach (and leave) $\goalStates$ within less than $\lowerTimeBound$ time units.
	Let $\pathDi$ be the largest prefix of $\diOfPath$ that satisfies $\last{\pathDi} \in \goalStates$.
	Our observations yield that $\ppath$ leaves $\last{\pathDi}$ before time point $\lowerTimeBound$.
	Hence, $\pathDi$ is a prefix of $\diOf{\prefTime{\ppath}{\lowerTimeBound}}$.
	Moreover, $\numOfDS{\pathDi} \in \diOfInterval$ as $\diOfPath \in \eventuallyds[\diOfInterval] \goalStates$.
	It follows that $\numOfDS{\prefTime{\ppath}{\lowerTimeBound}} \ge \numOfDS{\pathDi} > \lowerStepBound$ which implies $\ppath \in \dsBoundedPaths[]{\lowerTimeBound}{>}{\lowerStepBound}$.
\item	
	Now consider a path $\ppath \in \eventually[\interval] \goalStates \setminus \inducedPathsDi{\eventuallyds[\diOfInterval] \goalStates}$.
	$\ppath$ visits $\goalStates$ at least once since $\ppath \in \eventually[\interval] \goalStates$. 
	Moreover, $\diOfPath$  does not visit $\goalStates$ after $\lowerStepBound$ digitization steps  due to $\ppath \notin \inducedPathsDi{\eventuallyds[\diOfInterval] \goalStates}$.
	This means $\ppath$ visits $\goalStates$ only finitely often.
	Let $\ppath' = \pathFseqTimed$ be the largest prefix of $\ppath$ such that $\state_n \in \goalStates$.
	Notice that $\numOfDS{\ppath'} \le \lowerStepBound$ holds.
	Let $\ppath' \pathTransTimed{} \state$ be the prefix of $\ppath$ of length $\length{\ppath'}+1$.
		We show by contradiction that $\lowerTimeBound \le \timeOfPath{\ppath' \pathTransTimed{} \state} < \lowerTimeBound + \digConstant$ holds:
		\begin{itemize}
			\item If $\timeOfPath{\ppath' \pathTransTimed{} \state} < \lowerTimeBound$, then $\last{\ppath'} \in \goalStates$ is left before time point $\lowerTimeBound$ which contradicts $\ppath \in \eventually[\interval] \goalStates$.
			\item Further, assume that $\timeOfPath{\ppath' \pathTransTimed{} \state} \ge \lowerTimeBound + \digConstant$.
			With Lemma~\ref{lem:app:digStepsBoundDuration} we obtain 
			\begin{align*}
			\timeOfStamp[] 
			&	\ge \lowerTimeBound + \digConstant - \timeOfPath{\ppath'}\\
			&	\ge \lowerTimeBound + \digConstant - \numOfDS{\ppath'} \cdot \digConstant\\
			&	\ge (\lowerStepBound + 1 - \underbrace{\numOfDS{\ppath'}}_{\le \lowerStepBound}) \cdot \digConstant > 0\ .
			\end{align*}
			Hence, $\ppath$ stays at $\last{\ppath'}$ for at least $(j + 1 - \numOfDS{\ppath'}) \cdot \digConstant$ time units which means that 
			$\diOf{\ppath'} \big({\pathTransUniv{\markovianAct}} \last{\ppath'}\big)^{j+1 - \numOfDS{\ppath'}} = \pathDi$ is a prefix of $\diOfPath$.
			Since $\numOfDS{\pathDi} = 
			j+1$, this contradicts $\ppath \notin \inducedPathsDi{\eventuallyds[\diOfInterval] \goalStates}$.
		\end{itemize}
		We infer that $\ppath$ takes at least one transition in the time interval $[\lowerTimeBound, \lowerTimeBound + \digConstant)$.
		The probability for this can be upper bounded by $1-e^{-\rateMax \digConstant}$, i.e.,
	\begin{align*}
	&\phantom{\le} \ProbModelScheduler{\ma}{\sched}(\eventually[\interval] \goalStates \setminus \inducedPathsDi{\eventuallyds[\diOfInterval] \goalStates}) \\
	&\le  \ProbModelScheduler{\ma}{\sched}(\{ \ppath \in \IPaths[\ma] \mid \ppath \text{ takes a transition in time interval } [\lowerTimeBound, \lowerTimeBound + \digConstant)  \}) \\
	&\le  1-e^{-\rateMax \digConstant}
	= \upperErrorBound(\interval).
	\end{align*}
\end{itemize}

\paragraph{\underline{Case $\interval = [\lowerTimeBound,\upperTimeBound]$ for $\lowerTimeBound = \lowerStepBound\digConstant$ and $\upperTimeBound = \upperStepBound \digConstant$:}}
		We have $\diOfInterval = \{\lowerStepBound+1, \lowerStepBound+2, \dots, \upperStepBound\}$.
		
\begin{itemize}
\item 
		As in the case ``$\interval = [\lowerTimeBound, \infty)$'', 
		we show that $\inducedPathsDi{\eventuallyds[\diOfInterval] \goalStates} \setminus \eventually[\interval] \goalStates \subseteq \dsBoundedPaths[]{\lowerTimeBound}{>}{\lowerStepBound}$.
		With Lemma~\ref{lem:app:bounded:upperBoundForDSBoundedPaths} we obtain
		\begin{displaymath}
		\ProbModelScheduler{\ma}{\sched}(\inducedPathsDi{\eventuallyds[\diOfInterval] \goalStates} \setminus \eventually[\interval] \goalStates ) \le  \ProbModelScheduler{\ma}{\sched}(\dsBoundedPaths{\lowerTimeBound}{>}{\lowerStepBound})
		\le 1 - (1+ \rateMax \digConstant)^{\lowerStepBound} \cdot e^{- \rateMax \lowerTimeBound}
		= \lowerErrorBound(\interval).
		\end{displaymath}
		Let $\ppath \in \inducedPathsDi{\eventuallyds[\diOfInterval] \goalStates} \setminus \eventually[\interval] \goalStates$ and let $\pathDi$ be the largest prefix of $\diOfPath$ with $\last{\pathDi} \in \goalStates$ and $\numOfDS{\pathDi} \in \diOfInterval$.
Such a prefix exists due to $\ppath \in \inducedPathsDi{\eventuallyds[\diOfInterval] \goalStates}$.
$\ppath$ reaches $\last{\pathDi}$ with at most $\upperStepBound$ digitization steps and therefore within at most $\upperTimeBound$ time units (cf. Lemma~\ref{lem:app:digStepsBoundDuration}).
As $\ppath \notin \eventually[\interval] \goalStates$, we conclude that $\ppath$ has to reach (and leave) $\last{\pathDi}$ within less than $\lowerTimeBound$ time units.
It follows that $\numOfDS{\prefTime{\ppath}{\lowerTimeBound}} \ge \numOfDS{\pathDi} > \lowerStepBound$ which implies $\ppath \in \dsBoundedPaths[]{\lowerTimeBound}{>}{\lowerStepBound}$.

\item
Next, let $\ppath \in \eventually[\interval] \goalStates \setminus \inducedPathsDi{\eventuallyds[\diOfInterval] \goalStates}$ and let $\ppath' = \pathFseqTimed$ be the largest prefix of $\ppath$ such that $\state_n \in \goalStates$ and $\timeOfPath{\ppath'} \le \upperTimeBound$. Such a prefix exists due to $\ppath \in \eventually[\interval] \goalStates$. We distinguish two cases.
	\begin{itemize}
		\item If $\numOfDS{\ppath'} > \upperStepBound$, then $\ppath \in \dsBoundedPaths[]{\upperTimeBound}{>}{\upperStepBound}$ since
		$
		\numOfDS{\prefTime{\ppath}{\upperTimeBound}} \ge \numOfDS{\ppath'} > \upperStepBound
		$.
		\item If $\numOfDS{\ppath'} \le \upperStepBound$, then $\numOfDS{\ppath'} \le \lowerStepBound$ holds due to $\ppath \notin \inducedPathsDi{\eventuallyds[\diOfInterval] \goalStates}$.
		Similar to the case ``$\interval = [\lowerTimeBound,\infty)''$ we can show that $\ppath$ takes at least one transition in time interval $[\lowerTimeBound, \lowerTimeBound + \digConstant)$.
	\end{itemize}
It follows that 
	\begin{align*}
	&\phantom{\subseteq}\ \ \eventually[\interval] \goalStates \setminus \inducedPathsDi{\eventuallyds[\diOfInterval] \goalStates} \\
	&\subseteq \dsBoundedPaths[]{\upperTimeBound}{>}{\upperStepBound} \cup  \{ \ppath \in \IPaths[\ma] \mid \ppath \text{ takes a transition in time interval } [\lowerTimeBound, \lowerTimeBound + \digConstant)  \}
	\end{align*}
	Hence,
	\begin{align*}
	\ProbModelScheduler{\ma}{\sched}(\eventually[\interval] \goalStates \setminus \inducedPathsDi{\eventuallyds[\diOfInterval] \goalStates}) 
	\le 1 - (1+ \rateMax \digConstant)^{\upperStepBound} \cdot e^{- \rateMax \upperTimeBound}  + 1 - e^{-\rateMax \digConstant} 
	= \upperErrorBound(\interval).
	\end{align*}
\end{itemize}
\qed\end{proof}		
	
\subsection{Proof of Theorem~\ref{thm:bounded:multiBounded}}
	\theoremMultiBounded*
	\begin{proof}
		For simplicity, we assume that only the threshold relation $\ge$ is considered, i.e., ${\rel} = (\ge, \dots, \ge)$.
		Furthermore, we restrict ourself to (un)timed reachability objectives.
		The remaining cases are treated analogously.
		
		First assume a point $\point' = (\pointi{1}', \dots, \pointi{\numObjectives}') \in \underApprox$.
		Consider the point $\point = \pointTuple$  satisfying $\pointi{i}' = \pointi{i} - \lowerErrorBound[i]$  for each index $i$.
		It follows that $\point' \in \errorBoundObjPointUniv{\obj}{\point}$ and thus $\dma, \bar{\sched} \models \obj \rel \point$ for some scheduler $\bar{\sched} \in \TASched[\dma]$.
		Consider the scheduler $\sched \in \GMSched[\ma]$ given by $\schedEval{\sched}{\ppath}{\act} = \schedEval{\bar{\sched}}{\diOfPath}{\act}$
		for each path $\ppath \in \FPaths[\ma]$ and action $\act \in \Actions$.
		Notice that $\bar{\sched} = \diOfSched$.
		For an index $i$ let $\obj[i]$ be the objective $\intervalBoundedReachObj[]$.
		It follows that
		\begin{displaymath}
		\dma, \bar{\sched} \models \obj[i] \ge \pointi{i}
		\iff 
		\dma, \diOfSched \models \obj[i] \ge \pointi{i}
		\iff 
		\ProbModelScheduler{\dma}{\diOfSched}(\eventuallyds[\diOfInterval] \goalStates)\ge \pointi{i}\ ,
		\end{displaymath}
		With Corollary~\ref{corr:bounded:approxBoundedProb} it follows that
		\begin{displaymath}
		\pointi{i}' = \pointi{i} - \lowerErrorBound[i] \le \ProbModelScheduler{\dma}{\diOfSched}(\eventuallyds[\diOfInterval] \goalStates) - \lowerErrorBound[i] \overset{Cor.\,\ref{corr:bounded:approxBoundedProb}}{\le} \intervalBoundedReachProbMa.
		\end{displaymath}
		As this observation holds for all objectives in $\obj$, it follows that $\ma, \sched \models \obj \rel \point'$, implying $\achievabilityQ{\ma}{\obj \rel \point'}$.
		
		The proof of the second inclusion is similar. 
		Assume that $\ma, \sched \models \obj \rel \point'$ holds for a point $\point' = (\pointi{1}', \dots, \pointi{\numObjectives}') \in \RR^\numObjectives$ and a scheduler $\sched \in \GMSched[\ma]$.
		For some index $i$, consider $\obj[i] = \intervalBoundedReachObj[]$.
		It follows that
		$\intervalBoundedReachProbMa[] \ge \pointi{i}'$. 
		With Corollary~\ref{corr:bounded:approxBoundedProb} we obtain
		\begin{displaymath}
		\pointi{i}' - \upperErrorBound[i] \le \intervalBoundedReachProbMa - \upperErrorBound[i] \overset{Cor.\,\ref{corr:bounded:approxBoundedProb}}{\le} \ProbModelScheduler{\dma}{\diOfSched}(\eventuallyds[\diOfInterval] \goalStates) .
		\end{displaymath}
		Applying this for all objectives in $\obj$ yields $\dma, \diOfSched \models \obj \rel \point$, where the point $\point = \pointTuple \in \RR^\numObjectives$ satisfies $ \pointi{i} = \pointi{i}' - \upperErrorBound[i]$ or, equivalently,  $ \pointi{i}' = \pointi{i} + \upperErrorBound[i]$ for each index $i$.
		Note that $\point' \in \errorBoundObjPointUniv{\obj}{\point}$ which implies $\point' \in \overApprox$.
	\qed\end{proof}

  \section{Comparison to Single-objective Analysis}
\label{app:singleObj}
Corollary~\ref{corr:bounded:approxBoundedProb} generalizes existing results from single-objective timed reachability analysis:
For MA $\ma$, goal states $\goalStates$, time bound $\upperTimeBound \in \RRgz$, and digitization constant $\digConstant \in \RRgz$ with $\nicefrac{\upperTimeBound}{\digConstant} = \upperStepBound \in \NN$, 
	\cite[Theorem 5.3]{DBLP:journals/corr/GuckHHKT14} states that
		\begin{displaymath}
\sup_{\sched \in \GMSched[\ma]} \boundedReachProb{\ma}{\sched}{{[0,\upperTimeBound]}}  \in
\sup_{\sched \in \TASched[\dma]}	\ProbModelScheduler{\dma}{\sched}(\eventuallyds[\{0, \dots, \upperStepBound\}] \goalStates)
	+
	\Big[{-}\lowerErrorBound([0, \upperTimeBound]),\,  \upperErrorBound([0, \upperTimeBound])\Big].
	\end{displaymath}
Corollary~\ref{corr:bounded:approxBoundedProb} generalizes this result by
	explicitly referring to the schedulers $\sched \in \GMSched[\ma]$ and $\diOfSched\in\TASched[\dma]$ under which the claim holds.
	This extension is necessary as a multi-objective analysis can not be restricted to schedulers that only optimize a single objective.
	
	We remark that the proof in \cite[Theorem 5.3]{DBLP:journals/corr/GuckHHKT14} can not be adapted to show our result.
	The main reason is that the  proof relies on an auxiliary lemma  which claims that\footnote{We adapt \cite[Lemma G.2]{DBLP:journals/corr/GuckHHKT14} to our notations from Appendix~\ref{app:proofsbounded:thmApprxBoundedProb}.}
	\begin{align}
		\label{eq:ma:wrongClaim}
		\ProbModelScheduler{\ma}{\sched}(\eventually[{[0,\upperTimeBound]}] \goalStates \mid \dsBoundedPaths{\digConstant}{<}{2}) \le \ProbModelScheduler{\ma}{\sched}(\eventually[{[0,\upperTimeBound]}] \goalStates)
	\end{align}
	holds for \emph{all} schedulers $\sched \in \GMSched[\ma]$.
	We show that this claim does \emph{not} hold.
	The intuition is as follows.
	Assume we observe that at most one Markovian transition is taken in $\ma$ within the first $\digConstant$ time units (i.e., we observe a path in $\dsBoundedPaths{\digConstant}{<}{2}$).
	The lemma claims that under this observation the probability to reach $\goalStates$ within $\upperTimeBound$ time units does not increase.
	We give a counterexample to illustrate that there are schedulers for which this is not true.
	Consider the MA $\ma$ from Figure~\ref{fig:ma:counterexample} and let $\sched$ be the scheduler for $\ma$ satisfying
	\begin{displaymath}
	\schedEval{\sched}{\state_0 \pathTransUniv{\ttime_1} \state_1 \pathTransUniv{\ttime_2} \state_2}{\act} =
	\begin{cases}
	1 & \text{if }  \ttime_1 + \ttime_2 > \digConstant \\
	0 & \text{otherwise}.
	\end{cases}
	\end{displaymath}
	Hence, $\sched$ chooses $\act$ iff there are less than two digitization steps  within the first $\digConstant$ time units.
	It follows that the probability to reach $\goalStates = \{\state_3\}$ on a path in  $\dsBoundedPaths{\digConstant}{\ge}{2}$ is zero.
	We conclude that
	\begin{align*}
		\ProbModelScheduler{\ma}{\sched}(\eventually[{[0,\upperTimeBound]}] \{\state_3\})
		& = \ProbModelScheduler{\ma}{\sched}(\eventually[{[0,\upperTimeBound]}] \{\state_3\} \cap \dsBoundedPaths{\digConstant}{<}{2})  + \underbrace{\ProbModelScheduler{\ma}{\sched}(\eventually[{[0,\upperTimeBound]}] \{\state_3\} \cap \dsBoundedPaths{\digConstant}{\ge}{2})}_{=0} \\
		& = \ProbModelScheduler{\ma}{\sched}(\eventually[{[0,\upperTimeBound]}] \{\state_3\} \mid \dsBoundedPaths{\digConstant}{<}{2}) \cdot  \underbrace{\ProbModelScheduler{\ma}{\sched}(\dsBoundedPaths{\digConstant}{<}{2})}_{<1} \\
		& < \ProbModelScheduler{\ma}{\sched}(\eventually[{[0,\upperTimeBound]}] \{\state_3\} \mid \dsBoundedPaths{\digConstant}{<}{2})
	\end{align*}
	which contradicts Equation~\ref{eq:ma:wrongClaim}.

\begin{figure}[t]
	\centering
	\scalebox{\picscale}{
		\begin{tikzpicture}[scale=1]

\node [state,] (s0) at (0,0) {$\state_0$};
\node [state,] (s1) [on grid, right=25mm of s0] {$\state_1$};
\node [state,] (s2) [on grid, right=25mm of s1] {$\state_2$};
\node [ ] (aux) [on grid,  right=25mm of s2] {};
\node [state,accepting] (s3) [on grid, above=8mm of aux] {$\state_3$};
\node [state,] (s4) [on grid, below=8mm of aux] {$\state_4$};

\initstateLeft{s0}

\draw (s0) edge[markovian,]  node [, above] {\scriptsize$\rate$} (s1);
\draw (s1) edge[markovian,]  node [, above] {\scriptsize$\rate$} (s2);
    \draw (s2) edge[probabilistic] node[above] {\scriptsize$\act$} (s3);
    \draw (s2) edge[probabilistic] node[below] {\scriptsize$\altact$} (s4);
\draw (s4) edge[markovian, loop right] node[pos=0.5, right] {\scriptsize$\rate$}  (s4);
\draw (s3) edge[markovian, loop right] node[pos=0.5, right] {\scriptsize$\rate$}  (s3);

\end{tikzpicture}
	}
	\caption{MA $\ma$  (cf. Appendix~\ref{app:singleObj}).}
	\label{fig:ma:counterexample}
\end{figure}
  \section{Further Details for the Experiments}
\label{App:evaluation-details}

\subsection{Benchmark Details}
\label{App:evaluation-details:ma}
We depict additional information regarding our experiments on multi-objective MAs.

\paragraph{Job scheduling.}
The job scheduling case study originates from~\cite{DBLP:journals/jacm/BrunoDF81} and was already discussed in Section~\ref{sec:introduction}.
We consider $N$ jobs that are executed on $K$ identical processors.
Each of the $N$ jobs gets a different rate between 1 and 3.
 We consider the following objectives.
\begin{itemize}
	\item[$\mathbb{E}_1$:] Minimize the expected time until all jobs are completed.
	\item[$\mathbb{E}_2$:] Minimize the expected time until $\lceil \nicefrac{N}{2}\rceil$ jobs are completed.
	\item[$\mathbb{E}_3$:] Minimize the expected waiting time of the jobs.
	\item[$\mathbb{P}$:] Minimize the probability that the job with the lowest rate is completed before the job with the highest rate.
	\item[$\mathbb{P}_1^\le$:] Maximize the probability that all jobs are completed within $\nicefrac{N}{2K}$ time units.
	\item[$\mathbb{P}_2^\le$:] Maximize the probability that  $\lceil \nicefrac{N}{2}\rceil$ jobs are completed within $\nicefrac{N}{4K}$ time units.
\end{itemize}
The objectives have been combined as follows:
($\obj^i$ refers to the objectives considered in Column $i$ of Table~\ref{tab:maRes}):
\begin{align*}
	\obj^1 = (\mathbb{E}_1, \mathbb{E}_2, \mathbb{E}_3) \quad
	\obj^2 = (\mathbb{E}_1, \mathbb{P}^\le_2) \quad
	\obj^3 = (\mathbb{P}, \mathbb{E}_1, \mathbb{E}_2, \mathbb{E}_3) \quad
	\obj^4 = (\mathbb{P}, \mathbb{E}_3, \mathbb{P}^\le_1, \mathbb{P}^\le_2)
\end{align*}

\paragraph{Polling.}
The polling system is based on \cite{Srinivasan:1991,DBLP:conf/concur/TimmerKPS12}.
It considers two stations, each having a separate queue storing up to $K$ jobs of $N$ different types.
The jobs arrive at Station~$i$ (for $i \in \{1,2\}$) with some rate $\rate_i$ as long as the queue of the station is not full.
A server polls the two stations and processes the jobs by (nondeterministically) taking a job from a non-empty queue.
The time for processing a job is given by a rate which depends on the type of the job.
Erasing a  job from a queue is unreliable, i.e., there is a $10\,\%$ chance that an already processed job stays in the queue.
For $i \in \{1,2\}$ we assume the following objectives:
\begin{itemize}
	\item[$\mathbb{E}_i$:] Maximize the expected number of processed jobs of Station~$i$ until its queue is full.
	\item[$\mathbb{E}_{2+i}$:] Minimize the expected sum of all waiting times of the jobs arriving at Station~$i$ until the queue of Station~$i$ is full.
	\item[$\mathbb{P}^\le_i$:] Minimize the probability that the queue of Station~$i$ is full within two time units.
\end{itemize}
The objectives have been combined as follows:
($\obj^i$ refers to the objectives considered in Column $i$ of Table~\ref{tab:maRes}):
\begin{align*}
	\obj^1 = (\mathbb{E}_1, \mathbb{E}_2) \quad
	\obj^2 = (\mathbb{E}_1, \mathbb{E}_2, \mathbb{E}_3, \mathbb{E}_4) \quad
	\obj^3 = (\mathbb{P}^\le_1, \mathbb{P}^\le_2) \quad
	\obj^4 = (\mathbb{E}_1, \mathbb{E}_2, \mathbb{P}^\le_1, \mathbb{P}^\le_2)
\end{align*}

\paragraph{Stream.}
This case study considers a client of a video streaming platform. 
The client consecutively receives $N$ data packages and stores them into a buffer.
The buffered packages are processed during the playback of the video.
The time it takes to receive (or to process) a single package is modeled by an exponentially distributed delay. 
Whenever a package is received and the video is not playing, the client nondeterministically chooses whether it starts the playback  or whether it keeps on buffering.
The latter choice is not reliable, i.e., there is a $1\,\%$ chance that the playback is started anyway.
In case of a buffer underrun\footnote{A buffer underrun occurs when the next package needs to be processed while the buffer is empty.}, the playback is paused and the client waits for new packages to arrive.
We analyzed the following objectives:
\begin{itemize}
	\item[$\mathbb{E}_1$:] Minimize the expected buffering time until the playback is finished.
	\item[$\mathbb{E}_2$:] Minimize the expected number of buffer underruns during the playback.
	\item[$\mathbb{E}_3$:] Minimize the expected time to start the playback.
	\item[$\mathbb{P}^\le_1$:] Minimize the probability for a buffer underrun within  2 time units.
	\item[$\mathbb{P}^\le_2$:] Maximize the probability that the playback starts within 0.5 time units.
\end{itemize}
The objectives have been combined as follows:
($\obj^i$ refers to the objectives considered in Column $i$ of Table~\ref{tab:maRes}):
\begin{align*}
	\obj^1 = (\mathbb{E}_1, \mathbb{E}_2) \quad
	\obj^2 = (\mathbb{E}_3,  \mathbb{P}_1^\le) \quad
	\obj^3 = (\mathbb{P}^\le_1, \mathbb{P}^\le_2) \quad
	\obj^4 = (\mathbb{E}_1, \mathbb{E}_3, \mathbb{P}^\le_1)
\end{align*}

\paragraph{Mutex.}
This case study regards a randomized mutual exclusion protocol based on  \cite{Pnueli1986,DBLP:conf/concur/TimmerKPS12}.
Three processes nondeterministically choose a job for which they need to  enter the critical section.
The amount of time a process   spends in its critical section is given by a rate which depends on  the chosen job.
There are $N$ different types of jobs.
For each $i \in \{1,2,3\}$ the following objective are considered:
\begin{itemize}
	\item[$\mathbb{P}^\le_i$:] Maximize the probability that Process~$i$ enters its critical section within 0.5 time units.
	\item[$\mathbb{P}^\le_{3+i}$:] Maximize the probability that Process~$i$ enters its critical section within 1 time unit.
\end{itemize}
The objectives have been combined as follows:
($\obj^i$ refers to the objectives considered in Column $i$ of Table~\ref{tab:maRes}):
\begin{align*}
	\obj^1 = (\mathbb{P}^\le_1, \mathbb{P}^\le_2,\mathbb{P}^\le_3) \quad
	\obj^2 = (\mathbb{P}^\le_4, \mathbb{P}^\le_5,\mathbb{P}^\le_6) \quad
\end{align*}

\begin{table}[t]
	\centering
	\caption{Additional model details.}
	\label{tab:maDetails}
	\setlength{\tabcolsep}{3pt}
\begin{tabular}{c|crrrrr}
& N(-K)   &   \multicolumn{1}{c}{\#states} & \multicolumn{1}{c}{\#choices} &  \multicolumn{1}{c}{\#transitions} & \multicolumn{1}{c}{\#MS} & \multicolumn{1}{c}{$\rate{\max}$}  \\ \hline
\multirow{3}{*}{\rotatebox[origin=c]{90}{\emph{jobs}}} &
 10-2    & 12\,554                        &      23\,061                  &       34\,581                      &        11\,531                &          5.7                 \\ 
& 12-3    & 116\,814                      &      225\,437                 &       450\,783                     &        112\,719               &          8.5                 \\ 
& 17-2    & 4\,587\,537                   &      8\,912\,931              &       13\,369\,379                 &        4\,456\,466            &          5.9                 \\ \hline
\multirow{3}{*}{\rotatebox[origin=c]{90}{\emph{polling}}}  &
 3-2     & 1\,020                         &      1\,852                   &       2\,477                       &          508                  &          14                  \\ 
& 3-3     & 9\,858                        &      18\,295                  &       24\,536                      &        4\,801                 &          14                  \\ 
& 4-4     & 827\,735                      &      1\,682\,325              &       2\,146\,086                  &        465\,125               &          16                  \\ \hline
\multirow{3}{*}{\rotatebox[origin=c]{90}{\emph{stream}}} &
 30      & 1\,426                         &      1\,861                   &       2\,731                       &          931                  &          8                   \\ 
& 250     & 94\,376                       &      125\,501                 &       187\,751                     &        62\,751                &          8                   \\ 
& 1000    & 1\,502\,501                   &      2\,002\,001              &       3\,001\,001                  &        1\,001\,001            &          8                   \\ \hline
\multirow{2}{*}{\rotatebox[origin=c]{90}{\emph{mutex}}}  &
 2       & 13\,476                        &      31\,752                  &       36\,120                      &          216                  &          2                   \\ 
& 3       & 38\,453                       &      99\,132                  &       111\,687                     &        8\,487                 &          3                   \\   
\end{tabular}

\end{table}

\subsection{Comparison with \prism}
\label{App:evaluation-details:prism}
We considered \prism 4.3.1 obtained from its website \url{www.prismmodelchecker.org}.
We conducted our experiments on \prism with both variants of the value iteration-based implementation (standard and Gauss-Seidel) and chose the faster variant for each benchmark instance. For all experiments the approximation precision $\eta = 0.001$ was considered.

The detailed results are given in Table~\ref{tab:mdp}.
We depict the different benchmark instances with the number of states of the  MDP (Column \emph{\#states}) and the considered combination of objectives ($\mathbb{P}$ represents an (untimed) probabilistic objective, $\mathbb{E}$ an expected reward objective, and $\mathbb{C}^\le$ a step-bounded reward objective).
Column \emph{iter} lists the time required for the iterative exploration of the set of achievable points as described in~\cite{ForejtKPatva12}.
In Column \emph{verif} we depict the verification time -- including the time for the iterations as well as the conducted preprocessing steps.
Column \emph{total} indicates the total runtime of the tool which includes model building time and verification time.
For our implementation, we also list the number of vertices of the obtained under-approximation (Column \emph{pts}). 

During our experiments we observed some issues  considering the implementation in \prism.
For example \prism does not detect that both objectives considered for the \emph{sched.}-instances yield infinite rewards under every possible resolution of non-determinism. Instead of that, \prism gives an incorrect answer.

\begin{table}[t]
	\centering
	\caption{Results for our implementation (\storm) and \prism on the  multi-objective MDP benchmarks from~\cite{ForejtKPatva12}. All run-times are in seconds.}
	\label{tab:mdp}
	\begin{tabular}{l|crc|rrr|rrrr}
	                            & \multicolumn{3}{c|}{benchmark}                                   &                   \multicolumn{3}{c|}{\prism}                    &                               \multicolumn{4}{c}{\storm}                               \\
	                             &  instance &    \multicolumn{1}{c}{\#states} &   $\obj$       & \multicolumn{1}{c}{iter} & \multicolumn{1}{c}{verif} & \multicolumn{1}{c|}{total} & \multicolumn{1}{c}{pts} & \multicolumn{1}{c}{iter} & \multicolumn{1}{c}{verif} & \multicolumn{1}{c}{total} \\ \hline
	\multirow{6}{*}{\rotatebox[origin=c]{90}{\emph{consensus}}}   
								 & $2\_3\_2$ &                        691 &     $\mathbb{P,P}$  &  0.019 &      0.183                                                 &     0.285   &       3 &              0.007             &      0.010    &            0.474       \\
	                             & $2\_4\_2$ &                     1\,517 &     $\mathbb{P,P}$  &  0.038 &      0.329                                                 &     0.501   &       2 &              0.012             &      0.017    &            0.497       \\
	                             & $2\_5\_2$ &                    3\,169 &     $\mathbb{P,P}$   &  0.053 &      0.528                                                 &     0.740   &       2 &              0.018             &      0.028    &            0.518     \\
	                             & $3\_3\_2$ &                  17\,455 &     $\mathbb{P,P}$    &  0.232 &      1.416                                                 &     1.771   &       2 &              0.135             &      0.193    &            1.169      \\
	                             & $3\_4\_2$ &                   61\,017 &     $\mathbb{P,P}$   &  0.854 &      4.267                                                 &     4.998   &       2 &              0.499             &      0.806    &            3.421       \\
	                             & $3\_5\_2$ &                 181\,129 &     $\mathbb{P,P}$    &  2.835 &      9.735                                                 &    10.813   &       2 &              1.734             &      3.639    &            10.675        \\ \hline
	\multirow{6}{*}{\rotatebox[origin=c]{90}{\emph{zeroconf(-tb)}}}                                                                                                                                                                                     
	                             &       $4$       &                5\,449 &     $\mathbb{P,P}$ &  0.130 &      6.157                                                 &     6.423   &       2 &              0.077             &      0.146    &            0.830    \\
	                             &       $6$       &              10\,543 &     $\mathbb{P,P}$  &  0.235 &     12.093                                                 &    12.428   &       2 &              0.213             &      0.368    &            1.178    \\
	                             &       $8$       &               17\,221 &     $\mathbb{P,P}$ &  0.408 &     22.143                                                 &    22.596   &       2 &              0.467             &      0.819    &            1.454   \\ 
                                &   $2\_14$    &                29\,572 &     $\mathbb{P,P}$    &  0.285 &     45.715                                                 &    46.311   &       2 &              0.615             &      1.926    &            2.924     \\
	                             &   $4\_10$    &               19\,670 &     $\mathbb{P,P}$    &  0.262 &     40.259                                                 &    40.780   &       2 &              0.568             &      1.256    &            2.052    \\
	                             &   $4\_14$    &              42\,968 &     $\mathbb{P,P}$     &  0.363 &     96.813                                                 &    97.631   &       1 &              2.706             &      6.216    &            7.469    \\ \hline
	\multirow{6}{*}{\rotatebox[origin=c]{90}{\emph{team-form.}}}                                                                                                                                                                                           
	                            &       $3$       &              12\,475 &     $\mathbb{P,E}$   & \multicolumn{3}{c|}{incorrect}                                                   &       5 &              0.160             &      0.257    &            0.877   \\
	                             &       $4$       &            96\,665 &     $\mathbb{P,E}$    &   \multicolumn{3}{c|}{incorrect}                                                 &       3 &              1.360             &      6.637    &            9.325   \\
	                             &       $5$       &          907\,993 &     $\mathbb{P,E}$     &   \multicolumn{3}{c|}{incorrect}                                                 &       3 &              22.197            &    866.151    &          889.889   \\
	                             &       $3$       &            12\,475 &    $\mathbb{P,E,P}$     & \multicolumn{3}{c|}{not supported}
																																				                                    &      10 &               4.060            &     1.432     &       2.020  \\
	                             &       $4$       &           96\,665 &    $\mathbb{P,E,P}$     &    \multicolumn{3}{c|}{not supported}  
																																										            &      13 &               1.327            &     9.447     &       12.256  \\
	                             &       $5$       &         907\,993 &    $\mathbb{P,E,P}$     &     \multicolumn{3}{c|}{not supported} 
																																								                    &      8  &              48.873            &     894.525   &        918.858 \\ \hline
	\multirow{3}{*}{\rotatebox[origin=c]{90}{\emph{sched.}}} 
	                             &       $5$       &           31\,965 &     $\mathbb{E,E}$     &    \multicolumn{3}{c|}{error}  
																																					                                &       &                 ---              &               &        1.214   \\
	                             &      $25$       &        633\,735 &     $\mathbb{E,E}$     &      \multicolumn{3}{c|}{incorrect}
																																											         &      &                 ---              &                &       13.907   \\
	                             &      $50$       &    2\,457\,510 &     $\mathbb{E,E}$     &      \multicolumn{3}{c|}{incorrect}
																																											        &       &                 ---              &                &       53.119    \\ \hline
	\multirow{3}{*}{\rotatebox[origin=c]{90}{\emph{dpm}}}      
	                              &      $100$      &             636 &     $\mathbb{C}^\le,\mathbb{C}^\le$     &    0.187  &           0.228 &       0.298                        &        6 &                0.143              &  0.145    &        0.355 \\
	                             &      $200$      &              636 &     $\mathbb{C}^\le,\mathbb{C}^\le$     &    0.213  &           0.247 &       0.312                        &        4 &                0.210              &  0.213    &        0.433 \\
	                             &      $300$      &              636 &     $\mathbb{C}^\le,\mathbb{C}^\le$     &    0.239  &           0.285 &       0.360                        &        3 &                0.205              &  0.207    &        0.433
\end{tabular}
\end{table}

\subsection{Comparison with \imca}
\label{App:evaluation-details:imca}
We consider \imca 1.6 obtained from \url{https://github.com/buschko/imca}. 
The experiments on \imca have been conducted with and without enabling value-iteration and we chose the faster variant for each benchmark instance.
For timed reachability objectives, the precision $\eta = 0.01$ was considered in all experiments.

The resulting verification times are given in Table~\ref{tab:single}.
We depict the different benchmark instances with the number of states of the  MA (Column \emph{\#states}) and the considered objective (as discussed in App.~\ref{App:evaluation-details:ma}).
Besides the run-times of \imca, we depict the run-times of our implementation (effectively performing multi-objective model checking with only one objective) in Column \storm (multi).
Column \storm (single) shows the run-times obtained when \storm is invoked with standard (single-objective)  model checking methods.
 
\begin{table}[t]
	\centering
	\caption{Results for our implementation (\storm) and \imca for single-objective MAs. All run-times are in seconds.}
	\label{tab:single}
	\begin{tabular}{l|crc|r|r|r}
	                                                         & \multicolumn{3}{c|}{benchmark}                                  &      \multicolumn{1}{c|}{\imca}            & \multicolumn{1}{c|}{\storm (multi)}& \multicolumn{1}{c}{\storm (single)}   \\
	                                                         &  instance & \multicolumn{1}{c}{\#states} &   $\obj$             &      \multicolumn{1}{c|}{verif. time}      &   \multicolumn{1}{c|}{verif. time}  & \multicolumn{1}{c}{verif. time}       \\ \hline
\multirow{4}{*}{\rotatebox[origin=c]{90}{\emph{jobs}}}       & 10\_2      &       12\,554               & $\mathbb{E}_1$       &          0.009                             &                0.047                &    0.021                              \\
                                                             & 10\_2     &        12\,554               & $\mathbb{P}^{\le}_2$ &          1.054                             &                2.977                &    1.702                              \\
                                                             & 12\_3     &        116\,814              & $\mathbb{E}_1$       &          0.136                             &                0.556                &    0.279                              \\
                                                             & 12\_3     &        116\,814              & $\mathbb{P}^{\le}_2$ &         19.938                             &               56.242                &    31.682                             \\ \hline
\multirow{4}{*}{\rotatebox[origin=c]{90}{\emph{polling}}}    & 3\_3      &        9\,858                & $\mathbb{E}_1$       &          6.254                             &                0.102                &    0.095                              \\
                                                             & 3\_3      &        9\,858                & $\mathbb{P}^{\le}_1$ &         21.948                             &               54.350                &    14.163                             \\
                                                             & 4\_4      &        827\,735              & $\mathbb{E}_1$       &     3\,630.283                             &               52.162                &    47.746                             \\
                                                             & 4\_4      &        827\,735              & $\mathbb{P}^{\le}_1$ &     3\,424.730                             &           8\,615.390                &    1\,597.095                         \\ \hline
\multirow{4}{*}{\rotatebox[origin=c]{90}{\emph{stream}}}     & 30        &        1\,426                & $\mathbb{E}_1$       &          0.005                             &                0.009                &    0.004                              \\
                                                             & 30        &        1\,426                & $\mathbb{P}^{\le}_1$ &          0.481                             &                1.578                &    0.509                              \\
                                                             & 250       &        94\,376               & $\mathbb{E}_1$       &          2.972                             &                1.462                &    1.261                              \\
                                                             & 250       &        94\,376               & $\mathbb{P}^{\le}_1$ &         36.663                             &              111.450                &    33.527                             \\ \hline
\multirow{2}{*}{\rotatebox[origin=c]{90}{\emph{mutex}}}      & 2         &        13\,476               & $\mathbb{P}^{\le}_1$ &          1.785                             &                1.217                &    0.4                                \\
                                                             & 2         &        13\,476               & $\mathbb{P}^{\le}_4$ &          6.922                             &                4.118                &    1.008                              \\
\end{tabular}
\end{table}
  \fi
\end{document}